\documentclass{article}

\newif\ifshort
\shortfalse

\usepackage{graphicx} 
\usepackage{amsmath}
\usepackage{amssymb}
\usepackage{amsfonts}
\usepackage{amsthm}
\usepackage{yfonts}
\usepackage[portrait, left=1in,right=1in,top=1in,bottom=1in,letterpaper]{geometry}
\usepackage[noadjust]{cite}
\usepackage{comment,enumerate,hyperref}
\usepackage[noend, boxed]{algorithm2e}
\usepackage{makecell}
\usepackage{thm-restate}

\theoremstyle{plain}
\newtheorem{theorem}{Theorem}
\newtheorem{corollary}[theorem]{Corollary}
\newtheorem{lemma}[theorem]{Lemma}

\theoremstyle{definition}
\newtheorem{definition}[theorem]{Definition}

\newtheorem{question}[theorem]{Question}
\newtheorem*{theorem*}{Theorem}

\newcommand{\dist}{\texttt{dist}}
\newcommand{\eps}{\varepsilon}
\newcommand{\mst}{\texttt{mst}}
\newcommand{\tee}{\mathcal{T}}

\usepackage[colorinlistoftodos,textsize=small,color=red!25!white,obeyFinal]{todonotes}

\title{Light Edge Fault Tolerant Graph Spanners}
\ifshort 
\else
\author{Greg Bodwin\thanks{University of Michigan. 
 Supported by NSF:AF 2153680.} \and Michael Dinitz\thanks{Johns Hopkins University.  Supported in part by NSF award 2228995.} \and Ama Koranteng\footnotemark[2] \and Lily Wang\footnotemark[1]}
\fi
\date{}

\begin{document}
\allowdisplaybreaks
\ifshort \else
\begin{titlepage}
\fi
\maketitle

\begin{abstract}
    Traditionally, the study of graph spanners has focused on the tradeoff between stretch and two different notions of size: the sparsity (number of edges) and the lightness (total weight, normalized by the weight of an MST).  There has recently been significant interest in \emph{fault tolerant} spanners, which are spanners that still maintain their stretch guarantees after some nodes or edges fail.  This work has culminated in an almost complete understanding of the three-way tradeoff between stretch, sparsity, and number of faults tolerated.  However, despite some progress in specific metric spaces (e.g., [Le-Solomon-Than FOCS '23]) there have been no results to date on the tradeoff in general graphs between stretch, \emph{lightness}, and number of faults tolerated.

    We initiate the study of light edge fault tolerant (EFT) graph spanners, obtaining the first such results:
    \begin{itemize}
    \item First, we observe that lightness can be unbounded in the fault-tolerant setting if we use the traditional definition (normalizing by the MST), even when tolerating just a single edge fault.
    
    \item We then argue that a natural definition of lightness in the fault-tolerant setting is to instead normalize by a min-weight fault tolerant connectivity preserver; essentially, a fault-tolerant version of the MST.
    However, even with this, we show a new lower bound establishing that it is still not generally possible to construct $f$-EFT spanners whose weight compares reasonably to the weight of a min-weight $f$-EFT connectivity preserver.
    
    \item In light of this lower bound, it is natural to then consider \emph{bicriteria} notions of lightness, where we compare the weight of an $f$-EFT spanner to a min-weight $(f' > f)$-EFT connectivity preserver.
    The most interesting question is to determine the minimum value of $f'$ that allows for reasonable lightness upper bounds.
    Our main result is a precise answer to this question: $f' = 2f$.
    More formally, we show that the bicriteria lightness can be untenably large (roughly $n/k$, for a $k$-spanner) if one normalizes by the min-weight $(2f-1)$-EFT connectivity preserver, but that it is bounded by just $O(f^{1/2})$ times the corresponding bound on non-fault tolerant lightness (roughly $n^{1/k}$, for a $(1+\eps)(2k-1)$-spanner) if one normalizes by the min-weight $2f$-EFT connectivity preserver instead.
    \end{itemize}

    Additional results include lower bounds on the $2f$-bicriteria lightness, improved $f$-dependence for $(2+\eta)f$-bicriteria lightness (for arbitrary constant $\eta > 0$), and a way to trade a slightly worse $f$-dependence for the ability to construct these spanners in polynomial time.
\end{abstract}
\ifshort \else
\thispagestyle{empty}
\end{titlepage}
\fi

\section{Introduction}

We study \emph{graph spanners}, a basic kind of graph sparsifier that preserves distances up to a small stretch factor:

\begin{definition} [Spanners]
Let $G = (V, E, w)$ be a weighted graph and let $k \ge 1$.
A \emph{$k$-spanner} of $G$ is an edge-subgraph $H = (V, E' \subseteq E, w)$ for which $\dist_H(u,v) \leq k \cdot \dist_G(u,v)$ for all $u,v \in V$.
\end{definition}

The value $k$ is called the \emph{stretch} of the spanner.  
Spanners were introduced by Peleg and Ullman~\cite{PU89sicomp} and Peleg and Sch\"{a}ffer~\cite{PS89} in the context of distributed computing.  Since then, they have found numerous applications, ranging from traditional distributed systems and networking settings~\cite{PU89jacm,awerbuch1990network,awerbuch1991cient,peleg2000distributed}, to efficient data structures like distance oracles~ \cite{TZ05}, to preconditioning of linear systems~\cite{elkin2008lower}, and many others.  

A  large fraction of all work on graph spanners focuses on the tradeoff between the stretch $k$ and the ``size'' of the spanner.
One way to measure size is by sparsity (total number of edges).  The tradeoff between stretch and sparsity was essentially resolved in a classic theorem by Alth\"ofer et al.~\cite{ADDJS93}:
\begin{theorem} [\cite{ADDJS93}] \label{thm:introsparsity}
For every positive integer $k \geq 1$, every $n$-node graph $G$ admits a $(2k-1)$-spanner $H$ with $|E(H)| \leq O(n^{1+1/k})$, and this bound is tight (assuming the Erd\H{o}s girth conjecture~\cite{girth}).
\end{theorem}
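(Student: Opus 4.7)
The plan is to prove the upper bound via the classical greedy spanner construction and the lower bound by a direct appeal to the Erd\H{o}s girth conjecture. First I would define a greedy algorithm that processes the edges of $G$ in non-decreasing order of weight and includes each edge $e = (u,v)$ in $H$ if and only if the current spanner distance $\dist_H(u,v)$ strictly exceeds $(2k-1) \cdot w(e)$. Two properties then need to be verified for the upper bound: the stretch bound and the sparsity bound.

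The stretch bound is immediate from the inclusion rule, since for every edge $e = (u,v) \in E(G)$, either $e \in H$ (giving $\dist_H(u,v) \leq w(e)$) or $e$ was rejected (giving $\dist_H(u,v) \leq (2k-1) w(e)$). Concatenating this inequality along any shortest $u$-$v$ path in $G$ yields $\dist_H(u,v) \leq (2k-1) \dist_G(u,v)$ for all $u,v \in V$.

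The key structural step, and where most of the work lies, is showing that $H$ has girth strictly greater than $2k$. I would argue by contradiction: if $H$ contains a cycle $C$ of length at most $2k$, let $e = (u,v)$ be the edge of $C$ considered last by the greedy algorithm. Since every edge of $C$ lies in $H$, the remaining edges of $C$ were processed before $e$ and were already in $H$ at the moment $e$ was considered; they form a $u$-$v$ path of at most $2k-1$ edges, each of weight at most $w(e)$. Hence $\dist_H(u,v) \leq (2k-1) w(e)$ at that moment, contradicting the rule that caused $e$ to be included. The sparsity bound $|E(H)| \leq O(n^{1+1/k})$ then follows by applying the Moore bound to a graph of girth at least $2k+1$.

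For the matching lower bound I would appeal directly to the Erd\H{o}s girth conjecture, which promises, for every $k$, a graph $G_0$ on $n$ nodes with girth greater than $2k$ and $\Omega(n^{1+1/k})$ edges. Assigning unit weights and observing that removing any edge $e = (u,v)$ from $G_0$ increases $\dist_{G_0}(u,v)$ from $1$ to at least $2k > 2k-1$, any $(2k-1)$-spanner of $G_0$ must retain every edge, giving the claimed bound. The main obstacle is the girth-to-sparsity implication underlying the upper bound, which passes through the Moore bound; by contrast, the stretch argument is routine and the lower bound is a black-box use of the girth conjecture.
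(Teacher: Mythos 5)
Your proposal is correct and is essentially the classical argument of Alth\"ofer et al.~\cite{ADDJS93}, which the paper cites rather than reproves: greedy insertion in nondecreasing weight order, the observation that the output has girth greater than $2k$, the Moore bound for sparsity, and a girth-conjecture graph (whose edges are all forced into any $(2k-1)$-spanner) for tightness. No gaps; this matches the intended proof.
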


Besides having a spanner with few edges, in some applications one wants a spanner with small total edge weight.
However, nothing along the lines of Theorem \ref{thm:introsparsity} will be possible: that is, not all graphs admit $(2k-1)$-spanners of total weight $w(H) \le O(n^{1+1/k})$, or any other function of $n$.
This is because we can always scale the edge weights of $G$ as high as we like, and the edge weights of the spanner must scale accordingly.
So, in order to study existential results for low-weight spanners, we need to tweak the definition: the standard move is to study \emph{lightness}, which normalizes spanner weight by the weight of a minimum spanning tree ($\mst$).

\begin{definition} [Lightness]
Given a subgraph $H$ of a graph $G$, we define the lightness of $H$ (with respect to $G$) to be the quantity
\begin{align*}
    \ell(H \mid G) = \frac{w(H)}{w(\mst(G))}.
\end{align*}
We will also write $\ell(H) := \ell(H \mid H)$.
\end{definition}

This fixes the scale-invariance issue, and it is the dominant notion of ``weighted size'' in the study of spanners. 
It has been the subject of intensive study~\cite{ADDJS93, CDNS92, ENS14, CW16, LS23, BF25, Bodwin25}, and reasonably tight bounds are known:
\begin{theorem} [\cite{LS23, Bodwin25}] \label{thm:introlightness}
For every positive integer $k \geq 1$ and every $\eps > 0$, every $n$-node graph $G$ admits a $(1+\eps)(2k-1)$-spanner $H$ of lightness $\ell(H \mid G) \leq O(\eps^{-1} n^{1/k})$.
This dependence on $n$ is tight assuming the Erd\H{o}s girth conjecture~\cite{girth} (but the dependence on $\eps$ might be improvable).
\end{theorem}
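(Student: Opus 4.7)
The plan is to construct $H$ via a weight-scale greedy algorithm. Sort the edges of $G$ in nondecreasing order of weight, initialize $H \leftarrow \mst(G)$, and then process each remaining edge $(u,v)$ in order, adding it to $H$ iff $\dist_H(u,v) > (1+\eps)(2k-1) \cdot w(uv)$. The stretch guarantee is then immediate: every omitted edge is $(1+\eps)(2k-1)$-spanned in the current (and hence final) $H$, and every surviving edge is trivially spanned by itself.

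The bulk of the work is the lightness bound. I would partition the non-MST edges of $H$ into geometric weight buckets $W_i = [(1+\eps)^i, (1+\eps)^{i+1}) \cdot w_{\min}$, and let $H_i \subseteq H$ denote the sub-spanner of edges drawn from $W_i$. The first key step is a girth argument for each $H_i$: if $(u,v)$ were the last edge added on some cycle $C$ of length $\ell$ in $H_i$, then at the time of its addition the remaining $\ell-1$ edges of $C$ already formed a $u$-$v$ path in $H$ of weight at most $(\ell-1)(1+\eps)^{i+1} w_{\min}$, and the greedy rule would have rejected $(u,v)$ unless $\ell - 1 \gtrsim 2k-1$. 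Hence $H_i$ has unweighted girth roughly $2k+1$, and the Moore bound yields $|E(H_i)| \le O(n^{1+1/k})$ and $w(H_i) \le O(n^{1+1/k}) \cdot (1+\eps)^{i+1} w_{\min}$.

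The remaining challenge --- which is the main technical obstacle, and is where~\cite{LS23, Bodwin25} do their heaviest lifting --- is the amortization against the MST: naively summing $w(H_i)$ across all buckets vastly exceeds $w(\mst(G))$. The fix is to charge each edge $(u,v) \in H_i$ against an MST sub-path between cluster representatives at scale $\Theta(w(uv))$, whose weight is $\Omega(w(uv))$. The $(1+\eps)$-slack in the stretch guarantee is what ensures that each MST edge is charged only $O(\eps^{-1})$ times total across all buckets, giving $w(H) \le \bigl(1 + O(\eps^{-1} n^{1/k})\bigr) \cdot w(\mst(G))$ and hence $\ell(H \mid G) \le O(\eps^{-1} n^{1/k})$.

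For the tightness direction, assuming the Erd\H{o}s girth conjecture, I would take an $n$-vertex graph $G^*$ of girth $> 2k$ with $\Omega(n^{1+1/k})$ edges and assign every edge weight $1$. Then any $(2k-1)$-spanner of $G^*$ must retain every edge, since removing an edge $e$ forces its endpoints to be separated by a $G^*$-path of length at least the girth minus one, which exceeds $2k-1$. The MST has weight $n-1$, so the lightness of any such spanner is $\Omega(n^{1/k})$, matching the upper bound up to the $\eps^{-1}$ factor.
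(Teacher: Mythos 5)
This statement is quoted background from~\cite{LS23, Bodwin25}; the paper you are working from does not prove it, so your attempt has to stand on its own, and as written it does not. The skeleton (greedy with stretch $(1+\eps)(2k-1)$ seeded by the MST, geometric weight buckets, a girth/Moore bound per bucket) is the right family of ideas, and your lower-bound direction is essentially fine (the standard unit-weight girth-conjecture graph works, with the small caveat that you need girth at least $2k+2$ and $\eps$ small enough that the detour length $2k+1$ exceeds $(1+\eps)(2k-1)$, which is the regime the theorem intends). But the heart of the upper bound --- the amortization against the MST --- is exactly where the cited papers do their work, and your description of it is not a proof and is not even consistent as stated. If each non-MST spanner edge $(u,v)$ could really be charged to an MST sub-path of weight $\Omega(w(uv))$ with every MST edge charged only $O(\eps^{-1})$ times in total, then summing the charges would give $w(H \setminus \mst(G)) \le O(\eps^{-1})\, w(\mst(G))$, i.e.\ lightness $O(\eps^{-1})$ with no dependence on $n$ at all --- contradicting the $\Omega(n^{1/k})$ lower bound you prove two paragraphs later. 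So no such charging scheme exists, and the factor $n^{1/k}$ in your final display $w(H) \le (1+O(\eps^{-1}n^{1/k}))\,w(\mst(G))$ appears from nowhere.

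The actual argument has a different shape: one builds a hierarchical clustering, bounds the number of bucket-$i$ spanner edges by a Moore-type bound on the \emph{cluster graph} at scale $i$ (roughly $N_i^{1+1/k}$ where $N_i$ is the number of scale-$i$ clusters, not $n^{1+1/k}$ --- your per-bucket bound in terms of $n$ cannot be summed over the unboundedly many buckets), and then charges the quantity $\sum_i N_i (1+\eps)^i$ against $w(\mst(G))$; the $n^{1/k}$ emerges from writing $N_i^{1+1/k} \le n^{1/k} N_i$, and the entire difficulty is carrying this out so that the cluster diameters are simultaneously small enough for the girth argument and large enough for the charging, losing only a single factor of $\eps^{-1}$. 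Doing this naively loses extra factors of $\eps^{-1}$ (or $k$, or $\log n$), which is why the clean $O(\eps^{-1} n^{1/k})$ bound was only obtained in~\cite{LS23} and~\cite{Bodwin25} after a sequence of weaker analyses of the very same greedy algorithm. In short: the construction and the sparsity-per-scale step are fine, but the charging step as you state it would prove a false statement, and the correct replacement is the genuinely hard content of the cited works, which your sketch does not supply.
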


\subsection{Fault Tolerance}

Another important aspect of spanners goes back to their origins in distributed systems and computer networking.  An issue that affects real-life systems that is not captured by the standard notion of spanners is the possibility of \emph{failure}.
If some edges (e.g., communication links) or vertices (e.g., computer processors) fail, what remains of the spanner might not still approximate the distances in what remains of the original graph.
This motivates the notion of \emph{fault tolerant} spanners, originally studied in the geometric setting by Levcopoulos, Narasimhan, and Smid \cite{LNS98} and then (more generally) the setting of doubling metrics; see, e.g.,~\cite{LNS98,Lukovszki99,CZ04,NS07,BHO20,Solomon14,BDMS13,CLN15,CLNS15,LST23} and references within.
In particular, the tradeoffs between the stretch and both the sparsity~\cite{LNS98,Lukovszki99,CZ04,CLN15,CLNS15,LST23} and the lightness~\cite{CZ04,Solomon14,CLNS15,LST23} for geometric spanners, including in the fault tolerant setting, have been the subject of significant study and interest, and optimal bounds are now known~\cite{CZ04,LST23}.  

The study of fault-tolerant spanners in general graphs was initiated in a seminal paper by Chechik, Langberg, Peleg, and Roditty~\cite{CLPR10}.  They introduced the following definition.

\begin{definition} [Edge Fault Tolerant Spanners \cite{CLPR10}] \label{def:FT}
A subgraph $H$ is an $f$-edge fault tolerant ($f$-EFT) $k$-spanner of $G = (V, E)$ if
$$\dist_{H \setminus F}(u,v) \leq k \cdot \dist_{G \setminus F}(u,v)$$
for all $u,v \in V$ and $F \subseteq E$ with $|F| \leq f$.
\end{definition}

Note that faults in general graphs are significantly more complex than in geometric settings, since in a general graph the failure of an edge or node can affect many distances (any pair that uses the failed object in the shortest path), while in a geometric setting the distances between nodes are fixed in the underlying metric space regardless of failures.  Nevertheless, Chechik et al.~\cite{CLPR10} were able to obtain the following strong upper bound.  

\begin{theorem} [\cite{CLPR10}]
For all positive integers $n, k, f$, every $n$-node weighted graph has an $f$-EFT $(2k-1)$-spanner $H$ on $|E(H)| \le O(f \cdot n^{1+1/k})$ edges. 
\end{theorem}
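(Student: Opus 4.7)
The plan is to use an iterative construction that layers $f+1$ edge-disjoint $(2k-1)$-spanners, generalizing the classical Alth\"ofer et al.\ approach. Specifically, I would set $H_0$ to be a $(2k-1)$-spanner of $G$ with $O(n^{1+1/k})$ edges via Theorem~\ref{thm:introsparsity}, and inductively for each $i = 1, \ldots, f$ define $H_i$ to be a $(2k-1)$-spanner of the residual graph $G_i := G \setminus \bigcup_{j < i} H_j$ with $O(n^{1+1/k})$ edges (again via Theorem~\ref{thm:introsparsity}, applied to $G_i$). The output is $H := \bigcup_{i=0}^{f} H_i$, which clearly satisfies $|E(H)| \le O(f \cdot n^{1+1/k})$. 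The key structural feature is that the $H_i$'s are pairwise edge-disjoint by construction, yielding $f+1$ edge-disjoint ``backup spanners'' all living inside $H$.

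To verify the $f$-EFT $(2k-1)$-spanner property, fix any fault set $F \subseteq E(G)$ with $|F| \le f$ and any pair $u,v \in V$, and let $P$ be a shortest $u$-$v$ path in $G \setminus F$. It suffices to show that for each edge $e = (x,y)$ on $P$ there is an $x$-$y$ walk in $H \setminus F$ of length at most $(2k-1) \cdot w(e)$, since concatenating these walks over the edges of $P$ yields a $u$-$v$ walk in $H \setminus F$ of total length at most $(2k-1) \cdot \dist_{G \setminus F}(u,v)$. Fix such an edge $e$. If $e \in H$, then since $e \in P \subseteq G \setminus F$ we have $e \notin F$, so $e$ itself lives in $H \setminus F$ and forms a backup of length $w(e) \le (2k-1)w(e)$. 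Otherwise $e$ was never chosen by any iteration, so $e \in G_i$ for every $i \in \{0, 1, \ldots, f\}$, and the spanner property of $H_i$ on $G_i$ supplies an $x$-$y$ path $Q_i \subseteq H_i$ of length at most $(2k-1) \cdot w(e)$. Because the $H_i$'s are edge-disjoint, the paths $Q_0, Q_1, \ldots, Q_f$ use pairwise disjoint subsets of $E(H)$; since $|F| \le f < f+1$, by pigeonhole at least one $Q_i$ is disjoint from $F$ and hence lies entirely in $H \setminus F$, providing the required backup.

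The main obstacle, handled by the residual construction, is ensuring that the $f+1$ backup paths for a given edge $e$ reside in genuinely edge-disjoint pieces of $H$, so that no single fault set of size $f$ can damage all of them simultaneously. Peeling off $H_0 \cup \cdots \cup H_{i-1}$ from $G$ before building $H_i$ enforces this disjointness, and the ``$e \in H$'' case cleanly absorbs the scenario where $e$ itself was chosen into some layer and is therefore unavailable to later iterations. Beyond this bookkeeping, the argument is a black-box invocation of Theorem~\ref{thm:introsparsity} at each of the $f+1$ iterations, together with the edge-count summation and the pigeonhole step described above.
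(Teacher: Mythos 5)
Your proposal is correct: the layered construction (peel off a $(2k-1)$-spanner, recurse on the residual graph, repeat $f+1$ times) together with the pigeonhole argument over the $f+1$ edge-disjoint backup paths per surviving edge is a valid proof of the stated bound. The paper itself only cites this theorem from \cite{CLPR10} without reproving it, and your argument is essentially the original construction and analysis from that work, so there is nothing to bridge.
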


They also showed a bound of $|E(H)| \le \exp(f) \cdot n^{1+1/k}$ for an analogous notion of $f$-vertex fault tolerant (VFT) spanners (see Section \ref{sec:vft}).
Crucially, the dependence on $n$ in these theorems is identical to the dependence on $n$ in the non-faulty setting ($n^{1+1/k}$, from Theorem \ref{thm:introsparsity}).  Hence further work on sparsity of fault-tolerant spanners has focused on improving and lower bounding the dependence on the fault tolerance parameter $f$ in the spanner size.
This has led to a long line of work~\cite{CLPR10, DK11podc,BDPV18,BP19,DR20,BDR21,BDR22,Parter22, BHP24, PST24}, which gradually improved on the pioneering $f$-dependencies of \cite{CLPR10} in both settings.
This has culminated in optimal sparsity bounds for VFT spanners~\cite{BP19,BDR21}, near-optimal sparsity bounds for EFT spanners~\cite{BDR22}, and efficient construction algorithms \cite{DR20, BDR21, Parter22}.

But what about the \emph{lightness} of fault-tolerant spanners in the general graph setting?
An analogous upper bound would be that every $n$-node graph admits an $f$-EFT $(1+\epsilon)(2k-1)$-spanner $H$ of lightness roughly\footnote{Throughout the paper, we use $O_x$ notation to hide factors that depend on the variable(s) $x$, i.e., they are constant when $x$ is a constant.}
\begin{align} \label{eq:desired}
    \ell(H \mid G) \le O_{\eps, f}\left(n^{1/k}\right),
\end{align}
that is, matching the optimal dependence on $n$ from Theorem \ref{thm:introlightness}.
But to date, no such result has been achieved, despite the intensive study of both (non-faulty) spanner lightness and of fault-tolerant spanner sparsity. 
It is thus the next natural question for the area; indeed, the existence of light fault-tolerant graph spanners was explicitly raised as an open problem in a recent survey talk at the Simons Institute~\cite{HungLeTalk}.

In this paper we initiate the study of light fault-tolerant spanners.  
Our contribution is threefold.
First, we explain the lack of previous results by showing that the first two natural attempts to \emph{define} lightness in the fault-tolerant setting both encounter strong lower bounds, showing that no upper bound of this form (i.e., the form of~\eqref{eq:desired}) are possible.
Second, we propose a bicriteria notion of lightness (which we call \emph{competitive lightness}), and we pin down the exact threshold of bicriteria approximation at which analogous upper bounds for fault tolerant spanners become available.  Third, we provide bounds on the required and achievable dependence on $f$ in competitive lightness.

\subsection{Main Result: Defining Fault-Tolerant Lightness}

\subsubsection{(Failed) Attempt 1: Normalize by an MST}
The obvious first attempt to study light fault-tolerant spanners is to keep the same definition as before: lightness is the weight of the spanner divided by the weight of an MST.
However, it turns out that this definition suffers from a scaling issue, similar to the one discussed in the original definition of lightness: this measure of lightness can be unbounded.

More specifically, consider a graph $G$ \ifshort \else (see Figure~\ref{fig:badtriangle}) \fi on three nodes $u,v,w$, with $w(u,v) = w(u,w) = 1$ and $w(v,w) = W$.  Then $\mst(G)$ has two edges $\{u,v\}, \{u,w\}$ and total weight $2$, while any $1$-EFT spanner $H$ with finite stretch must include all three edges, and so will have weight $W+2$.
So we have
$$\frac{w(H)}{\mst(G)} \ge \Omega(W),$$
where $W$ may be selected as large as we like, demonstrating that this notion of lightness is unbounded.

\ifshort
\else
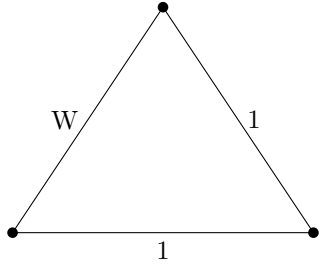
\begin{figure}[t]
\begin{center}
\begin{tikzpicture}
    \draw (0,0) -- (4,0) node[midway, below] {1};
    \draw (4,0) -- (2,3) node[midway, right] {1};
    \draw (2,3) -- (0,0) node[midway, left] {W};

    \fill (0,0) circle (2pt);
    \fill (4,0) circle (2pt);
    \fill (2,3) circle (2pt);
\end{tikzpicture}
\end{center}
\caption{\label{fig:badtriangle} This graph $G$ shows that the quantity $\frac{w(H)}{w(\mst(G))}$ can be unbounded for $1$-EFT spanners.
Here $w(G) = W+2$ for any $W$ we like, and $w(\mst(G))=2$, but we cannot remove any edges in a $1$-EFT spanner.  Thus a different definition of lightness is needed in the fault tolerant setting.}
\end{figure}
\fi

\subsubsection{(Failed) Attempt 2: Normalize by a Fault-Tolerant Connectivity Preserver}

Normalizing by the MST is not clearly the natural notion of fault tolerant lightness in the first place.
In the non-faulty setting, the MST is the cheapest subgraph that preserves \emph{connectivity}, and the lowest-weight spanner is the cheapest subgraph that preserves \emph{approximate distances}.
Thus non-faulty lightness can be interpreted as the ``relative price of approximating distances,'' as compared to just connectivity.

So arguably a more natural definition of fault tolerant lightness would measure the price of \emph{fault-tolerant distance approximation} vs.\ \emph{fault-tolerant connectivity}.
That is, instead of normalizing by an MST, our next attempt will be to normalize by 
the minimum weight \emph{$f$-edge fault connectivity preserver}~\cite{NI92,NI08, DKK22,DKKN23}\footnote{An $f$-EFT connectivity preserver is also sometimes called an $f+1$-connectivity certificate~\cite{NI92,NI08}.  This other terminology is more commonly used when the focus is on sparsity or computation time, rather than weights.}.
\begin{definition} [EFT Connectivity Preservers]
A subgraph $H \subseteq G$ is an \emph{$f$-edge fault tolerant (EFT) connectivity preserver}
if for any fault set $F \subseteq E$ with $|F| \leq f$, the connected components of $H \setminus F$ are identical to the connected components of $G \setminus F$.
Or equivalently, for all pairs of nodes $u, v \in V$, they are connected in $H \setminus F$ if and only if they are connected in $G \setminus F$.  
\end{definition}

This leads to a natural definition of lightness:

\begin{definition} [Competitive Lightness] \label{def:complight}
    Given a graph $G$, let $T_f(G)$ denote the set of all $f$-EFT connectivity preservers of $G$.  Then we define the $f$-competitive lightness of a subgraph $H$ of $G$ to be 
    \begin{align*}
        \ell_f(H \mid G) &:= \frac{w(H)}{\min_{Q \in {T_f(G)}} w(Q)}.
    \end{align*}
    We will also write $\ell_f(H) := \ell_f(H \mid H)$.
\end{definition}

We refer to $f$ in this definition as the ``lightness competition parameter'', or just the ``competition parameter.''
We note that $\ell_0(H \mid G) = \ell(H \mid G)$,\footnote{In particular, if $\ell = 0$ then $T_f(G)$ consists of all connected subgraphs, and so $\min_{Q \in T_f(G)} w(Q) = w(\mst(G))$.} and so we recover classical lightness as a special case.  
This definition successfully escapes the previous lower bound
by setting $f > 0$, and in particular it is natural to hypothesize that all $n$-node graphs will admit an $f$-EFT $(1+\eps)(2k-1)$-spanner $H$ with competitive lightness $\ell_f(H \mid G) \le O_{\eps, f}(n^{1/k})$.
Unfortunately, we refute this possibility with a new lower bound: the dependence on $n$ for this notion of lightness must be essentially \emph{linear}.

\begin{restatable}{theorem}{bicriterialower} \label{thm:lower-bound-main}
    For any $f, k \geq 1$, there is a family of $n$-node weighted graphs $G$ for which every $f$-EFT $k$-spanner $H$ has competitive lightness
    $$\ell_{f}(H \mid G) \geq \Omega\left( \frac{n}{f^2 k}\right).$$
\end{restatable}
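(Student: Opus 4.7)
The plan is to exhibit, for any $f, k \ge 1$, a family of weighted $n$-vertex graphs $G$ admitting a cheap $f$-EFT connectivity preserver but no cheap $f$-EFT $k$-spanner. Concretely, I will take $G = G_0 \cup M$, where $G_0$ is a carefully chosen $(f{+}1)$-edge-connected ``backbone'' on $n$ vertices with unit-weight edges and total weight $O(fn)$, and $M$ is a collection of $\Omega(n)$ ``heavy shortcut'' edges each of weight $W = \Theta(n/(fk))$, placed so that the two endpoints of each edge in $M$ are at $G_0$-distance $\Theta(n/f)$ from each other. A natural candidate for $G_0$ is a circulant-type structure (e.g.\ a power of a cycle $C_N^{\lceil (f+1)/2 \rceil}$ on $N = \Theta(n)$ vertices), together with a matching-like set of ``antipodal'' heavy chords for $M$, but the exact form of $G_0$ and $M$ will need to be tuned carefully in what follows.

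The denominator of the competitive lightness is then easy to handle: because $G_0$ is $(f{+}1)$-edge-connected, $G_0$ is itself an $f$-EFT connectivity preserver of $G$, so $\min_{Q \in T_f(G)} w(Q) \le w(G_0) = O(fn)$. The work lies in the numerator. I will argue that every $f$-EFT $k$-spanner $H$ of $G$ must contain essentially all edges of $M$. For each $e = (u,v) \in M$, I will construct a ``witness'' fault set $F_e \subseteq E(G)$ with $|F_e| \le f$ and $e \notin F_e$, such that (i) $\dist_{G \setminus F_e}(u,v) \le W$ (achieved via $e$), and (ii) every $u$-$v$ path in $H \setminus F_e$ that avoids $e$ has length strictly greater than $kW$. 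Together these force $e \in H$, since otherwise the $k$-spanner condition fails under fault $F_e$. Summing over all $e \in M$ yields $w(H) \ge |M| \cdot W = \Omega(n^2/(fk))$, and dividing by the $O(fn)$ preserver weight gives $\ell_f(H \mid G) \ge \Omega(n/(f^2 k))$.

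The main technical obstacle is property (ii) in the presence of \emph{other} heavy edges of $M$ that $H$ may retain. A spanner might reroute a short $u$-$v$ path through one or more surviving edges $e' \in M \cap H$, each saving $\Theta(W)$ compared to pure $G_0$-routing, so the fault set $F_e$ must simultaneously invalidate all such substitute detours. Meeting this demand requires a careful geometric placement of the matched endpoints in $G_0$ and a design of $F_e$ tailored to $e$, so that after removing $F_e$ from $G_0$, the endpoints of every other edge in $M$ sit at $G_0 \setminus F_e$-distance strictly greater than $kW$ from both $u$ and $v$. This balance is especially delicate for larger $k$, where the $k$-spanner condition leaves more slack for substitution; calibrating $W$, the spacing of $M$, and the structure of $G_0$ so that all witness faults succeed \emph{simultaneously}, while preserving $|M| = \Omega(n)$ and $w(G_0) = O(fn)$, is the technical heart of the proof.
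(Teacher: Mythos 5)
Your overall shape is right (cheap $f$-EFT preserver in the denominator, $\Omega(n)$ forced heavy edges of weight $\Theta(n/(fk))$ in the numerator), but the forcing step fails for the geometry you propose, and the fix you sketch is not achievable. If $G_0$ is $(f{+}1)$-edge-connected with unit weights and $|M|=\Omega(n)$ (so a constant fraction of vertices are heavy-edge endpoints, e.g.\ an antipodal matching on a circulant), then for any heavy edge $e=(u,v)\notin H$ and any fault set $F_e$ with $|F_e|\le f$, both $u$ and $v$ keep at least one unit edge into the essentially intact backbone (each has degree $\ge f{+}1$), and within $O(f)$ backbone distance of each of them there is an endpoint of some \emph{other} heavy edge $e'=(u',v')$ whose partner is likewise close to the other side. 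So $H$ can route $u\leadsto u'\to v'\leadsto v$ at cost $W+O(f)\le kW$ for every $k\ge 2$ once $n\gg f^2k$, and no choice of $f$ faults can push all other heavy endpoints to distance $>kW=\Theta(n/f)$ from $u$ and $v$: that would require cutting $u$ or $v$ off from the bulk, which $(f{+}1)$-edge-connectivity forbids. Concretely, in your construction the subgraph consisting of $G_0$ plus every $\Theta(f)$-th chord is already an $f$-EFT $k$-spanner omitting most of $M$, so the claim ``every spanner contains essentially all of $M$'' is false, and the lower bound does not follow. The root cause is your parameter regime: a single substitute heavy edge saves $\Theta(n/f)$ of backbone travel while costing only $W=\Theta(n/(fk))$, a factor-$k$ bargain, so substitution can never be blocked with $f$ faults.

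The paper's construction avoids this by inverting your geometry. Heavy-edge endpoints $v_i,v_{i+1}$ are at backbone distance only $2$, but every short connection between them passes through an $f$-edge ``cloud'' cut incident to $v_i$; faulting exactly those $f$ edges leaves only the long way around the cycle, of length $2m-2>k\cdot w(e)$. Crucially the heavy weight is set to roughly $(2m-2)/k\ge 2$, i.e.\ at least the cost of the two unit edges a heavy edge could shortcut, so other heavy edges are \emph{never} useful substitutes and no delicate placement or simultaneous-blocking argument is needed; each heavy edge is forced by its own local fault set. (The paper in fact shows the unit-weight part is a $(2f{-}1)$-EFT preserver, giving the stronger $\ell_{2f-1}$ bound, which implies the $\ell_f$ statement.) To repair your proof you would need to move into this regime: make heavy endpoints locally close but separated by a size-$f$ cut, and choose the heavy weight no smaller than the local unit detour it can replace, rather than trying to keep all heavy endpoints mutually far apart.
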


(See \ifshort Appendix~\ref{app:lowerbound} \else Section \ref{sec:lowerbound} \fi for the proof.)
So, in light of this lower bound, we need to again revisit our search for a definition of fault-tolerant lightness that admits reasonable existential upper bounds.

\subsubsection{(Successful) Attempt 3: Bicriteria Competitive Lightness}

The fix we propose is to use the paradigm of \emph{bicriteria} approximation.
In our context, this means that we will escape the lower bound in Theorem \ref{thm:lower-bound-main} by comparing the quality of an $f$-EFT spanner to an $f'$-EFT connectivity preserver, for some $f' > f$.  In other words, we will consider the $f'$-competitive lightness of the best $f$-EFT spanner.

How high do we need to push $f'$ in order to enable the study of existential bounds for light EFT spanners?
The main result of this paper is an exact answer to this question:

\begin{theorem} [Main Result] \label{thm:intromain}
    For all positive integers $f, k, n$ and all $\eps > 0$:
    \begin{itemize}
    \item \textbf{(Upper Bound)} Every $n$-node graph\footnote{This and all of our other upper bounds extend to multigraphs as well.} $G$ has an $f$-EFT $(1+\eps)(2k-1)$-spanner $H$ with $2f$-competitive lightness
    $$\ell_{\color{blue} 2f}(H \mid G) \le \texttt{poly}(f, \eps) \cdot {\color{red} n^{1/k}}.$$
    
    \item \textbf{(Lower Bound)} There are $n$-node graphs $G$ for which any $f$-EFT $(1+\eps)(2k-1)$-spanner $H$ has $(2f-1)$-competitive lightness
    $$\ell_{\color{blue} 2f-1}(H \mid G) \ge \texttt{poly}(f, k) \cdot {\color{red} n}.$$
    \end{itemize}
\end{theorem}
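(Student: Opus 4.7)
The theorem has two parts that require genuinely different techniques, so I would treat the upper and lower bounds separately.

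For the upper bound, the natural construction is a fault-tolerant greedy algorithm: process edges of $G$ in nondecreasing weight order and add $(u,v)$ to $H$ unless $H$ already contains $f+1$ edge-disjoint $u$-to-$v$ paths, each of length at most $(1+\eps)(2k-1)\, w(u,v)$. That $H$ is an $f$-EFT $(1+\eps)(2k-1)$-spanner follows from a standard edge-concatenation along any post-fault shortest path, so all of the substance lies in bounding $w(H)$. To do this, I would fix a minimum-weight $2f$-EFT connectivity preserver $Q^\ast$ and adapt the amortized bucket/charging framework developed for non-faulty light spanners (Chechik--Wulff-Nilsen, Le--Solomon, Bodwin): group the edges of $H$ into exponentially spaced weight buckets, establish an FT girth-like condition within each bucket forcing sparsity roughly $n^{1+1/k}$, and pay for each heavy edge by a fractional charge against $Q^\ast$. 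The structural reason the competition parameter is pinned at exactly $2f$ is that $Q^\ast$ supplies $2f{+}1$ edge-disjoint paths between every pair of vertices in the same $G$-component: $f$ of these are ``spent'' to absorb the adversarial faults the greedy predicate must survive, and the remaining $f{+}1$ are used as fresh charging witnesses. This ``$f + (f{+}1)$'' decomposition is exactly the obstruction at competition parameter $2f{-}1$ and why bumping $f'$ up by one restores a polynomial lightness bound.

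I expect the main obstacle on the upper-bound side to be executing this charging with polynomial rather than exponential dependence on $f$. Classical credit/moat arguments track a single witness per heavy edge, but the FT generalization requires simultaneously maintaining $f{+}1$ edge-disjoint witnesses per edge across all weight buckets without double-counting. A promising route is to define potentials on ``bundles'' of $f{+}1$ edge-disjoint paths in $H$ and to run an exchange argument against $Q^\ast$'s $2f{+}1$-path system, but the combinatorics of simultaneous bucket assignments and adversarial fault patterns is where the quantitative difficulty lives.

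For the lower bound, my plan is to build $G$ as a $2f$-edge-connected ``thick cycle'' on $n$ nodes with unit-weight backbone edges, so that the minimum $(2f{-}1)$-EFT connectivity preserver can simply be the backbone itself, of weight $O(fn)$. I would then augment $G$ with carefully chosen heavy ``shortcut'' edges between distant pairs of backbone nodes, tuned so that any $f$-EFT $(1+\eps)(2k-1)$-spanner is compelled to include essentially all of them: for each such shortcut $(u_i, u_j)$, an adversary can direct its $f$ faults onto the $f$ cheapest backbone segments carrying short $u_i$-to-$u_j$ paths, leaving only the shortcut itself or an $\Omega(n/f)$-long cyclic detour, which violates the stretch $(1+\eps)(2k-1)$ guarantee. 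Aggregating weights over $\Omega(n)$ shortcut-demanding pairs yields spanner weight $\Omega(n^2 / \mathrm{poly}(f,k))$ against preserver weight $O(fn)$, matching the claimed ratio $\Omega(n/\mathrm{poly}(f,k))$. The delicate point is designing the thick cycle so its $2f$-edge-connectivity is tight at every segment, making the $(2f{-}1)$-vs-$2f$ transition sharp rather than relying on ad hoc gadget parities.
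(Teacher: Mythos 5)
Your lower-bound sketch is in the same spirit as the paper's construction (a cycle of hubs joined by ``clouds'' of $f$ parallel unit-weight two-edge paths, so that the unit backbone is a cheap $(2f-1)$-EFT connectivity preserver, plus heavy edges that any $f$-EFT spanner is forced to keep), and with the parameters tuned as in the paper it gives $\ell_{2f-1}(H\mid G)\ge \Omega(n/(f^2k))$. Two imprecisions in your version would need fixing: the $f$ faults must be \emph{concentrated on a single cut of size exactly $f$} (in the paper, the $f$ edges from one hub $v_i$ into its adjacent cloud); spreading them over ``the $f$ cheapest backbone segments'' removes only one of the $f$ parallel paths in each segment and disconnects nothing. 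Relatedly, putting the heavy shortcuts between \emph{distant} hub pairs creates a substitution problem (other shortcuts in $H$ can serve as detours), which the paper avoids by placing each heavy edge between adjacent hubs and checking that every alternative route, even one using other heavy edges, exceeds stretch $k$.

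The upper bound is where there is a genuine gap. Your proposal explicitly defers the core quantitative step (``the combinatorics of simultaneous bucket assignments and adversarial fault patterns is where the quantitative difficulty lives''), but that step is precisely the content of the theorem; the non-faulty charging/bucket frameworks of Chechik--Wulff-Nilsen, Le--Solomon and Bodwin do not adapt in any known way, which is why the paper instead proves a \emph{reduction} to the non-faulty weighted-girth bound $\lambda(n,\cdot)$. Concretely, the paper seeds the greedy spanner with the min-weight $2f$-EFT preserver $Q$ itself (your algorithm never adds $Q$, yet your analysis charges against it), records for each added edge the fault set that forced it (an $f$-capped blocking set covering all cycles of normalized weight $\le k+1$ containing a non-$Q$ edge), doubles the edges of $Q$ to make it Eulerian and applies the Chekuri--Shepherd Steiner-forest packing to obtain $2f+1$ forests in which every $Q$-edge appears at most twice and every pair $\{u,v\}$ with $(u,v)\in E(H)\setminus Q$ is connected; each non-$Q$ edge is then hosted by a forest containing none of its $\le f$ blockers, and a $1/f$-rate subsampling plus deletion of surviving blocked pairs turns each hosted subgraph into one of weighted girth $>k+1$, whence lightness $\lambda(n,k+1)$ per forest and $O(f^{1/2})\cdot\lambda(n,(1+\eps)2k)=\mathrm{poly}(f,\eps)\,n^{1/k}$ overall. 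None of this machinery appears in your plan, and one of its stated premises is false: a min-weight $2f$-EFT preserver $Q^\ast$ does \emph{not} supply $2f+1$ edge-disjoint paths between every pair in a $G$-component (it only matches the connectivity of $G$, which may be small); the usable fact is that endpoints of edges \emph{outside} $Q^\ast$ are $(2f+1)$-edge-connected, and exploiting it without assuming global $(2f+1)$-connectivity is exactly why the paper needs Steiner-\emph{forest} packing in Eulerian multigraphs rather than Nash-Williams tree packing. Your alternative greedy predicate ($f+1$ edge-disjoint short paths) does yield a correct $f$-EFT spanner, but you give no analysis route for its weight that survives these issues, so as it stands the upper-bound half is a plan rather than a proof.
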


Thus, the right answer is exactly $f'=2f$.
From a technical standpoint, we remark that there is some precedent for this factor of $2$ gap: it arises for a similar technical reason as in the Nash-Williams tree packing theorem~\cite{NashWilliams}, a classic structural result for fault-tolerant graph connectivity.

\subsection{The Bicriteria Price of Fault Tolerance}

Theorem \ref{thm:intromain} is the point of this paper: it initiates the study of $2f$-competitive spanner lightness, and it shows that this is essentially the strictest definition of fault-tolerant lightness that one can hope for.  But now that we have settled the right setting of the lightness competition parameter, we are in a similar situation to sparsity bounds after~\cite{CLPR10}: what is the right dependence on $f$ in the $2f$-competitive lightness?  We know from Theorem~\ref{thm:intromain} that $\texttt{poly}(f)$ is an upper bound, but can we be more precise? 

We begin to answer this question, proving upper and lower polynomial bounds.  Both our upper and lower bounds work by reducing to the setting of non-faulty light spanners.
In turn, the bounds for non-faulty light spanners can be understood via the ``weighted girth'' framework of~\cite{ENS14}.
We overview this formally in Section~\ref{sec:weighted-girth}, but informally, we define the weighted girth of $G$ to be the minimum over all cycles $C$ of the total weight of the cycle divided by the max weight edge of the cycle.
Let $\lambda(n,k)$ be the maximum (classical) lightness of any $n$-node graph with weighted girth greater than $k$.
Elkin, Neiman, and Solomon \cite{ENS14} proved that every $n$-node graph has a $k$-spanner of lightness at most $\lambda(n, k+1)$, and that this bound is tight.
We prove the following bounds on the $2f$-competitive lightness in terms of $\lambda$, which we can then instantiate with the known upper and lower bounds on this function:

\begin{theorem} \label{thm:introsmallcomp}
For all positive integers $f, k, n$ and all $\eps > 0$:
\begin{itemize}
\item \textbf{(Upper Bound)} Every $n$-node graph $G$ has an $f$-EFT $(1+\eps)(2k-1)$-spanner $H$ with competitive lightness
     \begin{align*}
     \ell_{\color{blue} 2f}(H \mid G) &\le f^{1/2} \cdot O\left(\lambda\left( n, (1+\eps)2k\right)\right)\\
                         &\le {\color{red} f^{1/2}} \cdot O_{\eps}\left(n^{1/k}\right) \tag{by $\lambda$ upper bounds from \cite{LS23, Bodwin25}}.
     \end{align*}

\item \textbf{(Lower Bound)} There are $n$-node graphs $G$ for which any $f$-EFT $(1+\eps)(2k-1)$-spanner $H$ has competitive lightness
     \begin{align*}
     \ell_{\color{blue} 2f}(H \mid G) &\ge \Omega\left(\lambda\left(\frac{n}{(2f)^{1/2}}, (1+\eps)2k\right)\right)\\
                         &\ge {\color{red} f^{-\frac{1}{2k}}} \cdot \Omega\left(n^{1/k}\right) \tag{assuming the girth conjecture \cite{girth}.}
     \end{align*}
\end{itemize}
\end{theorem}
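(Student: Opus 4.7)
The plan is to prove both directions by reducing to the non-faulty weighted-girth framework of Elkin--Neiman--Solomon, using structural relationships between $2f$-EFT connectivity preservers and fault-tolerant spanners.

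For the upper bound, I would construct $H$ via a fault-tolerant greedy weighted-girth algorithm: process the edges of $G$ in increasing order of weight, and add $e = (u, v)$ to $H$ only if $H$ does not already contain $f + 1$ internally edge-disjoint $u$-$v$ paths, each of total weight at most $(1+\eps)(2k-1) \cdot w(e)$. Menger's theorem then guarantees that after any $f$ edge faults at least one short alternative path survives, so $H$ is an $f$-EFT $(1+\eps)(2k-1)$-spanner. To bound $w(H)$, I plan to extend the ENS weighted-girth potential argument: each retained edge ``witnesses'' the absence of $f+1$ parallel short witness cycles, and a careful amortization should give $w(H) \le O(f^{3/2}) \cdot \lambda(n, (1+\eps)2k) \cdot w(\mst(G))$. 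On the denominator side, the Nash--Williams tree-packing theorem, applied to the $(2f{+}1)$-edge-connected preserver $Q^*$, yields at least $f$ edge-disjoint spanning trees and hence $w(Q^*) \ge f \cdot w(\mst(G))$. Dividing gives $\ell_{2f}(H \mid G) \le O(\sqrt{f}) \cdot \lambda(n, (1+\eps)2k)$.

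For the lower bound, I would construct hard graphs by a blowup. Start with an $n_0 := n/\sqrt{2f}$-vertex graph $G_0$ realizing $\lambda(n_0, (1+\eps)2k)$. Replace each vertex of $G_0$ by a cluster of $\sqrt{2f}$ copies connected internally by edges of weight $0$, and replace each edge $(u,v)\in E(G_0)$ by a bundle of $\Theta(\sqrt{f})$ parallel copies of weight $w(u,v)$ between the corresponding clusters. The free intra-cluster edges allow a min-weight $2f$-EFT preserver that uses only a few bundle copies per original edge, giving $w(Q^*) = \Theta(\sqrt{f}) \cdot w(\mst(G_0))$. On the other hand, I claim any $f$-EFT $(1+\eps)(2k-1)$-spanner $H$ must retain enough bundle copies that, under an adversarial fault pattern collapsing each bundle to a single surviving edge, the surviving subgraph projects onto a non-faulty $(1+\eps)(2k-1)$-spanner of $G_0$; applied per bundle, this forces $w(H) \ge \Omega(\sqrt{f}) \cdot \lambda(n_0, (1+\eps)2k) \cdot w(\mst(G_0))$. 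Taking the ratio yields $\ell_{2f}(H \mid G) \ge \Omega\left(\lambda(n/\sqrt{2f}, (1+\eps)2k)\right)$.

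The main technical obstacle is the $f^{3/2}$ classical-lightness bound on the fault-tolerant greedy spanner: extending the ENS weighted-girth potential to an analog that charges each spanner edge for the absence of $f+1$ disjoint short alternatives, rather than a single short cycle, requires a nontrivial generalization of the potential function. I expect the right amortization will arise from a pigeonhole step that groups the many near-parallel witnesses of a heavy spanner edge and bounds their collective weight against a common backbone of light edges, yielding the extra $\sqrt{f}$ factor beyond the naive $f$ from duplication.
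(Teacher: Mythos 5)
Your upper bound strategy has a fatal structural gap: you bound the numerator and denominator of the competitive lightness separately against $w(\mst(G))$, and both halves fail. The numerator claim $w(H)\le O(f^{3/2})\cdot\lambda(n,(1+\eps)2k)\cdot w(\mst(G))$ cannot hold for \emph{any} $f$-EFT spanner construction: the paper's opening triangle example (weights $1,1,W$) shows that every $1$-EFT spanner with finite stretch has weight $W+2$ while the MST has weight $2$, so no bound of the form $\texttt{poly}(f)\cdot\lambda\cdot w(\mst(G))$ is possible --- this is exactly the observation that forces the paper to abandon MST-normalization in the first place. The denominator claim $w(Q^*)\ge f\cdot w(\mst(G))$ presumes $Q^*$ is $(2f+1)$-edge-connected, which holds only when $G$ is; the paper explicitly flags that $G$ need not be even $(2f+1)$-connected (e.g.\ for a unit-weight cycle the min-weight $2f$-EFT preserver is the cycle itself, of weight $\approx w(\mst(G))$), and this is precisely why its proof replaces Nash--Williams by the Chekuri--Shepherd Steiner \emph{forest} packing applied to the edge-doubled (Eulerian) preserver $Q$. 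The paper's argument never routes through the MST at all: it seeds the greedy algorithm with $Q$, extracts an $f$-capped blocking set, packs $2f+1$ Steiner forests in doubled $Q$ so that each $Q$-edge lies in at most two forests, hosts each non-$Q$ spanner edge in a forest containing none of its $\le f$ blockers, and bounds $w(H[T])/w(T)$ by $O(\lambda)$ via subsampling; the $f^{1/2}$ factor comes from splitting edges into $Q$-heavy ones (sampled at rate $1/\sqrt f$) and $Q$-light ones (hosted in $\ge\sqrt f$ forests simultaneously). Your Menger-style insertion rule ($f+1$ edge-disjoint short paths) does produce a correct $f$-EFT spanner, but the amortization you hope for has no mechanism to compare $w(H)$ to $w(Q)$ directly, which is the crux.

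Your lower bound is in the right spirit --- the paper also blows up an extremal weighted-girth graph into clouds --- but your parameters break both sides of the ratio. With bundles of only $\Theta(\sqrt f)$ parallel copies, the proposed preserver (MST bundles plus zero-weight intra-cluster edges) is not a valid $2f$-EFT connectivity preserver: faulting all $\Theta(\sqrt f)\le 2f$ copies of a single MST bundle disconnects it while $G$ remains connected through non-MST bundles, so the true min-weight preserver must be much heavier than $\Theta(\sqrt f)\cdot w(\mst(G_0))$ and your denominator estimate is unjustified. On the numerator side, ``collapsing each bundle to a single surviving edge'' is not a legal fault set of size $f$; the paper instead argues bundle-by-bundle that $H$ must retain at least $f+1$ edges between each pair of adjacent clouds (otherwise fault the $\le f$ retained copies and use the weighted girth of $G'$ to violate the stretch), which requires each bundle to have at least $f+1$ edges --- hence clouds of size $\approx(2f)^{1/2}$ joined by complete bipartite graphs of $\approx 2f$ edges, rather than explicit multi-edges of multiplicity $\sqrt f$ with free intra-cluster edges.
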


Note that it remains open \emph{whether $2f$-competitive lightness should correlate positively or negatively with $f$}.
This is an important gap, which we hope can be closed by future work.

We provide results for one more setting.  We know from Theorem~\ref{thm:intromain} that $2f$ is the smallest possible competitive parameter, and Theorem~\ref{thm:introsmallcomp} provides bound on the $2f$-competitive lightness, but what if we increase the competition parameter slightly?  Do we get even better bounds on the competitive lightness?  We show that better bounds are indeed possible if we tolerate a higher competition parameter of $(2+\eta)f$.
\begin{theorem} \label{thm:introbigcomp}
For all positive integers $f, k, n$ and all $\eps > 0, \eta > 0$:
\begin{itemize}
\item \textbf{(Upper Bound)} Every $n$-node graph $G$ has an $f$-EFT $(1+\eps)(2k-1)$-spanner $H$ with competitive lightness
     \begin{align*}
     \ell_{\color{blue} (2 + \eta)f}(H \mid G) &\le { O_{\eta}(1)} \cdot \lambda\left( n, (1+\eps)2k\right)\\
                         &\le {\color{red} O(1)} \cdot O_{\eps, \eta}\left(n^{1/k}\right) \tag{by $\lambda$ upper bounds from \cite{LS23, Bodwin25}}.
     \end{align*}

\item \textbf{(Lower Bound)} There are $n$-node graphs $G$ for which any $f$-EFT $(1+\eps)(2k-1)$-spanner $H$ has competitive lightness
     \begin{align*}
     \ell_{\color{blue} (2+\eta)f}(H \mid G) &\ge \Omega\left(\lambda\left(\frac{n}{((2+\eta)f)^{1/2}}, (1+\eps)2k\right)\right)\\
                         &\ge {\color{red} f^{-\frac{1}{2k}}} \cdot \Omega\left(n^{1/k}\right) \tag{assuming the girth conjecture \cite{girth}.}
     \end{align*}
\end{itemize}
\end{theorem}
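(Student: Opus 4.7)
My plan for the upper bound is to build on the weighted-girth greedy framework underlying Theorem~\ref{thm:introsmallcomp}, but exploit the extra slack $\eta f$ in the competition parameter to shave the $f^{1/2}$ factor. I would fix a minimum-weight $(2+\eta)f$-EFT connectivity preserver $Q^\star$ of $G$ and run a standard fault-tolerant greedy construction: scan edges $e = (u,v) \in E(G)$ in non-decreasing weight order and add $e$ to $H$ iff there exists a fault set $F \subseteq E$ with $|F| \le f$, $e \notin F$, and $\dist_{H \setminus F}(u,v) > (1+\eps)(2k-1) \cdot w(e)$. Standard arguments establish that $H$ is an $f$-EFT $(1+\eps)(2k-1)$-spanner with ``fault-tolerant weighted girth'' at least $(1+\eps)2k$, which is the hook that lets us relate $w(H)$ to a $\lambda$-style bound in the spirit of \cite{ENS14, LS23, Bodwin25}.

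The weight analysis invokes Nash-Williams' tree-packing theorem: because $Q^\star$ is $((2+\eta)f+1)$-edge-connected within each component, it decomposes into $\lfloor ((2+\eta)f+1)/2 \rfloor \ge (1+\eta/2) f$ edge-disjoint spanning substructures of total weight at most $w(Q^\star)$. For every edge $e$ added to $H$, its witness fault set $F_e$ hits at most $f$ of these substructures, leaving at least $(\eta/2)f$ \emph{surviving} substructures, each of which carries a detour $u$-$v$ path in $Q^\star \setminus F_e$. These detours, combined with the weighted-girth property of $H$, let us charge $w(e)$ against tree-path fragments in $Q^\star$, averaged over $\Omega(\eta f)$ surviving trees. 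The charging spreads the cost evenly enough that no $f^{1/2}$ loss arises. This is in direct contrast with the $2f$-competitive setting, where Nash-Williams is tight at exactly $f$ trees, forcing the charging to be distributed by a Cauchy--Schwarz-style argument and incurring the $f^{1/2}$ overhead.

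For the lower bound, I would essentially reuse the construction behind the lower bound of Theorem~\ref{thm:introsmallcomp} with $2f$ replaced by $(2+\eta)f$ throughout. Start from a base graph $G_0$ on $m = n / \sqrt{(2+\eta)f}$ vertices realizing the weighted-girth extremal bound $\lambda(m,(1+\eps)2k)$ (via the girth conjecture), and blow up each edge into a bundle of $\Theta(\sqrt{(2+\eta)f})$ parallel substructures engineered so that (i) the resulting graph has $n$ nodes, (ii) a $(2+\eta)f$-EFT connectivity preserver can use just one cheap copy from each bundle, and (iii) any $f$-EFT $(1+\eps)(2k-1)$-spanner must retain every heavy edge of $G_0$. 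The resulting ratio is $\Omega(\lambda(m,(1+\eps)2k))$, matching the claimed bound after plugging in the girth conjecture.

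The main obstacle is the upper-bound charging step: making precise how the $\Omega(\eta f)$ surviving trees yield replacement paths of the right \emph{weight} (not just existence), and how this interacts with the greedy rule that certifies only a single bad fault set $F_e$ per added edge. The cleanest way to handle this is likely a fractional charging scheme, where each spanner edge $e$ distributes its weight uniformly over the surviving tree-paths in $Q^\star \setminus F_e$, and the weighted-girth property of $H$ together with a double-counting argument bounds the total charge by $O_\eta(\lambda(n,(1+\eps)2k)) \cdot w(Q^\star)$. Pinning down the dependence on $\eta$ (and verifying that the argument indeed degrades to $f^{1/2}$ as $\eta \to 0$, consistent with Theorem~\ref{thm:introsmallcomp}) is where the care is required.
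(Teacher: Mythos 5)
Your upper-bound plan has a genuine gap at its foundation: you assert that $Q^\star$ is $((2+\eta)f+1)$-edge-connected within each component and then apply Nash--Williams to pack spanning trees of $Q^\star$. This is false in general. A $(2+\eta)f$-EFT connectivity preserver only replicates the connectivity structure of $G$; if $G$ has bridges or low-connectivity cuts, so does $Q^\star$, and no global spanning-tree packing exists. The only guarantee available is \emph{pairwise}: for each edge $(u,v)$ of $G$ outside $Q^\star$, the pair $u,v$ must be $((2+\eta)f+1)$-connected in $Q^\star$ (else the edge would be forced into $Q^\star$). The paper handles exactly this obstacle by doubling the edges of $Q$ (making it Eulerian) and invoking the Chekuri--Shepherd Steiner-forest packing (Corollary~\ref{cor:pairwisenw}), which yields $(2+\eta)f+1$ forests, each spanning every relevant connected pair, with every $Q$-edge in at most two forests; plain Nash--Williams does not suffice. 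Beyond this, the core of your weight analysis --- the ``fractional charging'' of $w(e)$ to surviving tree paths --- is precisely the step you flag as unresolved, and it is not how the bound is actually obtained. The paper's mechanism is: seed $H$ with $Q$, extract an $f$-capped blocking set (Lemma~\ref{lem:bsetexists}) that blocks every cycle of normalized weight $\le k+1$ containing a non-$Q$ edge, host each added edge in \emph{all} of the $\Omega(\eta f)$ forests containing none of its blocking partners, and then, per host forest $T$, subsample non-tree edges with probability $1/f$ and delete blocked pairs to get a subgraph of weighted girth $>k+1$, so $w(H[T]) \le O(f\,\lambda(n,k+1))\,w(T)$; summing over forests, the factor $f$ cancels against the $\Omega(\eta f)$ multiplicity of hosting, giving $O(\eta^{-1}\lambda(n,k+1))$ (Theorem~\ref{thm:high-parameter-lightness}). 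Without the pairwise packing and the blocking-set/subsampling analysis (or a worked-out substitute), the claimed $O_\eta(1)\cdot\lambda$ bound is not established. (Also, your aside that the $f^{1/2}$ in the $2f$-competitive case comes from a Cauchy--Schwarz charging is not how that bound arises; it comes from splitting edges into $Q$-heavy and $Q$-light and sampling them at different rates.)

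The lower-bound sketch is in the right spirit (blow up an extremal weighted-girth graph), but as specified it does not work. If each edge of $G_0$ becomes a bundle of only $\Theta(\sqrt{(2+\eta)f})$ parallel substructures and the preserver keeps ``just one cheap copy'' per bundle, then a single fault on that kept copy, while the other copies survive in $G$, already violates the preserver property --- so property (ii) fails unless the preserver keeps entire bundles, and then the cost accounting changes. The paper's construction (Theorem~\ref{thm:lower-bound-lambda-graph}) instead blows up each \emph{vertex} into a cloud of $p \approx \sqrt{cf}$ nodes and each edge into a complete bipartite bundle of $p^2 \approx cf$ parallel edges of the same weight: the preserver keeps the full bundles along an MST of $G_0$ (cost factor $\approx cf$ over $w(\mst(G_0))$, and robust to $cf$ faults since no bundle can be cut), while any $f$-EFT $k$-spanner must keep at least $f+1$ edges in \emph{every} bundle (cost factor $f+1$ over $w(G_0)$, using the weighted-girth of $G_0$ to rule out short detours). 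These two factors cancel, yielding $\Omega(\lambda(n/p,k+1))$. You would need this (or an equally explicit) balance of ``preserver overhead $\approx cf$'' versus ``forced spanner multiplicity $\approx f$'' to make your ratio claim go through.
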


In other words: for slightly higher competition parameters we can completely remove the $f$-dependence from the upper bound, but our lower bound does not degrade at all.
While there are still $\texttt{poly}(f)$ gaps in this setting, between $O(1)$ and $\Omega(f^{-1/k})$, this time we can at least say that competitive lightness is nonincreasing with $f$ and that the gap disappears in the limit of large $k$.

\subsection{Efficient Construction Algorithms}

The spanners in our previous theorems are all achieved by a simple greedy algorithm, which is only a light variant on the greedy algorithm that is standard in prior work \cite{BDPV18, BP19} (see Algorithm \ref{alg:ftgreedy}).
That is:
\begin{enumerate}
\item Initialize the spanner $H$ as a min-weight $2f$-EFT (or $(2+\eta)f$-EFT) connectivity preserver.
\item For each remaining edge $(u, v)$ in the input graph order of increasing weight, add $(u, v)$ to the spanner $H$ iff there exists a set of edge failures $F$ under which $\dist_{H \setminus F}(u, v) > k \cdot w(u, v)$.
\end{enumerate}

This algorithm is simple, and it trivially produces a correct spanner, but unfortunately it is not efficient.
In fact, both steps encode NP-hard problems, and thus run in exponential time.

First: computing a min-weight connectivity preserver generalizes the \textsc{$f+1$-Edge Connected Spanning Subgraph} problem, which is NP-hard and which has received significant research attention~\cite{CT00,GGTW09,GG12}).
Thanks to previous work, it is straightforward to handle this issue: $2$-approximations for min-weight $f$-EFT connectivity preserver have recently been developed~\cite{DKK22,DKKN23}, building off of the seminal work on Survivable Network Design by Jain~\cite{Jain}.
Using these approximation algorithms in step 1 will affect the lightness bounds only by a constant factor.

Second: testing whether there exists a fault set forcing us to add $(u, v)$ to the spanner encodes the \textsc{Length-Bounded Cut} problem, which is also NP-hard \cite{BEHKKPSS10}.
In prior work on spanner \emph{sparsity}, there have been two strategies to address this: a simple one with a minor penalty to sparsity~\cite{DR20}, or a complex one with no penalty to sparsity~\cite{BDR21}.  Unfortunately, the simpler of these approaches does not work at all for lightness.
The more complex one works, but for technical reasons it loses an extra $f^{1/2}$ in the $2f$-competitive lightness setting (but is lossless in the $(2+\eta)f$-competitive setting).
See Section~\ref{sec:running-time-overview} and \ifshort Appendix~\ref{app:polytime} \else Section~\ref{sec:polytime} \fi for more details.
This gives the following dependencies for \emph{efficiently constructable} light EFT spanners:

\begin{restatable}{theorem}{algpolytime} \label{thm:alg-polytime}
For all positive integers $f, k, n$ and all $\eps > 0, \eta > 0$, there is a randomized polynomial time algorithm that takes as input an $n$-node weighted graph $G$, and with high probability returns an $f$-EFT $(1+\eps)(2k-1)$-spanner $H$ of competitive lightness

\begin{minipage}{.45\linewidth}
\begin{align*}
\ell_{\color{blue} 2f}(H \mid G) &\le O\left( f \cdot \lambda(n, (1+\eps)2k)\right)\\
                                 &\le {\color{red} f} \cdot O_{\eps}\left( n^{1/k} \right)
\end{align*}
\end{minipage}%
\begin{minipage}{0.1\linewidth}
or
\end{minipage}%
\begin{minipage}{.45\linewidth}
\begin{align*}
\ell_{\color{blue} (2+\eta)f}(H \mid G) &\le O_{\eta}\left( \lambda(n, (1+\eps)2k)\right)\\
                                        &\le {\color{red} O(1)} \cdot O_{\eps, \eta}\left( n^{1/k} \right).
\end{align*}
\end{minipage}
\end{restatable}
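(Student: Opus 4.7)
The plan is to take the existential greedy algorithm underlying Theorems \ref{thm:introsmallcomp} and \ref{thm:introbigcomp} (Algorithm \ref{alg:ftgreedy}) and replace each of its two NP-hard subroutines with a polynomial-time approximation, while tracking how each replacement degrades the competitive lightness. Recall the algorithm has two such steps: (i) initializing the spanner $H$ with a min-weight $2f$-EFT (respectively, $(2+\eta)f$-EFT) connectivity preserver, and (ii) for each edge $(u,v)$ in order of increasing weight, checking whether there exists a fault set $F$ with $|F|\le f$ such that $\dist_{H\setminus F}(u,v) > (1+\eps)(2k-1)\cdot w(u,v)$.

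For step (i), I would invoke the $2$-approximation for min-weight $f'$-EFT connectivity preservers of \cite{DKK22, DKKN23}. The denominator of $\ell_{f'}(H\mid G)$ is unchanged, and the initial subgraph contributes at most $2\cdot\min_{Q\in T_{f'}(G)} w(Q)$ to the numerator rather than exactly $\min_{Q\in T_{f'}(G)} w(Q)$. This adds only an $O(1)$ additive term to the competitive lightness, which is absorbed into the existing bound.

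For step (ii), the exact check encodes \textsc{Length-Bounded Cut}, so I would adopt the randomized certification idea of \cite{BDR21}: rather than searching over all fault sets, maintain a collection of $\polylog(n)$ random edge-sampled copies of the current spanner and add $(u,v)$ whenever the distance constraint fails in any copy. Tuning the sampling parameters so that every critical fault pattern is hit in at least one copy with high probability, a union bound over all edges yields a correct $f$-EFT spanner in polynomial time. The price is that the test may over-include edges, so one must rerun the weighted-girth analysis on the resulting subgraph: in the $(2+\eta)f$ regime the extra $\eta f$ ``slack'' in the competition parameter absorbs the overhead, preserving the $O_\eta(\lambda(n,(1+\eps)2k))$ bound from Theorem \ref{thm:introbigcomp}; in the $2f$ regime the budget is tighter, costing an additional $f^{1/2}$ factor on top of the existential $f^{1/2}$ from Theorem \ref{thm:introsmallcomp}, yielding the claimed $O(f\cdot\lambda(n,(1+\eps)2k))$ bound.

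The main obstacle is verifying that the sampling-based certification of \cite{BDR21}, originally designed to control \emph{sparsity}, composes cleanly with the weighted-girth lightness accounting of \cite{ENS14, LS23, Bodwin25}. Concretely, I need to show that whenever the efficient algorithm inserts an edge $(u,v)$, the resulting edge set still obeys the weighted-girth condition at parameter $(1+\eps)\cdot 2k$ on the relevant ``witness'' substructure used in the analyses of Theorems \ref{thm:introsmallcomp} and \ref{thm:introbigcomp}, so that $\lambda(n,(1+\eps)2k)$ remains a valid upper bound modulo the stated $f$-overhead. Once this invariant is checked and the sampling parameters are set to give high-probability correctness over all at most $\binom{n}{2}$ candidate edges, the two bounds of the theorem follow by combining with the $\lambda$ upper bounds of \cite{LS23, Bodwin25}, exactly as in the existential theorems.
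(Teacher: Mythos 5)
Your high-level plan (a $2$-approximate min-weight preserver for the seeding step, plus moving the \cite{BDR21}-style subsampling into the algorithm for the edge test) is indeed the route the paper takes, but the step you yourself flag as ``the main obstacle'' is exactly where the real content lies, and your sketch of the sampling rule would not deliver the stated bounds. Two concrete problems. First, you propose sampling ``random edge-sampled copies of the current spanner'' and adding $(u,v)$ whenever the distance constraint fails in \emph{any} of $\polylog(n)$ copies. Subsampling the whole current spanner (including $Q$) recreates precisely the difficulty that blocks naive subsampling for lightness: the preserver $Q$ has its own low-normalized-weight cycles, and without keeping an appropriate connected witness intact there is no way to charge the surviving weight against $w(Q)$. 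The paper's algorithm instead \emph{computes the Steiner forest packing of $Q$ inside the algorithm} (this is where the polynomial-time constructibility in Chekuri--Shepherd is essential), and for each tree $T$ of the packing samples subgraphs consisting of $T$ \emph{deterministically} plus the previously added non-$Q$ edges at rate $1/f$; every added edge is then ``hosted'' by a tree, and the lightness of each host graph $H[T]$ is bounded against $w(T)$ via a weighted-girth argument on a pruned subgraph $H''[T] \supseteq T$. Second, the ``fails in any copy'' threshold only certifies a true failure probability of order $1/\log n$ for an added edge, whereas the expected-weight step of the analysis needs every added edge to survive the girth-pruning with probability $\Omega(1)$ conditioned on being sampled; the paper therefore uses a \emph{constant-fraction} empirical threshold ($\widehat{P}^T_{(u,v)} \ge 3/8$, compared via Chernoff to the true probability $P^T_{(u,v)} \ge 1/4$). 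With your rule the lightness bound would pick up extra $\polylog(n)$ factors, which the theorem does not allow.

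In addition, your treatment of the $(2+\eta)f$ case is too optimistic as stated: the ``slack absorbs the overhead'' only because the algorithm is changed to a \emph{voting} scheme, where an edge is added only if at least $\eta f + 1$ trees' empirical tests demand it and is then hosted by all $\Omega(\eta f)$ voting trees; it is this multi-hosting that cancels the $f$ in the final sum, and correctness still holds because $(2+\eta)f+1 - 2f \ge \eta f + 1$ trees avoid any given fault set. Finally, a small point of emphasis: in the $2f$ regime the polynomial-time bound is $O(f\cdot\lambda)$ not because an extra $f^{1/2}$ is paid on top of the existential $f^{1/2}$, but because the $f^{1/2}$ improvement required sampling heavy and light edges at different rates, which is unavailable once the sampling probability is fixed inside the algorithm; the polytime analysis simply falls back to the warmup $f\cdot\lambda$ accounting. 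None of these pieces (per-tree sampling with deterministic trees from an explicitly computed packing, the constant empirical threshold, and the voting/multi-hosting mechanism) appear in your proposal, so as written it does not yet constitute a proof.
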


\subsection{Paper Outline}

We begin in Section~\ref{sec:technical-overview} with a high-level overview of our approach.  We then get into technical details in Section~\ref{sec:upper-main}.  In order to clearly show the main ideas, we warm up with an analysis that gives weaker bounds in Section~\ref{sec:warmup}.
The proofs of our main upper bounds in Theorems~\ref{thm:intromain},~\ref{thm:introsmallcomp}, and \ref{thm:introbigcomp} are in Section~\ref{sec:multiple-host}.  \ifshort Due to space constraints, the proof of our polytime algorithm (Theorem~\ref{thm:alg-polytime}) is deferred to Appendix~\ref{app:polytime}, and the proofs of all of our lower bounds (Theorem~\ref{thm:lower-bound-main}, and the lower bounds from Theorems~\ref{thm:intromain}, \ref{thm:introsmallcomp}, and~\ref{thm:introbigcomp}) are deferred to Appendix~\ref{app:lowerbound}.\else The proof of our polytime algorithm (Theorem~\ref{thm:alg-polytime}) is in Section~\ref{sec:polytime}.
The lower bound parts of these theorems appear in Section~\ref{sec:lowerbound}.\fi

\section{Technical Overview}  \label{sec:technical-overview}
The bulk of the technical work in this paper involves analyzing the (slightly modified) fault tolerant greedy algorithm to prove the upper bound part of our main results.
These proofs are closely connected.

\subsection{Overview of Upper Bounds}\label{overview}

A highly successful strategy to analyze the \emph{sparsity} of \emph{non-fault tolerant} spanners is to construct them using a simple greedy algorithm, and show that it produces spanners of high girth (shortest cycle length) \cite{ADDJS93}.
Then one can apply results from extremal graph theory stating that all high-girth graphs, including these spanners, must be sparse.
Previous work has independently extended this method to analyze the \emph{sparsity} of fault tolerant spanners \cite{BDPV18, BP19, BDR22}, and to the lightness of \emph{non-fault tolerant} spanners \cite{ENS14}.
Since we deal with light fault tolerant spanners, our proof strategy can broadly be described as a composition of these two extensions, as well as some new technical tools that handle the issues that arise from their interaction.

\paragraph{Algorithm and Blocking Sets.}

As discussed, our algorithm for constructing a $f$-EFT $k$-spanner is the following.  Let $Q$ be the optimal $2f$-EFT connectivity preserver of $G$, i.e., the graph that we are competing with (the weight in the denominator of our lightness definition).
We begin by adding $Q$ to the spanner $H$.  Then we use the standard greedy algorithm: we consider each other edge $e \in E(G) \setminus Q$ in nondecreasing weight order, and we add $e = \{u,v\}$ to $H$ if there is some set $F_e \subseteq E(H)$ with $|F_e| \leq f$ such that $\dist_{H \setminus F_e}(u,v) > k \cdot \dist_{G \setminus F_e}(u,v)$.  We say that $e$ is in a \emph{block} with each $e' \in F_e$, or equivalently that $(e, e')$ form a block for each such $e'$.  Note that $e$ is the first edge of a block at most $|F_e| \leq f$ times.  The collection of all blocks is called a blocking set.

\paragraph{Weighted Girth.}
Next, we explain the relationship to the weighted girth framework from \cite{ENS14}.
Standard arguments (first introduced by~\cite{BP19}) can be adapted to our modified algorithm to show that essentially every cycle in $H$ with at most $k+1$ edges must contain two edges that form a block, i.e., the blocks cover all cycles with at most $k+1$ edges.
Following a proof from \cite{ENS14}, this argument can be adapted to show that they also cover essentially all cycles with ``normalized weight'' at most $k+1$, i.e., cycles $C$ where $w(C) / \max_{e \in C} w(e)$ is at most $k+1$.
This is useful because if there were \emph{no} cycles of normalized weight at most $k+1$ --- or, in the language of \cite{ENS14}, $H$ has weighted girth $>k+1$ --- then by definition $H$ has lightness at most $\lambda(n,k+1)$.
So we have a spanner $H$ where cycles of normalized weight at most $k+1$ exist but are blocked, and we know that if there were \emph{no} such cycles then we have good lightness bounds.  So our goal will be to turn $H$ into some other graph $H'$ without \emph{any} cycles of normalized weight at most $k+1$, in a way that allows us to bound the lightness of $H$ using the lightness of $H'$.

\paragraph{The Subsampling Method.}

In the sparsity setting, one of the now-standard ways of going from a graph $H$ where short cycles exist but are blocked to a graph $H'$ where there are no short cycles is by subsampling $H$ to get $H'$. Informally,  by choosing the correct probability, one can often show that no blocks survive in $H'$, and thus there cannot be any short cycles and so there are not many edges in $H'$.  On the other hand, since we obtained $H'$ by subsampling $H$ with a known probability, there are not too many more edges in $H$ than in $H'$.  Thus $H$ cannot have many edges.  
A natural first attempt is to use the same idea for lightness, but it immediately runs into a serious problem.
The connectivity preserver $Q$ itself can have unblocked cycles of low normalized weight, and so if the subsampling does not remove any edges of $Q$ then we cannot hope to get rid of all such cycles.
However, if the subsampling \emph{does} get rid of edges of $Q$, then the min-weight connectivity preserver of the subsampled graph $H'$ might be very different from $Q$, making it hard to compare the lightnesses of $H$ and $H'$.
(Note that this problem does not show up when analyzing sparsity, since the number of edges is not normalized by anything and so it is relatively easy to compare $|E(H)|$ to $|E(H')|$).  

To get around this, we first observe that if $Q$ happened to be a tree that was disjoint from the blocking set then we would be in good shape:  none of the low-normalized-weight cycles would use $Q$, so we could do standard subsampling of $E(H) \setminus Q$ and then include $Q$ deterministically to get a subgraph with high weighted girth where $Q$ still exists. Unfortunately, $Q$ will not be a tree.  But suppose that we could find at least $f+1$ disjoint spanning trees in $Q$.  Then since each edge $e \in E(H) \setminus Q$ is blocked by at most $f$ other edges (the edges in $F_e$), we can find at least one of these trees which does not contain any edge blocked with $e$.  We can then add $e$ to this tree.  Once we do this for every $e \in E(H) \setminus Q$, we have a partition of $E(H) \setminus Q$ into $f+1$ disjoint subgraphs with the property that each subgraph contains a spanning tree of $Q$, and no edge in that subgraph includes a block with the spanning tree.  So we're in the setting we want to be in, and we can show that each of these subgraphs has good lightness.  Since $Q$ and $H$ are both partitioned across these subgraphs, this implies that we have good lightness overall.

\paragraph{Steiner Forest Packing.}  How can we find $f+1$ disjoint spanning trees in $Q$?
One very useful tool for finding disjoint spanning trees in graphs is the tree packing theorem of Nash-Williams~\cite{NashWilliams}, which implies that it suffices for $Q$ to be $2f+2$-connected.
This is close to what we want, except for two issues:
\begin{enumerate}
\item First, $Q$ is a $2f$-EFT connectivity preserver, which means that if $G$ is $2f+1$-connected (or more) then $Q$ is $2f+1$-connected.
We desire $2f+2$-connectivity, so we are off by one from our desired connectivity threshold.

\item Second, $G$ need not be $2f+1$-connected at all, and so $Q$ need not even be $2f+1$-connected.
\end{enumerate}
Fortunately, it turns out that we can get around both of these issues via a ``doubling trick'' and by applying bounds on \emph{Steiner Forest packing} rather than spanning tree packing.
Since $Q$ is a $2f$-EFT connectivity preserver, for any edge $(u, v) \in E(H) \setminus Q$ we know that the pair $(u, v)$ is $2f+1$-connected in $G$, or else we would have to include $(u, v)$ in $Q$.  So while $Q$ is not $2f+1$-connected, every pair $u,v$ that are endpoints of an edge in $E(H) \setminus Q$ are $2f+1$-connected in $G$ and thus in $Q$.  It turns out that this actually implies that there are $\Omega(f)$ disjoint \emph{forests} in $Q$ so that for each $\{u,v\} \in E(H) \setminus Q$, $u$ and $v$ are connected in all of the forests.  In other words, we can find $\Omega(f)$ Steiner forests when the demand pairs are $e \in E(H) \setminus Q$.  We would now be done if this $\Omega(f)$ was $f+1$, but unfortunately it is more like $f/16$ ~\cite{Lau05}; improving the constant in Steiner forest packing is a fascinating open problem, and the correct bound is not even known for Steiner trees~\cite{KRIESELLeven,DMP16}.

So we add one more step: we first double every edge of $Q$ to get a multigraph $Q'$.  Now every $\{u,v\} \in E(H) \setminus Q$ is $4f+2$-connected in $Q'$ and, importantly, all degrees are even and thus $Q'$ is Eulerian.  This turns out to make Steiner Forest packing significantly easier, and optimal bounds are known: it was shown by Chekuri and Shepherd~\cite{chekuri2009approximate} that if all demand pairs $u,v$ are $2k$-connected in an Eulerian graph $G$, then we can find $k$ disjoint Steiner forests in $G$ (and can even do so in polynomial time). This means that we can find $2f+1$ disjoint Steiner forests in $Q'$, and thus when interpreted in $Q$ we get $2f+1$ Steiner forests that are not disjoint, but where every edge of $Q$ is in at most $2$ of them. 

But this is sufficient for us.  Since each edge $e$ is in a block with at most $f$ edges of $Q$, and each edge of $Q$ can appear at most two of $2f+1$ Steiner forests, we get that there must be some Steiner forest which has no edge blocked with $e$, as desired.

To sum up: to bound the $2f$-competitive lightness of $H$, we double every edge of $Q$ and use the Steiner Forest packing in Eulerian graphs result of~\cite{chekuri2009approximate} to find $2f+1$ edge-disjoint Steiner forests, which correspond to $2f+1$ Steiner forests in $Q$ where every edge in $Q$ appears at most twice.  Then for every edge $e \in E(H) \setminus Q$ we add it to a Steiner forest which does not contain any of its blocks.  Finally, for each of the resulting subgraphs we use subsampling techniques to bound their lightness.  

This approach turns out to give a lightness bound of $f \cdot \lambda(n, k+1)$ (see Section~\ref{sec:warmup}).
In order to improve the bound to $f^{1/2} \cdot \lambda(n, k+1)$, we need to be a bit more careful.  At a very high level, we allow ourselves to add edges of $E(H) \setminus Q$ to \emph{many} of the Steiner forests at once, where possible.
If it is possible then we get an improved bound, roughly since our previous analysis overcounts multiply-added edges, letting us divide out an additional factor.
For edges that cannot be added to many Steiner forests, it turns out the previous subsampling technique can be made more efficient, also giving an improved bound in this other case.

The upper bound in Theorem~\ref{thm:introbigcomp}, with a higher competition parameter, has an almost identical proof to Theorem~\ref{thm:introsmallcomp}.  The main difference is that, since we now have that $Q$ is a $(2+\eta)f$-EFT connectivity preserver, if we repeat the above analysis then for any $e \in E(H) \setminus Q$ we can now find $\Omega_{\eta}(f)$ Steiner forests in the forest packing of $Q$ that don't contain a block with $e$, rather than just $1$.  Thus if we add the weights of all of the subgraphs, we are overcounting every edge by a factor of $\Omega(f)$.  When we plug this into the previous calculations, we end up exactly canceling out the $f$-dependence, giving the improved upper bound.

\subsection{Running Time (Theorem~\ref{thm:alg-polytime})} \label{sec:running-time-overview}
One downside of the fault tolerant greedy algorithm is that it does not run in polynomial time.  First, we need to seed it with $Q$, which as mentioned above requires solving an NP-hard problem.  This is straightforward to get around, though, since we can just use a $2$-approximation for the min-weight $f$-EFT connectivity preserver problem due to~\cite{DKK22,DKKN23}.  The larger difficulty is that we need to check for every edge $e \in E(H) \setminus Q$ whether there is a fault set which forces us to include $e$.  Doing this in the obvious way takes $\Omega(n^f)$ time.  This drawback has been noticed since the fault tolerant greedy algorithm was introduced~\cite{BDPV18}, and eventually two different methods were developed to design polytime versions of the algorithm:
\begin{enumerate}
    \item The first approach, introduced by~\cite{DR20} for sparse fault-tolerant spanners, involves giving an approximation algorithm for the question of whether there exists a fault set forcing us to add $e$.  This problem is NP-hard, but there are simple $O(k)$-approximations.  By carefully analyzing the effect of approximating rather than solving this problem, \cite{DR20} showed that this incurs only a relatively small extra loss in the sparsity.  This approach has since been used extensively to to turn variants of the fault tolerant greedy algorithm into polynomial-time algorithms, albeit with a small extra loss~\cite{BDR22,BDN22,BDN23,BHP24,PST24}.
    
    \item In order to avoid this extra loss,~\cite{BDR21} designed a very different subsampling-based condition to use in the greedy spanner.  Informally, the main idea of this technique is to move the subsampling from the analysis into the actual algorithm.  So instead of just subsampling to analyze sparsity, we actually subsample to decide whether to include an edge.     This method is conservative (it may add some edges even when there is no fault set forcing us to do so), but it can be proved that it does not add asymptotically more edges than the base fault tolerant greedy algorithm.  Perhaps due to its relative complexity, this approach has not proved to be nearly as useful or popular as the approximation algorithm-based approach of~\cite{DR20}.
\end{enumerate}

It turns out that making our version of fault-tolerant greedy polynomial time without losing its lightness properties is significantly more complex than if we cared only about sparsity.  Interestingly, the approximation algorithm approach of~\cite{DR20} does not work, as it fundamentally depends on being able to treat weighted edges as unweighted.  This turns out to be OK for sparsity, but makes it impossible to use for lightness.  However, the subsampling approach of~\cite{BDR21} \emph{can} be made to work.  Interestingly, this is the first setting (to the best of our knowledge) where the approximation algorithm approach does not work but the subsampling does; until now, it seemed as if the approximation approach was at least as flexible (and significantly easier to use) as the subsampling approach.

To get this approach to work we have to move the subsampling from the analysis to the algorithm.  It turns out that we can do this by actually computing the Steiner Forest packing of~\cite{chekuri2009approximate} (fortunately, they showed that it could be computed efficiently).  Then when we consider adding an edge $e$, for every Steiner forest in the packing we use the sampling idea from~\cite{BDR21} on the graph consisting of $T$ and the already-added spanner edges that are not in $Q$.  This basically suffices for Theorems~\ref{thm:alg-polytime}.  Unfortunately, the method we used before to improve the $f$ dependence to $f^{1/2}$ for $2f$-competitive lightness (in Theorem~\ref{thm:introsmallcomp}) required subsampling different types of edges with different probabilities (depending on whether or not an edge could be added to many of the Steiner forests).  Since the subsampling is now in the algorithm rather than the analysis, we cannot do this, and thus must fall back on a lightness bound of $f \cdot \lambda(n,k+1)$.

\section{Constructions of Light Fault-Tolerant Spanners} \label{sec:upper-main}

\subsection{The Weighted Girth Framework} \label{sec:weighted-girth}

Elkin, Neiman, and Solomon \cite{ENS14} introduced a weighted analog of girth, for use in studying the lightness of (non-faulty) spanners:

\begin{definition} [Normalized Weight and Weighted Girth \cite{ENS14}]
For a cycle $C$ in a weighted graph $G$, the \emph{normalized weight} of $C$ is the quantity
$$w^*(C) := \frac{w(C)}{\max_{e \in C} w(e)}.$$
The \emph{weighted girth} of $G$ is the minimum normalized weight over its cycles.
\end{definition}

They then proved an equivalence between light spanners and the maximum possible lightness of a graph of high weighted girth.
In particular:
\begin{definition} [Extremal Lightness of Weighted Girth]
We write $\lambda(n, k) := \sup \ell(G)$, 
where the $\sup$ is taken over $n$-node graphs $G$ of weighted girth $>k$.
\end{definition}

\begin{theorem} [\cite{ENS14}] \label{thm:ENS}
For all positive integers $n$ and all $k$, every $n$-node graph $G$ has a $k$-spanner $H$ of lightness
$\ell(H \mid G) \le \lambda(n, k+1)$,
and this is existentially tight.\footnote{More specifically, this means that for any $\eps > 0$, there exists an $n$-node graph on which any $k$-spanner $H$ has lightness $\ell(H \mid G) > \lambda(n, k+1) - \eps$.}
\end{theorem}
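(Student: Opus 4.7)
The plan is to prove both parts via the standard greedy $k$-spanner algorithm, with two short auxiliary arguments.

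\textbf{Upper bound.} Sort the edges of $G$ in nondecreasing weight order and add each edge $e = (u,v)$ to $H$ iff $\dist_H(u,v) > k \cdot w(e)$ at the moment $e$ is considered. The output is a $k$-spanner by construction. The key structural claim is that $H$ has weighted girth $> k+1$: given any cycle $C \subseteq H$, the last-inserted edge $e$ of $C$ attains the maximum weight in $C$ by the ordering, and at the time $e$ was inserted the remaining edges of $C$ already formed a $u$-$v$ path $P$ with $w(P) > k \cdot w(e)$; hence
\[
w(C) = w(P) + w(e) > (k+1) w(e) = (k+1) \max_{e' \in C} w(e').
\]
This by itself gives only $\ell(H) \le \lambda(n,k+1)$; to convert it into a bound on $\ell(H \mid G)$, I also need $w(\mst(H)) = w(\mst(G))$. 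The non-trivial direction is $w(\mst(H)) \le w(\mst(G))$: after perturbing ties, every MST edge $e = (u,v)$ is uniquely lightest across some cut, so at the moment $e$ is processed by the greedy algorithm the vertices $u,v$ lie in different components of $H$, forcing $\dist_H(u,v) = \infty$ and ensuring $e$ is added. Thus $H$ contains an MST of $G$, which combined with $H \subseteq G$ yields $w(\mst(H)) = w(\mst(G))$, and so $\ell(H \mid G) = \ell(H) \le \lambda(n, k+1)$.

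\textbf{Tightness.} Fix $\eps > 0$ and, by definition of the supremum defining $\lambda$, choose an $n$-node graph $G$ of weighted girth $> k+1$ with $\ell(G) > \lambda(n, k+1) - \eps$. I claim every edge of $G$ appears in every $k$-spanner of $G$, so the only $k$-spanner $H$ is $G$ itself and $\ell(H \mid G) = \ell(G) > \lambda(n,k+1) - \eps$. Fix $e = (u,v) \in E(G)$. First, $\dist_G(u,v) = w(e)$: any strictly shorter $u$-$v$ path $P$ in $G$ would combine with $e$ into a cycle of normalized weight $< 2 \le k+1$, contradicting weighted girth. If $e$ is a bridge of $G$, then removing $e$ disconnects $u$ from $v$ and obviously destroys the spanner property. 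Otherwise let $P$ be a shortest $u$-$v$ path in $G \setminus \{e\}$; since shortest paths in positive-weight graphs are simple, $C := P \cup \{e\}$ is a simple cycle of $G$, so
\[
w(P) + w(e) = w(C) > (k+1) \max_{e' \in C} w(e') \ge (k+1) w(e),
\]
giving $w(P) > k \cdot w(e) = k \cdot \dist_G(u,v)$. Hence any subgraph missing $e$ fails the $k$-spanner condition at $(u,v)$.

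The only non-routine step is the tightness argument, and the main worry is whether the weighted-girth hypothesis really rules out every short alternate $u$-$v$ path as opposed to just one fixed cycle through $e$. This is handled cleanly by applying the weighted-girth bound to the specific cycle formed by $e$ together with the shortest alternate $u$-$v$ path, which is automatically simple. The MST-preservation step in the upper bound is the other place where one must be careful, since otherwise $\ell(H \mid G)$ could a priori exceed $\ell(H)$; but it follows immediately from the cut property once ties are perturbed.
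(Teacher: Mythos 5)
Your proposal is correct and takes essentially the same route as the cited \cite{ENS14} argument (which the paper itself does not reprove): the greedy construction yields weighted girth $>k+1$ and contains an MST of $G$, giving the upper bound, while tightness follows because a graph of weighted girth $>k+1$ is its own unique $k$-spanner --- exactly the argument the paper reuses in Lemma~\ref{lem:bsetexists} and Theorem~\ref{thm:lower-bound-lambda-graph}. One cosmetic caution: the tie-perturbation should be understood as fixing only the processing order and the choice of MST (with the greedy test still run against the original weights), so that the strict inequality $w(C) > (k+1)\max_{e \in C} w(e)$ and the exact $k$-spanner guarantee are preserved.
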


Settling the asymptotic value of $\lambda$ is a major open problem.
Currently, the following bounds are known:

\begin{theorem} [\cite{LS23, Bodwin25, BF25}] \label{thm:priorlight}
For all positive integers $n, k$ and all $\eps > 0$, we have
$\lambda(n, (1+\eps)2k) \le O\left(\eps^{-1} n^{1/k} \right)$.  When $k$ is a constant and $\eps = \Theta(n^{-\frac{1}{2k-1}})$, we have
$\lambda(n, (1+\eps)2k) \ge \Omega\left(\eps^{-1/k} n^{1/k} \right)$.
\end{theorem}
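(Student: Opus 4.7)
The plan is to prove the two bounds separately, using different techniques, though both pass through the weighted-girth framework and reduce to extremal questions about unweighted girth. For the upper bound $\lambda(n, (1+\eps)2k) \le O(\eps^{-1} n^{1/k})$, let $G$ be any $n$-node graph of weighted girth strictly greater than $(1+\eps)2k$. After rescaling so that the minimum edge weight is $1$, partition the edges into geometric weight buckets $E_i := \{e : w(e) \in [(1+\eps)^i, (1+\eps)^{i+1})\}$. The key structural observation is that the subgraph on edges in $E_i$ alone has unweighted girth $> 2k$: any cycle using only at most $2k$ edges all in bucket $E_i$ has normalized weight at most $(1+\eps) \cdot 2k$, contradicting the weighted-girth hypothesis. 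Hence by the Moore bound each bucket carries at most $O(n^{1+1/k})$ edges, each of weight at most a factor $(1+\eps)$ above the bucket's minimum. To pass to a lightness bound I would then run a hierarchical charging argument in the style of Chechik--Wulff-Nilsen, comparing the total weight of each bucket to a ``coarse MST'' of the union of heavier buckets, so that the geometric spacing collapses into a single $\eps^{-1}$ factor rather than an $\eps^{-1}\log n$ factor.

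For the lower bound $\lambda(n, (1+\eps)2k) \ge \Omega(\eps^{-1/k} n^{1/k})$, I would exhibit an explicit extremal construction. Start from a dense $n$-node graph of unweighted girth $> (1+\eps)2k$ with $\Theta(n^{1+1/k})$ edges, whose existence is predicated on the Erd\H{o}s girth conjecture, and then weight its edges so as to engineer a high weighted girth while keeping the MST weight small relative to the total weight. A natural scheme is to assign weight $1$ to most edges (so the MST has weight $\Theta(n)$) and weight $(1+\eps)^c$ to a carefully chosen subset of heavy edges, with $c$ tuned so that no short cycle has its heaviest edge dominate: optimizing $c$ against the constraint $w(C)/\max_{e \in C} w(e) > (1+\eps)2k$ is what produces the $\eps^{-1/k}$ factor in the lower bound.

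The main obstacle will be the hierarchical charging in the upper bound. A naive bucket-by-bucket Moore-bound sum gives $O(\eps^{-1}\log n \cdot n^{1/k})$, and removing the $\log n$ factor requires an absorption argument that charges each bucket's weight to paths through lighter buckets, exploiting the fact that the weighted-girth condition forces heavy edges to be well-separated from their lighter alternatives. The refinements in \cite{LS23} and \cite{Bodwin25} accomplish this via reweighted greedy selection and a delicate amortization, and these refinements --- rather than the high-level bucketing --- are where the bulk of the technical work lies.
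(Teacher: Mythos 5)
First, a point of comparison: the paper does not prove this statement at all --- Theorem~\ref{thm:priorlight} is imported wholesale from \cite{LS23, Bodwin25} (upper bound) and \cite{BF25} (lower bound), so your proposal is really an attempt to reprove two substantial recent results from scratch, and as written it has genuine gaps in both halves. For the upper bound, your bucketing observation is fine (a cycle of at most $2k$ edges within one $(1+\eps)$-geometric weight class has normalized weight at most $(1+\eps)2k$, so each class has unweighted girth $>2k$ and hence $O(n^{1+1/k})$ edges by the Moore bound), but the step from ``each bucket is sparse'' to any lightness bound is not the routine summation you suggest: bounding a bucket's \emph{weight} against $w(\mst(G))$ already requires the hierarchical clustering/charging machinery of \cite{CDNS92, ENS14, CW16}, so even your intermediate claim of $O(\eps^{-1}\log n\cdot n^{1/k})$ is not justified by the sketch, and those arguments inherently lose extra $\polylog$ or $\texttt{poly}(1/\eps)$ factors. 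Obtaining the clean $O(\eps^{-1}n^{1/k})$ is precisely the main technical contribution of \cite{LS23} (and the alternative proof in \cite{Bodwin25}); your proposal explicitly defers this to those papers, which means the hard part is cited rather than proved.

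The lower bound sketch does not work as described. With essentially two weight classes, the weighted-girth condition itself caps how heavy the ``heavy'' edges can be: if an edge of weight $H$ lies on a cycle together with $2k$ unit-weight edges, then $(H+2k)/H > (1+\eps)2k$ forces $H < 2k/((1+\eps)2k-1) = O(1)$, so any genuinely heavy edge must have every cycle through it of total length $\Omega(kH)$ in the light part. Your outline gives no construction or counting argument showing that one can pack enough such well-separated heavy edges, on top of a cheap spanning structure, to exceed the uniform-weight bound $\Theta(n^{1/k})$ by the extra factor $\eps^{-1/k}$; ``optimizing $c$'' is doing all the work and is exactly what is missing. Relatedly, the sketch never uses the hypothesis $\eps = \Theta(n^{-\frac{1}{2k-1}})$, even though the claimed bound $\Omega(\eps^{-1/k}n^{1/k}) = \Omega(n^{2/(2k-1)})$ lives precisely at this critical balance point; the construction of \cite{BF25} is organized around that trade-off, and any correct proof must be as well.
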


\subsection{The Greedy Algorithm and Blocking Sets}

We will analyze spanners that arise from a variant of the fault-tolerant greedy algorithm, introduced in \cite{BDPV18} and used in many recent papers on fault tolerance.
The only difference algorithmically is that we seed the spanner with a min-weight connectivity preserver, whereas prior work immediately enters the main loop.

\begin{algorithm}
\DontPrintSemicolon

\textbf{Input:} Graph $G = (V, E, w)$, stretch $k$, fault tolerance $f$\;~\\

Let $Q \gets $ min-weight $2f$-FT connectivity preserver of $G$\;
Let $H \gets Q$ be the initial spanner\;
\ForEach{edge $(u, v) \in E(G) \setminus Q$ in order of nondecreasing weight}{
    \If{there exists $F \subseteq E(H), |F| \le f$ such that $\dist_{H \setminus F}(u, v) > k \cdot w(u, v)$}{
        add $(u, v)$ to $H$\;
    }
}
\textbf{Return} $H$\;
\caption{\label{alg:ftgreedy} Light Fault-Tolerant Greedy Spanner Algorithm}
\end{algorithm}

The proof of correctness of the algorithm is standard:
\begin{theorem} \label{thm:feasible}
The output spanner $H$ from Algorithm \ref{alg:ftgreedy} is an $f$-EFT $k$-spanner of the input graph $G$.
\end{theorem}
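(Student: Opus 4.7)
The plan is to follow the standard path-replacement argument used for (non-fault-tolerant) greedy spanners, lightly adapted to account for edge failures and for the fact that the algorithm is seeded with $Q$. Fix any pair $u, v \in V$ and any fault set $F \subseteq E(G)$ with $|F| \le f$. If $u$ and $v$ are disconnected in $G \setminus F$ there is nothing to prove, so assume they are connected. Since $Q$ is a $2f$-EFT connectivity preserver and $|F| \le f \le 2f$, the pair $u, v$ is also connected in $Q \setminus F \subseteq H \setminus F$. Fix a shortest $u$-$v$ path $P$ in $G \setminus F$; by the triangle inequality it suffices to show that for each edge $e = (x, y)$ of $P$, one has $\dist_{H \setminus F}(x, y) \le k \cdot w(e)$, since summing this inequality over $e \in P$ gives $\dist_{H \setminus F}(u,v) \le k \cdot \dist_{G \setminus F}(u,v)$.

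Fix such an $e = (x,y) \in P$, and note that $e \notin F$ because $e$ lies on a path in $G \setminus F$. If $e \in H$, then $e$ is also present in $H \setminus F$, and so trivially $\dist_{H \setminus F}(x,y) \le w(e) \le k \cdot w(e)$. Otherwise $e \in E(G) \setminus Q$ (since $Q \subseteq H$), and so at the moment Algorithm~\ref{alg:ftgreedy} considered $e$ its greedy test must have failed. Let $H^{(e)}$ denote the spanner at that moment; then for every $F'' \subseteq E(H^{(e)})$ with $|F''| \le f$, we have $\dist_{H^{(e)} \setminus F''}(x,y) \le k \cdot w(e)$. Instantiate this with $F'' := F \cap E(H^{(e)})$, which satisfies $|F''| \le |F| \le f$. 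Observing that $H^{(e)} \setminus F'' = H^{(e)} \setminus F \subseteq H \setminus F$ (the first equality because $F''$ drops exactly the edges of $F$ present in $H^{(e)}$, and the containment because $H^{(e)} \subseteq H$), we conclude $\dist_{H \setminus F}(x,y) \le \dist_{H^{(e)} \setminus F''}(x,y) \le k \cdot w(e)$, as desired.

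This proof is largely routine; the only mild subtlety is reconciling the fact that the greedy test is quantified over fault sets $F'' \subseteq E(H^{(e)})$ drawn from the intermediate spanner, while the fault-tolerance conclusion must hold for all $F \subseteq E(G)$ against the final spanner $H$. The intersection $F'' = F \cap E(H^{(e)})$ and the monotonicity of $H^{(\cdot)}$ under edge additions resolve this cleanly, and the role of the seed $Q$ is limited to guaranteeing connectivity in $H \setminus F$ whenever it holds in $G \setminus F$, so that the argument is never vacuous on the finite-distance side.
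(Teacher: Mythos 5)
Your proof is correct and follows essentially the same route as the paper's: reduce fault-tolerance to checking, for every non-spanner edge and every fault set, that its endpoints are within $k$ times its weight in the faulted spanner, which holds at the moment the greedy test rejected the edge and is preserved under later edge additions. You simply spell out the well-known path-replacement reduction and the $F'' = F \cap E(H^{(e)})$ bookkeeping that the paper leaves implicit (and your invocation of $Q$'s preserver property is harmless but not actually needed, since the per-edge bound already yields connectivity).
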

\begin{proof}
As is well known in the spanners literature (e.g.~\cite{ADDJS93}), it suffices to verify that for any set $F$ of $|F| \le f$ edge faults and any edge $(u, v) \in E(G) \setminus (E(H) \cup F)$, we have
$$\dist_{H \setminus F}(u, v) > k \cdot w(u, v).$$
To show this, we first observe that for any such edge we have $(u, v) \notin Q$ (since we add all of $Q$ to $H$), and so we consider $(u, v)$ at some point in the main for-loop of the algorithm.
We choose not to add $(u, v)$ to $H$, meaning that for all possible fault sets $F$ we have $\dist_{H \setminus F}(u, v) \le k \cdot w(u, v)$.
For the rest of the algorithm we only add edges to $F$, and since this property is monotonic in $H$, it will still hold in the final spanner.
\end{proof}

Now we turn to analyzing lightness.
When $f=0$, this algorithm produces a spanner $H$ of weighted girth $>k+1$ \cite{ENS14}.
For larger $f$, there is not much we can say about the weighted girth of the output spanner, which could be quite small.
However, we might intuitively expect the output spanner to be ``structurally close'' to a graph of high weighted girth.
That notion of closeness can be formalized by adapting the blocking set framework, which has been used in many recent papers on fault-tolerant spanners and related objects (see, e.g., \cite{BP19, DR20, BDR21, BDR22,BDN22,BDN23, PST24, BHP24}).

\begin{definition} [Edge-Blocking Sets]
For a graph $H$, an \emph{edge-blocking set} is a set of ordered edge pairs $B \subseteq E(H) \times E(H)$.
We say that $B$ \emph{blocks} a cycle $C$ if there is a pair $(e_1, e_2) \in B$ with $e_1, e_2 \in C$.
We say that $B$ is \emph{$f$-capped} if for all $e \in E(H)$, there are at most $f$ edges $e'$ with $(e, e') \in E(H)$.\footnote{There may be additional pairs of the form $(e', e) \in E(H)$ -- this property only bounds the number of pairs that have $e$ first.}
\end{definition}

The following is a tweak on a standard lemma bounding the size of the blocking set of the output spanner; the proof is a straightforward mixture of related lemmas from \cite{BP19} and \cite{ENS14}.

\begin{lemma} \label{lem:bsetexists}
The output spanner $H$ from the fault-tolerant greedy algorithm has an $f$-capped edge-blocking set $B$ that blocks all cycles $C$ with normalized weight $w^*(C) \le k+1$ and where a heaviest edge in $C$ is not in $Q$.
Additionally, for every pair $(e, e') \in B$, the first edge $e$ is not in $Q$.
\end{lemma}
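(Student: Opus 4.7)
The plan is to harvest the blocking set $B$ directly from the witnesses that Algorithm \ref{alg:ftgreedy} records while it runs, and then verify the three required properties in turn. Concretely, for each edge $e = (u,v) \in E(H) \setminus Q$ added during the main loop, I would fix an arbitrary witness fault set $F_e \subseteq E(H_e)$ with $|F_e| \le f$ and $\dist_{H_e \setminus F_e}(u,v) > k \cdot w(e)$, where $H_e$ denotes the spanner right before $e$ is processed; such $F_e$ exists precisely because the algorithm decided to include $e$. Then I would declare $B := \{(e, e') : e \in E(H) \setminus Q,\ e' \in F_e\}$. Two of the three claims fall out immediately from this construction: each $e$ contributes at most $|F_e| \le f$ pairs (so $B$ is $f$-capped), and the first coordinate of every pair is in $E(H) \setminus Q$ by construction.

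The substantive step is the blocking property. Given a cycle $C$ in $H$ with at least one edge of $E(H) \setminus Q$ and with $w^*(C) \le k+1$, I would let $e^* = (u,v)$ be the edge of $C$ added to $H$ \emph{latest} in time, breaking ties by the algorithm's processing order. Since $Q$ is seeded at time zero while $C$ contains a non-$Q$ edge, $e^*$ must lie in $E(H) \setminus Q$. The merit of this choice is that when $e^*$ is processed, every other edge of $C$ is already present in $H_{e^*}$: the $Q$-edges by virtue of the seed step, and the remaining non-$Q$ edges of $C$ because they were processed earlier. Thus $P := C \setminus \{e^*\}$ is a $u$-$v$ path contained in $H_{e^*}$. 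If $P$ were disjoint from $F_{e^*}$, then $P$ would survive in $H_{e^*} \setminus F_{e^*}$ and give
$$\dist_{H_{e^*} \setminus F_{e^*}}(u,v) \;\le\; w(P) \;=\; w(C) - w(e^*),$$
which, together with the witness inequality $\dist_{H_{e^*} \setminus F_{e^*}}(u,v) > k\cdot w(e^*)$, would force $w(C) > (k+1)\,w(e^*)$. I would then close the loop using $w^*(C) \le k+1$, i.e.\ $w(C) \le (k+1)\cdot \max_{e \in C} w(e)$, to get a contradiction, thereby forcing some $e' \in F_{e^*} \cap P$ and hence a pair $(e^*, e') \in B$ with both edges in $C$.

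The step I expect to be the main obstacle is the very last substep: matching $\max_{e \in C} w(e)$ against $w(e^*)$. For the non-$Q$ edges of $C$ this is automatic from the greedy processing order, but a priori a $Q$-edge of $C$ could carry weight strictly greater than $w(e^*)$, since $Q$ is inserted by the seed step rather than the weighted scan. I expect the resolution to go via the minimality of $Q$ (as a min-weight $2f$-EFT connectivity preserver), using the fact that a cycle $C$ with a heavier $Q$-edge would offer a cheaper alternative path for that edge and thus contradict the structure of $Q$; alternatively, one can reduce to the clean case by replacing $e^*$ with a carefully chosen maximum-weight non-$Q$ subpath and re-running the argument on a shorter cycle. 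This is the one point where the proof goes beyond the vanilla \cite{ENS14}-style greedy argument and requires the paper's specific combinatorial ingredients.
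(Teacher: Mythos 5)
Your construction of $B$ from the recorded witness sets $F_e$, the verification that it is $f$-capped with all first coordinates outside $Q$, and the choice of the last-processed edge $e^*$ of the cycle are exactly the paper's proof of Lemma~\ref{lem:bsetexists}; up to your final paragraph the two arguments are identical. However, the step you single out as the main obstacle is not resolved in the paper either: its proof simply writes $\dist_H(u,v) \le w(C) - w(u,v) \le (k+1)\cdot w(u,v) - w(u,v)$, i.e.\ it uses $w(C) \le (k+1)\, w(u,v)$, which follows from $w^*(C)\le k+1$ only if the last-considered edge is a maximum-weight edge of $C$. That is automatic among the non-$Q$ edges of $C$ (they are scanned in nondecreasing weight order), but the $Q$-edges are seeded at the start regardless of weight and may be heavier, exactly as you observe. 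So the paper contains no additional ingredient at this point; it implicitly assumes the problematic case does not arise.

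Moreover, neither of your proposed repairs can close the gap, because the problematic configurations genuinely occur and the witness-based $B$ genuinely fails to block them. Minimality of $Q$ does not exclude heavy $Q$-edges on low-normalized-weight cycles: take $u,v$ joined by $2f+1$ internally disjoint two-edge paths whose edges each have weight $W$ (every such edge is forced into any $2f$-EFT connectivity preserver, since omitting it and faulting the other edge at its degree-two midpoint would disconnect that midpoint in $Q$ but not in $G$), plus a chord $(u,v)$ of weight $1$. Then the min-weight preserver $Q$ is exactly the set of heavy edges, the greedy algorithm adds $(u,v)$ and may record the empty witness set (already $\dist_Q(u,v)=2W > k\cdot w(u,v)$), and each cycle formed by $(u,v)$ together with one heavy path has normalized weight only $2+1/W \le k+1$ yet contains no pair of $B$. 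In fact, since $(u,v)$ is the only available non-$Q$ first coordinate and the $2f+1$ heavy paths are edge-disjoint, no $f$-capped blocking set whose pairs consist of two distinct edges with the first outside $Q$ can block all of these cycles, so the statement with the hypothesis ``$w^*(C)\le k+1$'' cannot be recovered by refining this argument. What the greedy witnesses actually yield is the blocking of every cycle $C$ (containing a non-$Q$ edge) that satisfies $w(C) \le (k+1)\cdot w(e^*)$, where $e^*$ is the heaviest non-$Q$ edge of $C$. Your instinct to stop and flag this step was the right one: the difficulty lies in the lemma as stated (and in how it is later invoked for cycles through heavy tree edges), not in your execution of the argument.
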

\begin{proof}
In the main part of the construction, whenever we add an edge $(u, v) \in E(G) \setminus Q$ to the spanner $H$, we do so because there exists a fault set $F_{(u, v)}$ that satisfies the conditional (there might be many possible choices of $F_{(u, v)}$, in which case we fix one arbitrarily).
Let
$$B := \left\{ (e, e') \ \mid \ e \in E(H) \setminus Q, e' \in F_e\right\}.$$
Since each $|F_e| \le f$, we have that $B$ is $f$-capped.
To argue its cycle-blocking properties, let $C$ be a cycle as in the lemma statement, with a heaviest edge $(u, v) \notin Q$.
When we consider $(u, v)$ in the main loop of the algorithm, just before $(u, v)$ is added to $H$, there exists a $u \leadsto v$ path through the other edges of the cycle, and so
\begin{align*}
\dist_H(u, v) &\le w(C) - w(u, v) \le (k+1) \cdot w(u, v) - w(u, v) = k \cdot w(u, v).
\end{align*}
However, in order to satisfy the conditional and include $(u, v)$ in $H$, we must have
$\dist_{H \setminus F_{(u, v)}}(u, v) > k \cdot w(u, v)$.
Thus at least one edge $e \in C \setminus \{(u, v)\}$ must be included in $F_{(u, v)}$.
So we have $\{(u, v), e\} \in B$, which blocks $C$, completing the proof.
\end{proof}

A subtlety here is that the output spanner $H$ could still have unblocked cycles $C$ of low weighted girth, in the case where the heaviest edge of $C$ is in $Q$ (and so it was added before the main greedy loop of the algorithm), and the edges in $C \setminus Q$ are much lighter.
In general, min-weight $f$-EFT connectivity preservers are less well-behaved than min-weight spanning trees, and so situations like this are indeed possible even despite $Q$ being min-weight.
We will need to handle these cycles in our analysis in another way (see Lemma \ref{lem:highwg}).

\subsection{Warmup: Slow Suboptimal Competitive Lightness Upper Bounds} \label{sec:warmup}

As a warmup, we will prove:
\begin{theorem} [Warmup] \label{thm:mainbadf}
For all input graphs $G$, the output spanner $H$ from Algorithm \ref{alg:ftgreedy} satisfies
$$\ell_{2f}(H \mid G) \le O\left( f \cdot \lambda(n, k+1) \right).$$
\end{theorem}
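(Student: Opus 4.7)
The plan is to reduce to the weighted-girth bound of~\cite{ENS14} by partitioning $E(H)\setminus Q$ across a carefully chosen Steiner-forest packing of $Q$. Since $Q$ is a $2f$-EFT connectivity preserver, every edge $\{u,v\}\in E(H)\setminus Q$ has endpoints that are $(2f+1)$-edge-connected in $Q$ (otherwise a cut of size $\le 2f$ would separate them in $Q$, and $Q$ would fail to preserve the $(u,v)$-connection after that cut). Doubling every edge of $Q$ then yields an Eulerian multigraph $Q'$ in which every such pair is $(4f+2)$-edge-connected, so the Eulerian Steiner-forest packing theorem of Chekuri and Shepherd~\cite{chekuri2009approximate} delivers $2f+1$ edge-disjoint Steiner forests $T_1',\dots,T_{2f+1}'$ in $Q'$, each one connecting every demand pair. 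Collapsing the doubling produces $2f+1$ Steiner forests $T_1,\dots,T_{2f+1}$ in $Q$ in which every edge of $Q$ appears in at most two forests.

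Next, route each edge of $E(H)\setminus Q$ into a \emph{safe} forest. Writing $F_e := \{e' : (e,e') \in B\}$ for the blocking set $B$ from Lemma~\ref{lem:bsetexists}, we have $|F_e \cap Q| \le f$, and each such edge lies in at most two of the $T_j$'s, so at most $2f$ of the $2f+1$ forests contain some edge of $F_e\cap Q$; pick any remaining index $i(e)$ and set $X_i := \{e : i(e)=i\}$ and $H_i := T_i \cup X_i$. Because the endpoints of every $e \in X_i$ are already connected through $T_i$, the forest $T_i$ is a spanning forest of $H_i$, which gives $w(\mst(H_i)) \le w(T_i)$ and controls the denominator we ultimately care about.

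Finally, subsample each $X_i$ to kill intra-piece blocks: keep each $e \in X_i$ independently with probability $p = 1/(f+1)$ to form $X_i^s$, then discard any surviving $e$ for which some $e' \in F_e \cap X_i$ also survived, producing $X_i^\star$. By construction $X_i^\star$ contains no block pair, and because $T_i$ is a forest every cycle in $T_i \cup X_i^\star$ uses an edge of $X_i^\star \subseteq E(H)\setminus Q$; combined with our safe-forest choice, Lemma~\ref{lem:bsetexists} then forces every such cycle to have normalized weight $>k+1$. Hence $T_i \cup X_i^\star$ has weighted girth $>k+1$, so the definition of $\lambda$ yields $w(X_i^\star) \le w(T_i \cup X_i^\star) \le \lambda(n,k+1)\,w(T_i)$, while $f$-cappedness gives $\mathbb{E}[w(X_i^\star)] \ge p(1-p)^f\,w(X_i) = \Omega(w(X_i)/f)$. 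Together these give $w(X_i) \le O(f\,\lambda(n,k+1))\,w(T_i)$. Summing over $i$ and using the multiplicity-$2$ bound on the packing yields $\sum_i w(X_i) \le O(f\,\lambda(n,k+1)) \cdot 2w(Q)$, so $w(H) = w(Q) + \sum_i w(X_i) \le O(f\,\lambda(n,k+1))\,w(Q)$, proving the theorem. The main technical hurdle is obtaining a packing with multiplicity exactly $2$: off-the-shelf Steiner-forest packing in general graphs loses constants~\cite{Lau05} that would overshoot the $2f+1$-forest budget, and the doubling-to-Eulerian trick is precisely what restores the tight tradeoff.
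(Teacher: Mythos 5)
Your proposal is correct and follows essentially the same route as the paper's proof: double $Q$, invoke the Chekuri--Shepherd Eulerian Steiner-forest packing to get $2f+1$ forests with edge-multiplicity at most $2$, assign each non-$Q$ edge to a forest avoiding its (at most $f$) blocking partners, and then subsample plus delete blocked pairs to obtain weighted girth $>k+1$ and compare against $\lambda(n,k+1)$. The only (harmless, arguably slightly cleaner) deviation is that you subsample just the hosted edges $X_i$ with probability $1/(f+1)$ rather than all of $H\setminus T$ with probability $1/f$; the counting and the final summation over forests are otherwise identical to the paper's analysis.
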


We will later improve the dependence on $f$, and show a variant that runs in polynomial time.  But it is easier to demonstrate our main ideas with this simpler result. 

\paragraph{Extensions of Nash-Williams Tree Decompositions.}

To begin the proof, we recall the classic Nash-Williams tree packing theorem:
\begin{theorem} [Nash-Williams \cite{NashWilliams}] \label{thm:NW}
For all positive integers $f$, a multigraph $G$ contains $f$ edge disjoint spanning trees if and only if for every vertex partition $\mathcal{P}$, there are at least $f(|\mathcal{P}|-1)$ edges between parts.
\end{theorem}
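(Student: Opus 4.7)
The plan is to handle the two directions of the equivalence separately. For the easy direction (necessity), suppose we are given $f$ edge-disjoint spanning trees $T_1, \ldots, T_f$ of $G$ and an arbitrary vertex partition $\mathcal{P}$ with $p = |\mathcal{P}|$ parts. I would contract each part of $\mathcal{P}$ to a single vertex and discard loops. Each $T_i$, being a connected spanning subgraph of $G$, contracts to a connected spanning multigraph on $p$ vertices, which requires at least $p-1$ non-loop edges. These non-loop edges correspond precisely to edges of $T_i$ between parts. Since the $T_i$ are edge-disjoint, we get at least $f(p-1)$ distinct edges of $G$ between parts.

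For the hard direction (sufficiency), I would invoke the matroid union theorem of Edmonds, applied to $f$ copies of the graphic matroid $M(G)$. Recall that $M(G)$ has rank function $r(F) = n - c(F)$, where $c(F)$ is the number of connected components of the spanning subgraph $(V,F)$. By matroid union, $G$ contains $f$ edge-disjoint spanning trees iff the union matroid $\bigvee_{i=1}^f M(G)$ has full rank $f(n-1)$, and its rank formula says this is equivalent to
$$|E \setminus F| + f(n - c(F)) \;\geq\; f(n-1) \qquad \text{for every } F \subseteq E(G),$$
which rearranges to the cleaner condition $|E \setminus F| \geq f(c(F) - 1)$ for every $F \subseteq E$.

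To verify this inequality from the partition hypothesis, I would take an arbitrary $F \subseteq E$ and let $\mathcal{P}_F$ be the partition of $V$ whose parts are the connected components of $(V,F)$, so $|\mathcal{P}_F| = c(F)$. By construction every edge of $F$ lies within a single part of $\mathcal{P}_F$, so every edge of $G$ crossing between parts of $\mathcal{P}_F$ must lie in $E \setminus F$. Applying the partition hypothesis to $\mathcal{P}_F$ then gives $|E \setminus F| \geq f(|\mathcal{P}_F|-1) = f(c(F)-1)$, which is exactly the inequality required.

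The main obstacle in this approach is the matroid union theorem itself, which I would invoke as a black box; it is standard but nontrivial. A self-contained alternative is an exchange argument: start with $f$ edge-disjoint forests maximizing their total number of edges, and if some forest is not spanning, augment along an exchange path in an auxiliary graph that tracks which edges can be swapped between forests without breaking acyclicity; if no augmentation is possible, the reachable set yields a partition violating the hypothesis. This route is more elementary but significantly more intricate to write out carefully, so I would prefer the matroid-theoretic proof.
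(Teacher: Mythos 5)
Your proposal is correct, but there is nothing in the paper to compare it against: Theorem~\ref{thm:NW} is a classical result that the paper simply cites (Nash-Williams; the same statement is due independently to Tutte) and uses as a black box, never proving it. Your argument is the standard modern proof. The necessity direction via contracting the parts of $\mathcal{P}$ is complete and correct (each tree contributes at least $|\mathcal{P}|-1$ crossing edges, and edge-disjointness lets you add these up). The sufficiency direction correctly reduces the theorem to the Edmonds/Nash-Williams matroid union rank formula applied to $f$ copies of the cycle matroid of the multigraph: the rank condition $|E\setminus F| \ge f(c(F)-1)$ for all $F\subseteq E$ follows from the partition hypothesis applied to the partition into connected components of $(V,F)$, since every crossing edge of that partition lies in $E\setminus F$; and an independent set of size $f(n-1)$ in the union matroid splits into $f$ forests each of at most $n-1$ edges, forcing each to be a spanning tree (a small step you use implicitly in the ``iff'' and could spell out). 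The only caveat is that the matroid union theorem you invoke is itself of comparable depth to the tree-packing theorem, so the proof is not self-contained; you acknowledge this and sketch the elementary forest-exchange alternative, which is the honest trade-off. Given that the paper itself treats the theorem as known, your write-up is a perfectly adequate justification.
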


The following is a simple corollary.  The actual packing that we will use is a bit more complicated, but we give this simpler corollary to build intuition. 

\begin{corollary} \label{cor:nwdecomp}
For all positive integers $f$, an $f$-connected graph $G$ contains a collection of $f$ spanning trees with the property that any edge of $G$ is in at most two of the trees.
\end{corollary}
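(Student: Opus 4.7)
The plan is to apply the ``doubling trick'' previewed in Section~\ref{overview}. Start from the $f$-connected graph $G$ and form a multigraph $G'$ by including two parallel copies of every edge of $G$. Since $f$-vertex-connectivity implies $f$-edge-connectivity, every nontrivial cut of $G$ contains at least $f$ edges, so every nontrivial cut of $G'$ contains at least $2f$ edges.

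The key step is to check that $G'$ satisfies the hypothesis of the Nash-Williams tree packing theorem (Theorem~\ref{thm:NW}) with parameter $f$. For any vertex partition $\mathcal{P} = \{V_1, \dots, V_p\}$ with $p \geq 2$, double-counting inter-part edges via the cut bound gives
$$|E_{G'}(\mathcal{P})| \;=\; \frac{1}{2} \sum_{i=1}^{p} |E_{G'}(V_i, V \setminus V_i)| \;\geq\; \frac{1}{2} \cdot p \cdot 2f \;=\; pf \;\geq\; f(p-1).$$
Hence Theorem~\ref{thm:NW} produces $f$ edge-disjoint spanning trees $T_1', \dots, T_f'$ in $G'$.

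Finally, project each $T_i'$ down to $G$ by identifying the two parallel copies of each edge, obtaining spanning trees $T_1, \dots, T_f$ of $G$. Because the $T_i'$ are edge-disjoint in the multigraph and each edge of $G$ has exactly two parallel copies in $G'$, a single edge of $G$ can be used by at most two of the $T_i$, as required.

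There is no substantive obstacle here beyond noticing that the doubling makes the Nash-Williams hypothesis cheap to verify; the entire argument amounts to the cut computation above followed by a direct invocation of Theorem~\ref{thm:NW}. The only mild subtlety worth flagging is that Nash-Williams is applied to a multigraph (which is why the theorem was stated in that generality), and that ``at most two copies per edge'' on the projection step is exactly what edge-disjointness of spanning trees in the doubled graph guarantees.
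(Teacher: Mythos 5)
Your proof is correct and follows essentially the same route as the paper: double every edge, verify the Nash--Williams partition condition for the doubled multigraph (the paper likewise notes each part has at least $2f$ outgoing doubled edges, giving at least $f\,|\mathcal{P}|$ inter-part edges), extract $f$ edge-disjoint spanning trees, and observe that projecting back means each original edge lies in at most two trees. Your write-up just makes the double-counting factor of $\tfrac{1}{2}$ explicit; there is no substantive difference.
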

\begin{proof}
Replace every edge of $G$ with two parallel edges, so that it is $2f$-connected.
Then consider any vertex partition $\mathcal{P}$, and note that (by connectivity) every part must have at least $2f$ outgoing edges.
Thus there are at least $f \cdot |\mathcal{P}|$ edges between parts.  
This is even a bit stronger than the condition needed to apply the Nash-Williams theorem, and so by Theorem~\ref{thm:NW} we can partition $G$ into $f$ edge-disjoint spanning trees.  Since we initially doubled the edges of $G$, this means that every original edge participates in at most two spanning trees.
\end{proof}

If the input graph $G$ is $(2f+1)$-connected (and therefore $Q$ is also $(2f+1)$-connected), then this corollary would give the right technical tool.
However, in order to avoid assuming any connectivity properties of the input graph, we instead need a generalization.
We begin with the following theorem by Chekuri and Shepherd, which gives the appropriate analog in the special case of Eulerian graphs:
\begin{theorem} [\cite{chekuri2009approximate}, c.f.\ Theorem 3.1]\label{lem:pairwisenw}
For all positive integers $f$ and Eulerian graphs $G$, there exist edge-disjoint forests $\{F_1, \dots, F_f\}$ such that every $2f$-connected component of $G$ is connected in all forests.
Moreover, these forests can be constructed in $\text{poly}(n)$ time.
\end{theorem}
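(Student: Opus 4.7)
The plan is to prove the theorem via the classical splitting-off machinery of Mader and Lov\'asz, combined with a Nash-Williams-type degree/partition argument. The underlying principle is that in an Eulerian graph, local degree-evenness makes both splitting-off and edge-disjoint packing well-behaved in a way that fails in general.

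First I would declare a vertex $v$ to be \emph{trivial} if $\{v\}$ is a singleton $2f$-edge-connected component (no other vertex is $2f$-edge-connected to $v$ in $G$), and otherwise \emph{terminal}. The non-trivial $2f$-connected components $C_1,\ldots,C_m$ partition the terminal vertices, and singletons impose no constraint since they are automatically connected in every subgraph. The first main step is to apply Mader's splitting-off theorem for Eulerian graphs: at any trivial vertex $v$ (which has even degree, because $G$ is Eulerian), one may pair up all edges at $v$ and replace each pair $\{uv, vw\}$ by a single edge $uw$, obtaining a smaller Eulerian graph $G'$ on $V\setminus\{v\}$ in which the edge-connectivity $\lambda(x,y)$ is preserved for all $x,y \neq v$. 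In particular the partition into non-trivial $2f$-connected components is unchanged. Iterating, I obtain an Eulerian graph $G^*$ whose vertex set is exactly $\bigcup_j C_j$, and the corresponding partition is preserved.

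The second main step is to construct the $f$ edge-disjoint forests in $G^*$. Any forest $F^*_i \subseteq E(G^*)$ that internally connects each $C_j$ can be lifted back to a forest $F_i \subseteq E(G)$ by reversing each split-off step (each split edge corresponds to a length-two path through the eliminated trivial vertex), and edge-disjointness lifts to edge-disjointness. So it suffices to do the packing in $G^*$, where every vertex is a terminal. Here I would use a Nash-Williams-style argument adapted to the Steiner setting: for any partition $\mathcal{Q}$ of $V(G^*)$ that \emph{refines} $\mathcal{C}=\{C_1,\ldots,C_m\}$, I count the edges of $G^*$ that cross $\mathcal{Q}$ but not $\mathcal{C}$ (i.e., lie inside some $C_j$ but cross its induced sub-partition). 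Because each $C_j$ is $2f$-edge-connected in $G^*$, every induced sub-partition of $C_j$ into $q_j$ parts is crossed by at least $f(q_j-1)$ edges after applying the Nash-Williams condition to the contracted Eulerian graph on $C_j$ (using evenness of degrees to strengthen the bound by the factor of two absorbed into $2f$). Summing over $j$ yields the Nash-Williams-type partition inequality needed for $f$ edge-disjoint spanning subgraphs of the hypergraph $(V(G^*),\mathcal{C})$, which is exactly the collection of forests $\{F^*_1,\ldots,F^*_f\}$ desired.

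Finally, the polynomial-time construction follows because each ingredient is efficient: Mader splitting-off admits known polynomial-time implementations (via Lov\'asz's constructive proof, or via edge-connectivity oracles based on max-flow), and spanning-tree packing in the contracted Eulerian multigraph can be carried out in polynomial time via matroid partitioning. The main obstacle I expect is the second step: one has to ensure that the Nash-Williams degree argument, which is naturally set up for \emph{spanning} trees of a single graph, really extends to the \emph{Steiner forest} setting where different parts of the partition have independent connectivity demands. The Eulerian hypothesis is what makes this go through, since it ensures that splitting-off preserves all relevant cuts and that each $C_j$ individually admits the needed internal degree bound; without evenness one would recover only weaker packing results (such as the $\Omega(f)$-loss Steiner forest packing of Lau mentioned earlier in the paper).
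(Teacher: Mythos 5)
The paper does not prove this statement itself -- it is imported wholesale from Chekuri--Shepherd -- so your proposal has to stand on its own, and it has a genuine gap at exactly the step you flag as the "main obstacle." Your first step (splitting off the trivial vertices via Lov\'asz/Mader Eulerian splitting-off, preserving all pairwise connectivities among the remaining vertices, and lifting forests back along the length-two paths) is fine. The problem is the second step: you claim that because each class $C_j$ is pairwise $2f$-edge-connected in $G^*$, every partition of $C_j$ into $q_j$ parts is crossed by at least $f(q_j-1)$ edges \emph{induced inside $C_j$}, i.e.\ that $G^*[C_j]$ is $f$-partition-connected. This is false, because after removing only the \emph{trivial} vertices, the connectivity between two vertices of $C_j$ can still be routed through vertices of \emph{other non-trivial classes}, which are not touched by splitting-off. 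Concretely, take $f=3$ and the Eulerian multigraph on $\{u,v,x_1,x_2,y_1,y_2\}$ with edge multiplicities: $uv$ twice, $x_1x_2$ six times, $y_1y_2$ six times, and $ux_1$, $vx_2$, $uy_1$, $vy_2$ twice each. All degrees are even; $\lambda(u,v)=\lambda(x_1,x_2)=\lambda(y_1,y_2)=6=2f$, while every cross pair has $\lambda\le 4$ (the cut around $\{x_1,x_2\}$ or $\{y_1,y_2\}$ has size $4$). So the $2f$-classes are $\{u,v\},\{x_1,x_2\},\{y_1,y_2\}$, every vertex is a terminal, and there is nothing to split off, yet $G^*[\{u,v\}]$ has only $2<f$ parallel edges, violating your counting inequality. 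The theorem is still true for this graph (the third forest connects $u$ to $v$ through $x_1,x_2$), which shows that the forests genuinely must be allowed to use inter-class edges; your reduction to independent spanning-tree packings inside each induced class subgraph proves a stronger statement that is simply false, and the Eulerian hypothesis does not rescue it (the counterexample is Eulerian).

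Because of this, the hard case -- all vertices terminal but the graph not globally $2f$-edge-connected -- is not handled by a Nash-Williams count over intra-class edges. This is precisely where Chekuri--Shepherd's argument does real work (an induction that exploits the small, even cuts of size at most $2f-2$ between classes, with contraction and careful recombination of the forests from the two sides), and it is also why general Steiner forest packing without the Eulerian/evenness structure only gives the lossy $\Omega(f)$ bounds of Lau mentioned in the paper. Your splitting-off step and the polynomial-time remarks are fine, but as written the proof does not go through; you would need to replace step two with an argument that permits forests to leave their class and still certifies edge-disjointness across the packings for different classes.
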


We will use the following corollary:
\begin{corollary} \label{cor:pairwisenw}
For all positive integers $f$ and graphs $G$, there exists a collection of subtrees $\tee$ such that (1) every edge in $E(G)$ is in at most two trees in $\tee$, and (2) for every $f$-connected component $C$ of $G$, there are at least $f$ trees in $\tee$ in which $C$ is connected.  
\end{corollary}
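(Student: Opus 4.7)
The plan is a straightforward doubling-and-apply argument, reducing to Theorem~\ref{lem:pairwisenw}. Form the multigraph $G'$ by replacing each edge of $G$ with two parallel copies. Then every vertex degree in $G'$ is even, so $G'$ is Eulerian. Moreover, doubling edges exactly doubles every edge-cut, so if $C \subseteq V$ is an $f$-edge-connected component of $G$, the same vertex set is a $2f$-edge-connected component of $G'$.

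Now apply Theorem~\ref{lem:pairwisenw} to $G'$ with parameter $f$, yielding edge-disjoint forests $F_1, \dots, F_f$ in $G'$ such that every $2f$-connected component of $G'$ is connected in each $F_i$. Let $\tee$ be the collection of all connected components (trees) across $F_1, \dots, F_f$. I would then verify the two required properties:

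\textbf{Property (1): edge multiplicity.} Each edge $e \in E(G)$ corresponds to exactly two parallel copies $e^{(1)}, e^{(2)}$ in $G'$. Since $F_1, \dots, F_f$ are edge-disjoint in $G'$, each copy appears in at most one $F_i$, so $e$ itself appears in at most two of the $F_i$, hence in at most two trees of $\tee$.

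\textbf{Property (2): connectivity of $f$-components.} Let $C$ be an $f$-edge-connected component of $G$. By the doubling observation, $C$ is a $2f$-edge-connected component of $G'$, so by Theorem~\ref{lem:pairwisenw}, $C$ is connected in each $F_i$. Consequently, in each $F_i$ there is exactly one tree that contains all of $C$; this gives $f$ trees in $\tee$ (one per forest) in which $C$ is connected.

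There is no real obstacle here: the only subtle point is the translation of $f$-edge-connectivity in $G$ to $2f$-edge-connectivity in $G'$, which follows immediately because doubling edges doubles every cut. The polynomial-time constructibility also transfers directly, since forming $G'$ takes linear time and Theorem~\ref{lem:pairwisenw} already runs in $\mathrm{poly}(n)$ time.
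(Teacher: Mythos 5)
Your proposal is correct and is essentially the paper's own argument: double every edge so the multigraph is Eulerian, note that edge-connectivity (hence the relevant connected components) doubles, apply Theorem~\ref{lem:pairwisenw} with parameter $f$, and observe that edge-disjointness of the forests in the doubled graph translates to each original edge lying in at most two trees. The paper's proof is just a terser statement of exactly this reduction, so there is nothing to add.
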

\begin{proof}
Identical to the previous corollary.
Note that when we double edges, every node has even degree, so we guarantee that the graph is Eulerian and so can apply Theorem~\ref{lem:pairwisenw}.\footnote{Technically this edge-doubling step makes $G$ a multigraph, which is not explicitly mentioned in the Chekuri-Shepherd theorem \cite{chekuri2009approximate}.  However, their proof extends straightforwardly to multigraphs.}
\end{proof}

The first step in our analysis of Algorithm \ref{alg:ftgreedy} is to apply this corollary to $Q$, with parameter $2f+1$, yielding a collection of subtrees $\tee$.

\paragraph{Construction of $H[T]$ Subgraphs.} Our next step is to construct a sequence of four subgraphs $H[T] \supseteq H'[T] \supseteq H''[T] \supseteq H'''[T]$ associated to each tree $T \in \tee$.
In the following let $B$ be an $f$-capped blocking set for $H$ as in Lemma \ref{lem:bsetexists}.

\begin{itemize}
\item \textbf{(Construction of $H[T]$)}
Consider the edges in $(u, v) \in E(H) \setminus Q$ one at a time.
Notice that there cannot exist $2f$ edge faults in $Q$ that disconnect the nodes $u, v$, since otherwise we would need to include the edge $(u, v)$ in $Q$.
Thus the node pair $(u, v)$ is $(2f+1)$-connected in $Q$.
So the endpoint nodes $u, v$ lie in the same $(2f+1)$-connected component $C$ of $Q$, and by Corollary \ref{cor:pairwisenw} there exist $2f+1$ trees in $\tee$ that all span $C$.

Since $B$ is $f$-capped, there are at most $f$ edges $e'$ with $(e, e') \in B$.
Since each such edge $e'$ can be in at most two trees (by Corollary~\ref{cor:pairwisenw}), it follows that there exists at least one tree $T \in \tee$ for which no such edge $e'$ is in $T$ (if there are multiple such trees, choose one arbitrarily).
We choose such a tree $T$ and say that it \emph{hosts} $e$.
Then, for all $T \in \tee$, let $H[T]$ be the graph that contains $T$ and all edges hosted by $T$.

\item \textbf{(Construction of $H'[T]$)}
For each tree $T$, we construct $H'[T]$ by keeping every edge in $T$ deterministically, and keeping every edge in $H \setminus T$ independently with probability $1/f$.

\item \textbf{(Construction of $H''[T]$)}
To construct $H''[T]$ from $H'[T]$, for every pair $(e, e') \in B$ with $e, e' \in H'[T]$, we delete the first edge $e$, and let $H''[T]$ be the remaining subgraph.

\item \textbf{(Construction of $H'''[T]$)}
To construct $H'''[T]$ from $H''[T]$, we delete all edges in $T \setminus \mst(H''[T])$.
\end{itemize}

The important properties of these subgraphs are:
\begin{lemma} \label{lem:highwg}
Every graph $H'''[T]$ has weighted girth $>k+1$ (deterministically).
\end{lemma}
\begin{proof}
Let $C$ be any cycle in $H[T]$ of normalized weight $\le k+1$.
We will argue that $C$ does not survive in $H'''[T]$.
There are two cases, depending on whether $C$ has a heaviest edge $(u, v) \notin T$, or whether all heaviest edges in $C$ are in $T$.

In the first case, suppose that there is a heaviest edge $(u, v) \in C, (u, v) \notin Q$.
By the properties of the blocking set $B$ from Lemma \ref{lem:bsetexists}, there is $(e, e') \in B$ with $e, e' \in C$ and $e \notin T$.
It is not possible for both edges $e, e'$ to survive in $H''[T]$, since if they are both selected to remain in $H'[T]$ then we will choose to delete $e$ when we construct $H''[T]$.
Thus $C$ does not survive in $H''[T]$.

In the second case, suppose that all heaviest edges in $C$ are also in $Q$.
Suppose that all edges in $C$ survive in $H''[T]$.
Then at least one of the heaviest edges $(u, v) \in C$ will not be in $\mst(H''[T])$ (since otherwise we could exchange it for a lighter edge on $C$).
So we will remove $(u, v)$ when we move from $H''[T]$ to $H'''[T]$.
\end{proof}

\begin{lemma} \label{lem:exphweight}
Every graph $H'''[T]$ has expected weight
$$\mathbb{E}\left[w(H'''[T])\right] \ge \Omega\left( \frac{w(H[T])}{f} \right) - w(T).$$
\end{lemma}
\begin{proof}
Every edge in $T$ is included in $H''[T]$ deterministically.
Every edge $e \in E(H[T]) \setminus T$ survives in $H''[T]$ iff the following two independent events both occur:
\begin{itemize}
\item $e$ is selected to be included in $H'[T]$, which happens with probability $1/f$, and
\item For all pairs $(e, e') \in B$ with $e$ first, the other edge $e'$ is \emph{not} selected to be included in $H'[T]$.
Since $B$ is $f$-capped, there are at most $f$ such edges $e'$ to consider.
Additionally, since $T$ hosts $e$, none of these edges $e'$ are in $T$ itself.
Thus we select them each to be included in $H'[T]$ independently with probability $1/f$.
Since we have $f$ independent events that each occur with probability $1 - 1/f$, the probability that none of them occur is at least $1/4$.
\end{itemize}
Thus each edge $e \in E(H[T])$ survives in $H''[T]$ with probability $\Omega(1/f)$, implying that
$$\mathbb{E}\left[w(H''[T])\right] \ge \Omega\left( \frac{w(H[T])}{f} \right).$$
Finally, when we move from $H''[T]$ to $H'''[T]$, we can only possibly delete edges in $T$ and so we delete at most $w(T)$ weight, implying the lemma.
\end{proof}

\paragraph{Analysis of Lightness.}

As in the probabilistic method, there exists a realization of the subgraphs $H'''[T]$ that satisfies the expected weight inequality in the previous lemma (for all $T$).
Rearranging the bound from Lemma \ref{lem:exphweight}, we have:
\begin{align*}
w(H'''[T]) &\ge \Omega\left( \frac{w(H[T])}{f} \right) - w(T)\\
\implies O(f) \cdot \left(w(H'''[T]) + w(T) \right) &\ge w(H[T]).
\end{align*}
Using this, we observe that for all trees $T \in \tee$, we have 
\begin{align}
\frac{w(H[T])}{w(T)} &\le \frac{O\left(f\right) \cdot \left(w(H'''[T]) + w(T)\right)}{w(T)} \notag \\
&\le O(f) \cdot \frac{w(H'''[T])}{w(\mst(H'''[T]))} + O(f)\notag \\
&= O(f) \cdot \ell(H'''[T]) + O(f)\notag \\ 
&\le O(f) \cdot \lambda(n, k+1) \label{eq:single-bound}
\end{align}
where the last steps follow from the definition of $\lambda$, the fact that $\lambda(n, k+1) \ge 1$, and the previous lemma establishing that $H'''[T]$ has weighted girth $>k+1$.
Thus, to wrap up the proof, we bound:
\begin{align*}
\ell_{2f}(H \mid G) &= \frac{w(H)}{w(Q)} \leq \frac{\sum \limits_{T \in \tee} w(H[T])}{w(Q)} \tag{every $e \in E(H)$ is in at least one $H[T]$} \\ 
&\leq 2 \cdot \frac{\sum \limits_{T \in \tee} w(H[T])}{\sum \limits_{T \in \tee} w(T)} \tag{Corollary~\ref{cor:nwdecomp} and def of $\tee$}\\
&\le  O\left( \frac{\sum \limits_{T \in \tee} f \cdot \lambda(n, k+1) \cdot w(T)}{\sum \limits_{T \in \tee} w(T)} \right) \tag{Eq.~\eqref{eq:single-bound}}\\
&= O(f \cdot \lambda(n, k+1)).
\end{align*}

\subsection{Improved Lightness Bounds via Multiple Host Trees} \label{sec:multiple-host}

A way in which our previous analyses are suboptimal is that they do not acknowledge the possibility that an edge could be hosted by many trees, not just one.
As a simple example to introduce the technique, let us observe that the lightness bound from Algorithm \ref{alg:ftgreedy} improves by a factor of $f$ if we are willing to take a higher lightness competition parameter:
\begin{theorem} \label{thm:high-parameter-lightness}
Suppose we modify Algorithm \ref{alg:ftgreedy} to instead construct $Q$ as a min-weight $(2+\eta)f$-FT connectivity preservers, for some $\eta >0$.
Then the output $f$-EFT spanner $H$ has $(2+\eta)f$-competitive lightness
$$ \ell_{(2+\eta)f}(H \mid G) \le O\left( \eta^{-1} \lambda(n, k+1)\right).$$
\end{theorem}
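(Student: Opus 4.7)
The plan is to modify the warmup analysis from Section~\ref{sec:warmup} only in how we assign edges of $E(H)\setminus Q$ to trees: rather than hosting each such edge on a single tree, we will host each edge on as many trees as possible, and then use the resulting overcount to divide out an extra factor of $\eta f$.

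First I would rerun the setup of Section~\ref{sec:warmup} with the competition parameter bumped up to $(2+\eta)f$. For every $(u,v) \in E(H) \setminus Q$, the fact that $Q$ is a $(2+\eta)f$-EFT connectivity preserver forces $u$ and $v$ to be $((2+\eta)f+1)$-connected in $Q$ (otherwise a fault set of size $(2+\eta)f$ in $Q$ would disconnect them, while $(u,v)$ itself keeps them connected in $G$, contradicting that $Q$ is a preserver). Applying Corollary~\ref{cor:pairwisenw} to $Q$ with parameter $(2+\eta)f+1$ yields a tree collection $\tee$ in which every edge of $Q$ appears at most twice, and in which at least $(2+\eta)f+1$ trees span the relevant connected component of any such pair $(u,v)$.

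Now comes the key accounting change. Let $B$ be the $f$-capped blocking set from Lemma~\ref{lem:bsetexists}, so at most $f$ edges $e'$ form a pair $(e,e') \in B$ with $e$ first. Since each such $e'$ lies in at most two trees of $\tee$, at most $2f$ trees of $\tee$ contain a blocker of $e$, leaving at least $(2+\eta)f+1 - 2f = \eta f + 1$ trees that are ``safe'' for $e$. I would declare \emph{all} of these safe trees to host $e$, and define $H[T]$ to contain $T$ together with every edge hosted by $T$. The proof that $H[T]$ admits the random subgraphs $H'[T] \supseteq H''[T]$ with $H''[T]$ having weighted girth $>k+1$ and $\mathbb{E}[w(H''[T])] \ge \Omega(w(H[T])/f)$ is unchanged from the warmup, because each hosted edge is still only blocked by at most $f$ non-tree edges. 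Consequently the single-tree inequality
\begin{equation*}
w(H[T]) \le O\bigl(f \cdot \lambda(n,k+1)\bigr) \cdot w(T)
\end{equation*}
still holds, and summing over $T \in \tee$ (using that every edge of $Q$ is in at most two trees) gives
\begin{equation*}
\sum_{T \in \tee} w(H[T]) \le O\bigl(f \cdot \lambda(n,k+1)\bigr) \cdot w(Q).
\end{equation*}

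The final step is the lower bound on the same sum that exploits multi-hosting. Because every edge $e \in E(H) \setminus Q$ is hosted by at least $\eta f + 1$ trees,
\begin{equation*}
\sum_{T \in \tee} w(H[T]) \;\ge\; \sum_{e \in E(H) \setminus Q} |\{T \in \tee : T \text{ hosts } e\}| \cdot w(e) \;\ge\; \Omega(\eta f)\cdot w\bigl(E(H)\setminus Q\bigr).
\end{equation*}
Combining the two inequalities yields $w(E(H)\setminus Q) \le O(\eta^{-1}\lambda(n,k+1)) \cdot w(Q)$, and adding back $w(Q)$ gives $w(H) \le O(\eta^{-1}\lambda(n,k+1))\cdot w(Q)$, which is the desired bound on $\ell_{(2+\eta)f}(H\mid G)$. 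The only subtle point is checking that the arithmetic $(2+\eta)f + 1 - 2f \ge \eta f + 1$ genuinely yields $\Omega(\eta f)$ safe trees per edge; that is the one place where the extra slack of $\eta f$ in the competition parameter is spent, and it is the sole driver of the improvement over Theorem~\ref{thm:mainbadf}.
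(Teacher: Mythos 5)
Your proposal is correct and follows essentially the same route as the paper: host each edge of $E(H)\setminus Q$ on \emph{all} of the at least $(2+\eta)f+1-2f=\eta f+1$ trees avoiding its blockers, reuse the warmup's per-tree bound $w(H[T])\le O(f\cdot\lambda(n,k+1))\cdot w(T)$, and divide out the $\Omega(\eta f)$-fold overcount in $\sum_{T\in\tee}w(H[T])$. Your bookkeeping (bounding $w(E(H)\setminus Q)$ and adding back $w(Q)$) is in fact a slightly more careful rendering of the paper's final calculation, which informally charges all of $E(H)$ at multiplicity $\Theta(\eta f)$ even though $Q$-edges appear in only at most two trees.
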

\begin{proof}
Follow the previous analysis, but note that our guarantee is now that we have a set of trees $\tee$ such that (1) every edge is in at most two trees, and (2) for all $(u, v) \in E(G) \setminus Q$, there are $(2+\eta)f+1$ trees in $\tee$ that span the $(2+\eta)f+1$-connected component that contains $u, v$.
Thus, letting $F$ be a fault set that caused us to add $(u, v)$ to the spanner, there are $\Theta(\eta f)$ of these trees that are disjoint from $F$.
We let \textbf{each} of these trees host the edge $(u, v)$.
In particular, this means that $(u, v)$ will appear in $\Theta(\eta f)$ many $H[T]$ subgraphs.
From there the analysis continues as before, but our final calculation is:
\begin{align*}
\ell_{(2+\eta)f}(H \mid G) &= \frac{w(H)}{w(Q)}\\
&\leq O\left(\frac{(\eta f)^{-1} \sum \limits_{T \in \tee} w(H[T])}{w(Q)}\right) \tag{every $e \in E(H)$ is in $\Theta(\eta f)$ many $H[T]$} \\ 
&\leq O\left(\frac{(\eta f)^{-1} \sum \limits_{T \in \tee} w(H[T])}{\sum \limits_{T \in \tee} w(T)}\right) \tag{Corollary~\ref{cor:nwdecomp} and def of $\tee$}\\
&\le  O\left( \frac{(\eta f)^{-1} \sum \limits_{T \in \tee} f \cdot \lambda(n, k+1) \cdot w(T)}{\sum \limits_{T \in \tee} w(T)} \right) \tag{Eq.~\eqref{eq:single-bound}}\\
&= O(\eta^{-1} \cdot \lambda(n, k+1)). \tag*{\qedhere}
\end{align*}
\end{proof}

If we don't wish to harm our competition parameter, we can use a related method to improve the dependence to $f^{1/2}$:
\begin{theorem} \label{thm:warmupgoodf}
The output spanner $H$ from Algorithm \ref{alg:ftgreedy} (with no modifications) satisfies
$$\ell_{2f}(H \mid G) \le O\left( f^{1/2} \cdot \lambda(n, k+1) \right).$$
\end{theorem}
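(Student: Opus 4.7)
The plan is to refine the warmup of Theorem~\ref{thm:mainbadf} by hosting some edges of $E(H) \setminus Q$ in multiple trees of $\tee$ at once, so that the overcounting in $w(H) \le \sum_{T \in \tee} w(H[T])$ is dampened by a factor of roughly $\sqrt{f}$. The general setup is identical to the warmup: run Algorithm~\ref{alg:ftgreedy} to obtain $H$, the seed preserver $Q$, the $f$-capped blocking set $B$ from Lemma~\ref{lem:bsetexists}, and the tree collection $\tee$ from Corollary~\ref{cor:pairwisenw} with $|\tee| = 2f+1$. For each $e \in E(H) \setminus Q$, let $b_e = |B_e| \le f$ and let $\mathrm{free}(e) = \{T \in \tee : T \cap B_e = \emptyset\}$; since each blocker lives in at most two trees of $\tee$, we have $|\mathrm{free}(e)| \ge 2f+1 - 2b_e$.

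Fix a threshold $\tau = \sqrt{f}$ and call an edge \emph{sparse} if $b_e \le \tau$ and \emph{dense} otherwise. Host each sparse edge in $\tau$ free trees (possible since sparse $\Rightarrow |\mathrm{free}(e)| \ge 2f+1-2\tau \ge \tau$), and host each dense edge in $\min(|\mathrm{free}(e)|, \tau)$ free trees. I will analyze the two classes separately by defining $H^{\mathrm{sp}}[T] := T \cup \{\text{sparse edges hosted by } T\}$ and $H^{\mathrm{dn}}[T] := T \cup \{\text{dense edges hosted by } T\}$ for each tree $T$. In the sparse subgraph, subsample non-$T$ edges with probability $1/\tau$; since every sparse edge has at most $\tau$ blockers, each survives with probability $\Omega(1/\tau)$, and (checked via Lemma~\ref{lem:bsetexists}) the blocking pairs relevant to short-normalized-weight cycles entirely within $H^{\mathrm{sp}}[T]$ are themselves among sparse edges, so the resulting $H^{\mathrm{sp},\prime\prime}[T]$ has weighted girth $>k+1$ and Theorem~\ref{thm:ENS} gives $w(H^{\mathrm{sp}}[T]) \le O(\tau)\lambda\, w(T)$. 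Amortizing over the $\tau$ hosting trees then yields $w(E^{\mathrm{sparse}}) \le O(\lambda)\, w(Q)$ --- a negligible contribution. For dense edges, the warmup analysis on $H^{\mathrm{dn}}[T]$ with subsample probability $1/f$ gives $w(H^{\mathrm{dn}}[T]) \le O(f)\lambda\, w(T)$; dense edges with $|\mathrm{free}(e)| \ge \tau$ are amortized over $\tau$ hosting trees and contribute $O(\sqrt{f}\,\lambda)\, w(Q)$.

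The principal obstacle is the residual class of \emph{very dense} edges with $|\mathrm{free}(e)| < \tau$, equivalently $b_e > f - \tau/2$: such edges essentially have only one host tree available, so multi-tree amortization cannot by itself save a $\sqrt{f}$ factor and the warmup's $O(f\lambda)$ bound applies naively. To reach the target $O(\sqrt{f}\,\lambda)$ I plan to use a weighted double-counting on the $f$-cap of $B$ (which gives $\sum_e w(e)\, b_e \le f \cdot w(E(H) \setminus Q)$), combined with the structural fact --- inherited from the greedy algorithm's weight ordering --- that each very dense edge has $\Omega(f)$ lighter blockers inside $E(H)$. This lets the weight of the very-dense class be charged back to the weight of $Q$ and to the already-controlled sparse class. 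Balancing the three losses (subsample rate, hosting multiplicity, and the density of $B$) cleanly at $\tau = \sqrt{f}$ so that they cancel to exactly $\sqrt{f}$ rather than $f$ is the delicate step of the proof.
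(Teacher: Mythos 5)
Your sparse and ``dense with many free trees'' cases are fine and follow the paper's general multi-hosting scheme, but the proof has a genuine gap exactly where you flag it: the residual class of edges with almost no free trees. The fix you sketch does not work. First, the structural claim that such an edge has $\Omega(f)$ \emph{lighter} blockers in $E(H)$ is false: an edge can have nearly all of its $f$ blockers inside $Q$, and $Q$ is seeded into $H$ before the weight-ordered loop, so its edges need not be lighter than the edge they block. Second, even granting lighter blockers, no valid charging scheme follows: the $f$-cap of $B$ only bounds the number of pairs in which a given edge appears \emph{first}, so a single set of at most $f$ edges of $Q$ can serve as the blocking set for arbitrarily many heavy spanner edges (this is precisely the structure of the paper's cloud-based lower bound constructions), and hence the total weight of your ``very dense'' class cannot be charged to $w(Q)$ or to the sparse class by any per-blocker double counting of the form $\sum_e w(e)\,b_e \le f \cdot w(E(H)\setminus Q)$, which is in any case vacuous.

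The missing idea is to classify edges not by the total number of blockers $b_e$ but by the number of blockers \emph{inside $Q$}. If $e$ has at least $f - f^{1/2}$ blockers in $Q$, then it has at most $f^{1/2}$ blockers outside $Q$; and since the only $Q$-edges present in a host subgraph $H[T]$ are the edges of $T$ itself, which was chosen to contain no blocker of $e$, the in-$Q$ blockers are simply absent from $H[T]$. Hence only the $\le f^{1/2}$ non-$Q$ blockers can threaten $e$ there, and one may subsample $H[T]\setminus T$ at rate $1/f^{1/2}$ instead of $1/f$, yielding $w(H[T]) \le O(f^{1/2}\lambda(n,k+1))\,w(T)$ for this class with single hosting. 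Conversely, if $e$ has fewer than $f - f^{1/2}$ blockers in $Q$, then (because only in-$Q$ blockers can occupy trees, each in at most two) at least $2f+1-2(f-f^{1/2}) \ge f^{1/2}+1$ trees are free, and multi-hosting amortizes the warmup's $O(f\lambda)$ per-tree bound down to $O(f^{1/2}\lambda)$. Your threshold on $b_e$ conflates these two regimes: an edge with $b_e$ near $f$ but all blockers outside $Q$ is harmless (all $2f+1$ trees are free, though your bound $|\mathrm{free}(e)| \ge 2f+1-2b_e$ hides this), while an edge with many blockers inside $Q$ is exactly the case your argument cannot close without the observation above.
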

\begin{proof}
Let us say that an edge $e \in E(H) \setminus Q$ is \emph{$Q$-heavy} if there are at least $f - f^{1/2}$ different edges $e' \in Q$ with $(e, e') \in B$.
Otherwise, we say that $e$ is \emph{$Q$-light}.
Let $H_{heavy} \subseteq H$ be the subgraph that includes $Q$ and all edges in $E(H) \setminus Q$ are $Q$-heavy, and similarly let $H_{light} \subseteq H$ be the subgraph that includes $Q$ and all edges in $E(H) \setminus Q$ that are $Q$-light.
Notice that
\begin{align*}
\ell_{2f}(H) &= \frac{w(H \setminus Q) + w(Q)}{w(Q)} \\
&= \frac{w(H_{heavy} \setminus Q) + w(H_{light} \setminus Q) + w(Q)}{w(Q)} \\
&= \frac{w(H_{heavy})}{w(Q)} + \frac{w(H_{light})}{w(Q)} - 1\\
&< \ell_{2f}(H_{heavy}) + \ell_{2f}(H_{light}).
\end{align*}
Thus it suffices to individually bound $\ell_{2f}(H_{heavy}), \ell_{2f}(H_{light})$.

\paragraph{Bounding $\ell_{2f}(H_{heavy})$.}

Follow the same argument as before, but observe that any $Q$-heavy edge $e$ can be blocked with at most $f^{1/2}$ other edges $e'$ in its relevant subgraph $H[T]$.
Thus we may generate $H'[T]$ from $H[T]$ by keeping all edges in $T$ deterministically (as before), but by sampling the edges in $H[T] \setminus T$ with probability only $1/f^{1/2}$ instead of $1/f$.
This leads to a new bound of
$$\mathbb{E}\left[w(H''[T])\right] \ge \Omega\left( \frac{w(H[T])}{f^{1/2}} \right),$$
and following the same analysis from there, this improved factor of $f^{1/2}$ propagates into the final lightness bound.

\paragraph{Bounding $\ell_{2f}(H_{light})$.}

Follow the same argument as before, but observe that for any $Q$-light edge $e$, there are at least 
$$2f+1 - 2(f - f^{1/2}) \ge f^{1/2} +1$$ 
different trees that span the $(2f+1)$-connected component containing the endpoints of $e$, and which do not contain edges $e'$ with $(e, e') \in B$.
We let $e$ be hosted by \textbf{all} such trees.
We then continue the analysis as before, with a slight change only in the final calculation, much like the previous theorem.
The calculation is now:
\begin{align*}
\ell_{2f}(H_{light}) &= \frac{w(H_{light})}{w(Q)} \le  O\left(  \frac{f^{-1/2} \cdot \sum \limits_{T \in \tee} w(H[T])}{\sum \limits_{T \in \tee} w(T)} \right) \tag{each edge in at least $f^{1/2}$ subgraphs $H[T]$}\\
&\le  O\left( \frac{f^{-1/2} \sum \limits_{T \in \tee} f \cdot \lambda(n, k+1) \cdot w(T)}{\sum \limits_{T \in \tee} w(T)} \right) = O\left(f^{1/2} \cdot \lambda(n, k+1)\right). \tag*{\qedhere}
\end{align*}
\end{proof}

These two theorems now essentially directly imply the upper bound parts of Theorems~\ref{thm:introsmallcomp} and \ref{thm:introbigcomp}:

\begin{proof}[Proof of Theorems~\ref{thm:introsmallcomp} and \ref{thm:introbigcomp}, Upper Bounds]
    Theorem~\ref{thm:introsmallcomp} is directly implied by Theorem~\ref{thm:feasible} (which shows that $H$ is indeed an $f$-EFT $k$-spanner) and Theorem~\ref{thm:warmupgoodf} (which gives the claimed lightness bound).  For Theorem~\ref{thm:introbigcomp}, it is easy to see that the change we made in the algorithm (making $Q$ a $(2+\eta)f$-EFT connectivity preserver rather than $2f$) does not affect the proof of Theorem~\ref{thm:feasible} that $H$ is feasible.  Hence Theorem~\ref{thm:introbigcomp} is implied by Theorem~\ref{thm:feasible} and Theorem~\ref{thm:high-parameter-lightness}.
\end{proof}

\ifshort
\else
\subsection{Competitive Lightness Upper Bounds in Polynomial Time} \label{sec:polytime}

\begin{algorithm}[t]
\DontPrintSemicolon

\textbf{Input:} Graph $G$, stretch $k$, fault tolerance $f$\;~\\

Let $Q \gets $ $2$-approximate min-weight $2f$-FT connectivity preserver of $G$ \tcp{polytime by \cite{DKK22}}

Let $\tee \gets$ tree decomposition of $Q$ satisfying Corollary \ref{cor:pairwisenw} w.r.t.\ connectivity parameter $2f+1$\;

Let $E \gets \emptyset$ be the initial set of non-$Q$ spanner edges\;~\\

\tcp{$c$ large enough constant}
\ForEach{edge $(u, v) \in E(G) \setminus Q$ in order of nondecreasing weight}{

    \ForEach{tree $T \in \tee$ that spans the $(2f+1)$-connected component of $Q$ containing $u, v$}{
        sample $c \log n$ subgraphs by including $T$, and including each edge in $E$ with probability $1/f$\;

        let $\widehat{P}^T_{(u, v)}$ be the fraction of sampled subgraphs $H'$ in which $\dist_{H'}(u, v) > k \cdot w(u, v)$\;
    
        \If{$\widehat{P}^T_{(u, v)} \ge 1/8$}{
            add $(u, v)$ to $E$\;
        }
    }
}
\textbf{Return} $H = Q \cup E$\;
\caption{\label{alg:ftpoly} FT spanners in polynomial time}
\end{algorithm}

We now prove Theorem~\ref{thm:alg-polytime}, improving to polynomial runtime at the cost of a worse dependence on $f$ (relative to the previous section).  Recall the theorem:

\algpolytime*

The algorithm that we will analyze to prove this is Algorithm~\ref{alg:ftpoly}.  Both the algorithm and its analysis are adaptations of the ideas of \cite{BDR21}.
To begin the analysis, let us set up some useful definitions.
For an edge $e \in E(G) \setminus Q$, let us write $E_e$ for the subset of non-$Q$ spanner edges that were added strictly before the edge $e$ was considered in the algorithm.
We also write
$$H_e := Q \cup E_e \qquad \text{and} \qquad H_e^T := T \cup E_e$$
for trees $T \in \tee$.
For all $e \in E(G) \setminus Q, T \in \tee$, we let $H'^T_e$ be a random subgraph of $H_e^T$ obtained by including the edges of $T$ (deterministically), and then including each other edge from $H_e^T$ independently with probability $1/f$.
Then let
$$P^T_{e=(u, v)} := \Pr\left[ \dist_{H'^T_e}(u, v) > k \cdot w(u, v) \right]$$
where the probability is over the random definition of $H'^T_e$.
We cannot compute $P^T_{e=(u, v)}$ exactly, but we can view Algorithm \ref{alg:ftpoly} as computing experimental estimates $\widehat{P}^T_{e=(u, v)}$ by repeatedly sampling subgraphs $H'^T_e$.
The following lemma applies standard Chernoff bounds to show that our estimates are probably reasonably good.
This argument directly follows one from \cite{BDR21}, but we recap it here from scratch.

\begin{lemma} [c.f.~\cite{BDR21}, Lemma 3.1] \label{lem:whpguarantee}
With high probability, for all $e \in E(G) \setminus Q$ and all $T \in \mathcal{T}$, we have
$\widehat{P}^T_e \in P^T_e \pm \frac{1}{8}.$
\end{lemma}
\begin{proof}
First, notice that in expectation, we have
$$\mathbb{E}\left[\widehat{P}^T_e\right] = P^T_e.$$
Thus our goal is to bound the probability that $\widehat{P}^T_e$ fluctuates significantly from its expectation.
We may view $\widehat{P}^T_{e=(u, v)}$ as the sum of $c \log n$ random variables of the form
$$\widehat{P}^T_{(u, v), i} := \begin{cases}
0 & \text{if } i^{th} \text{ sampled subgraph has } \dist_{H'}(u, v) \le k \cdot w(u, v)\\
\frac{1}{c \log n} & \text{if } i^{th} \text{ sampled subgraph has } \dist_{H'}(u, v) > k \cdot w(u, v).
\end{cases}$$

The Chernoff bound states that:
\begin{align*}
\Pr\left[ \widehat{P}_e^T < \frac{7}{8} P_e^T \right] &\le e^{-\frac{(1/8)^2}{2} \cdot c \log n}\\
&= \frac{1}{n^{c/128}}.
\end{align*}
Thus we our desired high-probability guarantee, with respect to choice of large enough constant $c$.
This proof establishes the lower bound on $\widehat{P}^T_e$; the upper bound follows from an essentially identical calculation.
\end{proof}

The following lemmas will assume that the event from Lemma \ref{lem:whpguarantee} holds, and hence these lemmas only hold with high probability, rather than deterministically.
We next establish correctness:
\begin{lemma} \label{lem:poly-correct}
With high probability, the output spanner $H$ from Algorithm \ref{alg:ftpoly} is an $f$-EFT $k$-spanner of the input graph.
\end{lemma}
\begin{proof}
Consider an edge $(u, v) \in E(G) \setminus Q$, and note that as in the previous warmup, this implies that $u, v$ are $2f+1$-connected.
It suffices to argue that, if there is a fault set $F, |F| \le f$ for which 
$$\dist_{H_e \setminus F}(u, v) > k \cdot w(u, v),$$
then we choose to add $(u, v)$ to the spanner.
Suppose such a fault set $F$ exists.
Recall that there are $2f+1$ trees that span the $2f+1$-connected component that contains $u, v$.
Since each edge in $F$ belongs to at most two of these trees, there exists a tree $T \in \mathcal{T}$ such that $u, v$ are connected in $T$ and $T \cap F = \emptyset$.
When we generate a subgraph $H'$ associated to this particular tree $T$, note that if none of the edges in $F$ survive in $H'$, then we will have
$$\dist_{H'}(u, v) > k \cdot w(u, v).$$
There are $|F|\le f$ edges, which are each included in $H'$ with probability $1/f$, and so none of the edges in $F$ survive in $H'$ with probability at least $1/4$.
So we have $P_e^T \ge 1/4$.
Assuming that the event from Lemma \ref{lem:whpguarantee} holds (relating $P^T_e$ to $\widehat{P}_e^T$), we thus have $\widehat{P}_e^T \ge 1/8$.
So $(u, v)$ will be added to the spanner $H$ when the tree $T$ is considered, completing the proof.
\end{proof}

We now turn to bounding the number of edges in the final spanner.
For an edge $(u, v) \in E$, we will say it is \emph{hosted} by the first tree $T \in \tee$ that caused the edge to be added to $E$ in the main loop.
We will write $H[T]$ for the subgraph of the final spanner that contains $T$ and all of the edges that it hosts.
Note that we are reusing this notation from the previous warmup; although the definition of $H[T]$ is slightly different here, it plays a directly analogous role.

\begin{lemma} \label{lem:polyhostbound}
With high probability, for all trees $T \in \mathcal{T}$, we have
$$\frac{w(H[T])}{w(T)} \le O\left(f \cdot \lambda(n, k+1)\right).$$
\end{lemma}
\begin{proof}
Analogous to the previous warmup, it will be helpful to define subgraphs $$H'''[T] \subseteq H''[T] \subseteq H'[T] \subseteq H[T]$$
as follows.
First, $H'[T]$ is a random subgraph of $H[T]$, obtained by including $T$ deterministically and then including each edge hosted by $T$ with probability $1/f$.
For an edge $(u, v) \in H'[T]$, let $H'_{(u, v)}[T]$ be the subgraph that contains only $T$ and the edges from $H'[T]$ considered strictly before $(u, v)$ in the algorithm.
Then we define $H''[T]$ as:
$$H''[T] := T \cup \left\{ (u, v) \in H'[T] \ \mid \ \dist_{H'_{(u, v)}[T]}(u, v) > k \cdot w(u, v)\right\}.$$
Finally, we define
$$H'''[T] := H''[T] \setminus \left( T \setminus \mst(H''[T])\right).$$
We can now bound the expected weight of $H'''[T]$ in two different ways:
\begin{itemize}
\item On one hand, the definition of $H'''[T]$ implies that it has weighted girth $>k+1$.
The argument is analogous to Lemma \ref{lem:highwg}.
Consider a cycle $C$ in $H'[T]$ of normalized weight $\le k+1$.
We consider two cases, by whether or not there is a heaviest edge $(u, v) \in C$ that is not also in $T$.

First suppose that there is a heaviest edge $(u, v) \in C, (u, v) \notin T$, and let $(u, v)$ specifically be the last such edge considered by the algorithm.
If all other edges in $C \setminus \{(u, v)\}$ are added to $H''[T]$ then then they form a $u \leadsto v$ path, implying that
$$\dist_{H'_{(u, v)}[T]}(u, v) \le k \cdot w(u, v),$$
which will cause us to not include $(u, v)$ in $H''[T]$.

On the other hand, suppose that all heaviest edges $(u, v) \in C$ are also in $T$.
Then $\mst(H''[T])$ will omit at least one such edge (since otherwise we could exchange it for a lighter edge on $C$), so $C$ will not survive in $H'''[T]$.

Putting these together: since $H'''[T]$ has weighted girth $>k+1$, by definition of $\lambda$, we have
\begin{align*}
\lambda(n, k+1) \ge \ell(H'''[T]) &= \frac{w(H'''[T])}{w(\mst(H'''[T]))}\\
&\ge \frac{w(H'''[T])}{w(T)}.
\end{align*}

\item On the other hand, consider an edge $(u, v)$ hosted by $T$, and let us check the probability that it survives in $H''[T]$.
First it survives in $H'[T]$ with probability $1/f$.
Then, since $H_{(u, v)}[T] \subseteq H_{(u, v)}^T$,\footnote{As a reminder, the difference between these two subgraphs is that $H_{(u, v)}[T]$ only contains the spanner edges considered before $(u, v)$ that are hosted by $T$, while $H_{(u, v)}^T$ contains all spanner edges considered before $(u, v)$.} it survives in $H''[T]$ with probability \emph{at least} $P_{(u, v)}^T$.
Since $(u, v)$ was added to the spanner, we have $\widehat{P}_{(u, v)} \ge 3/8$, and assuming that the event from Lemma \ref{lem:whpguarantee} holds, we thus have $P_{(u, v)}^T \ge 1/4$.
So in total, $(u, v)$ survives in $H''[T]$ with probability $\Theta(1/f)$, and we have
$$\mathbb{E}\left[w(H''[T])\right] \ge \Omega\left(\frac{w(H[T])}{f} \right).$$
We then delete at most $w(T)$ additional edge weight when we move from $H''[T]$ to $H''[T]$, so we have
$$\mathbb{E}\left[w(H'''[T])\right] \ge \Omega\left(\frac{w(H[T])}{f} - w(T)\right).$$
\end{itemize}

Combining the previous inequalities, we have:
\begin{align*}
w(H[T]) &\le O\left(f \cdot \mathbb{E}\left[w(H'''[T])\right]\right)\\
&\le O\left(f \cdot \lambda(n, k+1) \cdot w(T)\right). \tag*{\qedhere}
\end{align*}
\end{proof}

We can now wrap up the proof:
\begin{lemma} \label{lem:poly-light}
With high probability, the output spanner $H$ has $2f$-competitive lightness
$$\ell_{2f}(H \mid G) \le O\left( f \cdot \lambda(n, k+1) \right).$$
\end{lemma}
\begin{proof}
We have:
\begin{align*}
\ell_{2f}(H \mid G) &\le 2 \cdot \frac{w(H)}{w(Q)} \tag{$Q$ is $2$-approx min weight preserver}\\
&\leq 2 \cdot \frac{\sum \limits_{T \in \tee} w(H[T])}{w(Q)} \tag{every $e \in E(H)$ in at least one $H[T]$} \\ 
&\leq 4 \cdot \frac{\sum \limits_{T \in \tee} w(H[T])}{\sum \limits_{T \in \tee} w(T)} \tag{Corollary~\ref{cor:pairwisenw} and def of $\tee$}\\
&\le  O\left( \frac{\sum \limits_{T \in \tee} f \cdot \lambda(n, k+1) \cdot w(T)}{\sum \limits_{T \in \tee} w(T)} \right) \tag{Lemma \ref{lem:polyhostbound}}\\
&= O(f \cdot \lambda(n, k+1)). \tag*{\qedhere}
\end{align*}
\end{proof}

The first part of Theorem~\ref{thm:alg-polytime} is now directly implied by Lemmas~\ref{lem:poly-correct} and \ref{lem:poly-light}.

\subsubsection{Improved Lightness Bounds in Polynomial Time via Multiple Host Trees}

Of the two results from Section \ref{sec:multiple-host}, only one of them extends readily to the polynomial time algorithm.
The one that does \emph{not} seem to extend is Theorem \ref{thm:warmupgoodf}, which improves the $f$-dependence for $2f$-competitive lightness to $f^{1/2}$.
The issue is roughly that the proof samples heavy and light edges with different probabilities, but in our polynomial time algorithm we commit to a sampling probability at runtime, meaning that we cannot easily apply different runtimes to different edge types.
However, Theorem \ref{thm:high-parameter-lightness} extends fairly straightforwardly, in order to prove the corresponding part of Theorem \ref{thm:alg-polytime}.

\begin{algorithm}[t]
\DontPrintSemicolon

\textbf{Input:} Graph $G$, stretch $k$, fault tolerance $f$, parameter $\eta > 0$\;~\\

Let $Q \gets $ $2$-approximate min-weight $(2+\eta)f$-FT connectivity preserver of $G$ \tcp{polytime by \cite{DKK22}}

Let $\tee \gets$ tree decomposition of $Q$ satisfying Corollary \ref{cor:pairwisenw} w.r.t.\ connectivity parameter $(2+\eta)f+1$\;

Let $E \gets \emptyset$ be the initial set of non-$Q$ spanner edges\;~\\

\tcp{$c$ large enough constant}
\ForEach{edge $(u, v) \in E(G) \setminus Q$ in order of nondecreasing weight}{
    votes $\gets 0$\;
    \ForEach{tree $T \in \tee$ that spans the $(2+\eta)f+1$-connected component of $Q$ containing $u, v$}{
        sample $c \log n$ subgraphs by including $T$, and including each edge in $E$ with probability $1/f$\;

        let $\widehat{P}^T_{(u, v)}$ be the fraction of sampled subgraphs $H'$ in which $\dist_{H'}(u, v) > k \cdot w(u, v)$\;
    
        \If{$\widehat{P}^T_{(u, v)} \ge 3/8$}{
            votes $\gets$ votes $+1$\;
        }
    }
    \If{votes $\ge \eta f + 1$}{
        add $(u, v)$ to $E$\;
    }
}
\textbf{Return} $H = Q \cup E$\;
\caption{\label{alg:polymultitree} FT spanners in polynomial time with competition parameter $(2+\eta)f$}
\end{algorithm}

To prove this we will use Algorithm~\ref{alg:polymultitree}.  Most of our analysis of Algorithm \ref{alg:polymultitree} is the same as that for Algorithm \ref{alg:ftpoly}, and we will not repeat it here.
For example, Lemma \ref{lem:whpguarantee} (asserting that $\widehat{P}^T_{(u , v)} \approx P^T_{(u, v)}$ for all $T, (u, v)$) still holds with exactly the same proof.
Lemma \ref{lem:poly-correct} (asserting that the output spanner is correct with high probability) also still holds by essentially the same proof, but we note that for any edge $e$, there will be at least $(2+\eta)f+1 - 2f = \eta f + 1$ trees that are disjoint from the blocked edges $e'$ paired with $e$.
All such trees will vote for $(u, v)$ (i.e. we will increment the ``votes'' variable when these trees are considered), and so $(u, v)$ will indeed be added to the spanner if an appropriate fault set $F$ exists.

The part that changes is the hosting of edges in trees: instead of an edge $(u, v) \in E$ being hosted by the \emph{first} tree $T$ that caused the edge to be added to the spanner, it is instead hosted by \emph{all} $\Omega(\eta f)$ trees that voted for it.
Under this new hosting strategy, the proof of Lemma \ref{lem:polyhostbound} (controlling the weight of each host graph $H[T]$) still holds with no significant changes.
But the final calculation in Lemma \ref{lem:poly-light} now admits an optimization.
We calculate:

\begin{align*}
\ell_{2f}(H \mid G) &\le 2 \cdot \frac{w(H)}{w(Q)} \tag{$Q$ is $2$-approx min weight preserver}\\
&\leq 2 \cdot \frac{(\eta f)^{-1} \sum \limits_{T \in \tee} w(H[T])}{w(Q)} \tag{every $e \in E(H)$ in at least $\Omega(\eta f)$ graphs $H[T]$} \\ 
&\leq 4 \cdot \frac{(\eta f)^{-1} \sum \limits_{T \in \tee} w(H[T])}{\sum \limits_{T \in \tee} w(T)} \tag{Corollary~\ref{cor:pairwisenw} and def of $\tee$}\\
&\le  O\left( \frac{(\eta f)^{-1}\sum \limits_{T \in \tee} f \cdot \lambda(n, k+1) \cdot w(T)}{\sum \limits_{T \in \tee} w(T)} \right) \tag{Lemma \ref{lem:polyhostbound}}\\
&= O\left((\eta^{-1} \cdot \lambda(n, k+1)\right). \tag*{\qedhere}
\end{align*}
\fi

\ifshort \else
\section{Lower Bounds} \label{sec:lowerbound}

In this section we prove our lower bounds.  We begin with our main lower bound, showing that if we want lightness bounds similar to the non-fault tolerant setting (e.g., dependence of $n^{1/k}$ for a $(1+\eps)\cdot (2k-1)$-spanner, or even anything below $n/k$) then we need the competition parameter to be at least $2f$.  We then give lower bounds for $2f$-competitive lightness.

\subsection{Competition Parameters at most $2f-1$} \label{sec:lowersmallcompetitive}
We now want to prove Theorem~\ref{thm:lower-bound-main}, which essentially rules out sublinear $2f-1$-competitive lightness (for small $f, k$): 

\bicriterialower*

Before proving our main lower bound, we start with a warmup in the setting of a single fault to show a lightness bound of $\Omega(n/k)$. Let $G$ be a graph of $2n$ vertices $v_0$, $v_1$, \ldots, $v_{2n-1}$ with unit weight edges from $v_i$ to $v_{i+1} \pmod{2n}$  and additional edges of weight $\frac{2n-2}{k} - \epsilon$ from $v_{2j}$ to $v_{2j+2}$. Clearly $\mst(G)$ is the cycle of unit weight edges (minus one arbitrary edge), and so has weight $2n-1$.  On the other hand, we claim that any $1$-EFT $k$-spanner $H$ must include all of the additional edges.  To see this, suppose that $H$ does not include one of the additional edges, say $\{v_{2j}, v_{2j+2}\}$.  Consider the fault set $F = \{\{v_{2j}, v_{2j+1}\}\}$.  Then $\dist_{G \setminus F}(v_{2j}, v_{2j+2}) = \frac{2n-2}{k} - \epsilon$, via the edge $\{v_{2j}, v_{2j+2}\}$.  But in $H \setminus F$, the only way to get from $v_{2j}$ to $v_{2j+2}$ is to go all the way around the cycle, for a total distance of $2n-2$.  So then the stretch is $(2n-2) / \left(\frac{2n-2}{k} - \epsilon\right) > k$, and thus $H$ is not a $1$-EFT $k$-spanner.  

Thus $H$ has all of the additional edges, and so has total weight at least $n \cdot \left(\frac{2n-2}{k} - \epsilon\right) = \Omega(n^2 / k)$.  Thus $\ell_1(H \mid G) \geq \Omega(n/k)$.

\begin{figure}[h]
    \centering
    \includegraphics[scale = 0.08]{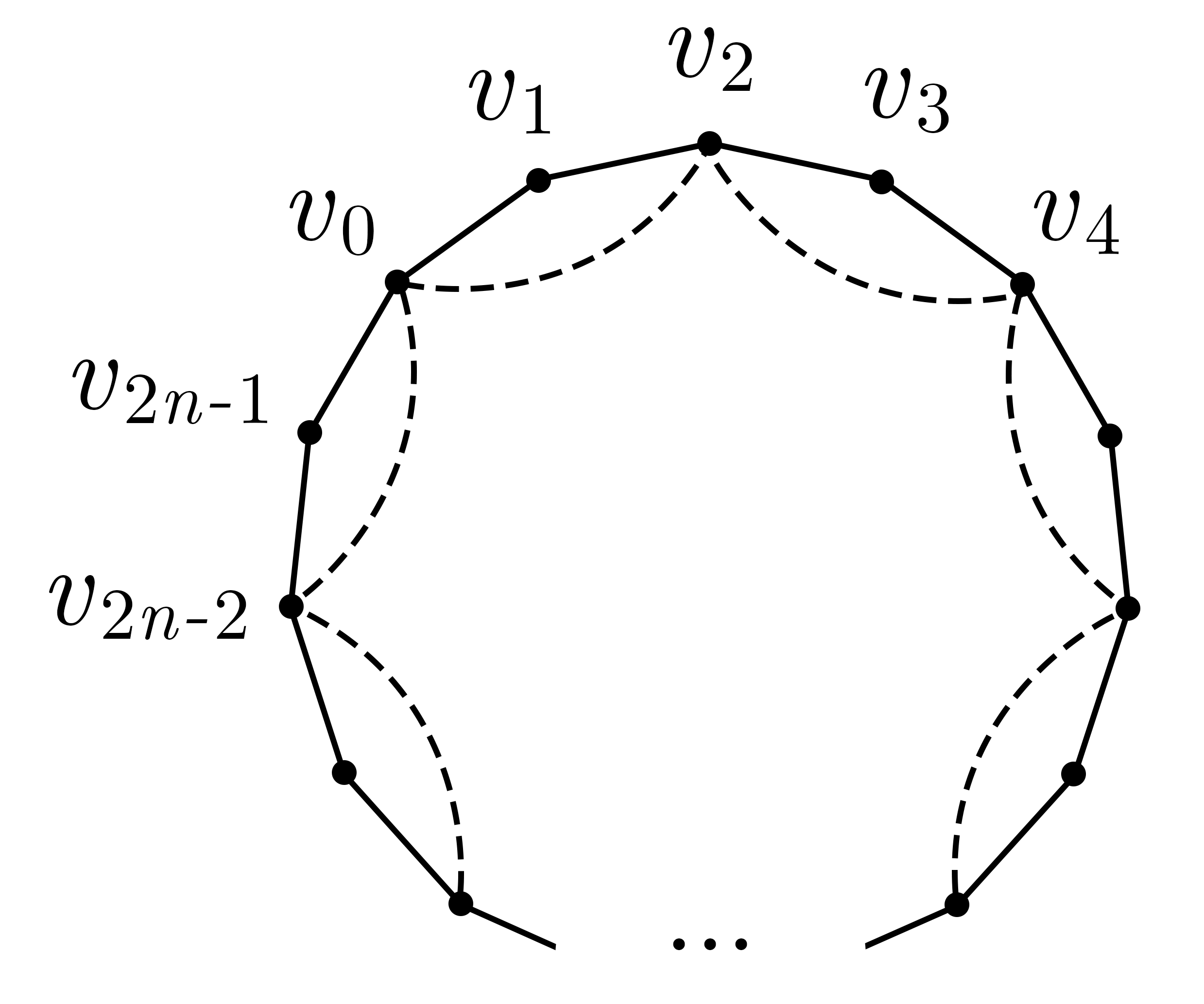}
    \caption{Outer (solid) cycle edges have unit weight, dotted edges have weight $\frac{n-2}{k} - \epsilon$}
    \label{fig:enter-label}
\end{figure}

To extend this to larger faults (and thus prove Theorem~\ref{thm:lower-bound-main}). we modify the above argument by replacing each of the $v_{2j+1}$ nodes (the odd indices) by a ``cloud'' of $f$ nodes.

\begin{proof}[Proof of Theorem~\ref{thm:lower-bound-main}]
    Let $G$ be a graph of $m+mf$ vertices $v_0$, $v_1$, \ldots, $v_{m-1}$ and $v_{i,j}$ for $i \in \{0, 1, \ldots m-1\}$, $j \in [f]$, with edges of unit weight from $v_i$ to all $v_{i,j}$ and from $v_{i+1} \pmod{m}$ to all $v_{i,j}$, and additional edges of weight $\frac{2m-2}{k} - \epsilon$ from $v_i$ to $v_{i+1} \pmod{m}$. Then it is not hard to see that the set of all unit weight edges forms a $(2f-1)$-EFT connectivity preserver, and has weight $2mf$. However, any $f$-EFT $k$-spanner $H$ must include every heavier edge; suppose $H$ does not include some additional edge $\{v_i, v_{i+1}\}$. Consider the fault set $F = \{{v_i, v_{i,j}}: j \in [f]\}$. Then we have $\dist_{G\setminus F} (v_i, v_{i+1}) = \frac{2m-2}{k} - \epsilon$ using the edge $\{v_i, v_{i+1}\}$, but in $H \setminus F$ the minimum possible distance between $v_i$ and $v_{i+1}$ would be $2m-2$ by traversing around the cycle using unit weight edges, which again exceeds the stretch factor $k$. Therefore, $H$ contains all the heavier additional edges, and has weight at least $m\cdot (\frac{2m-2}{k} - \epsilon) = \Omega(m^2/k)$. It follows that $\ell_{2f-1}(H \mid G) \geq \Omega(m/fk) = \Omega(n/f^2k)$. 
\end{proof}

\subsection{$2f$- and $2(1+\eta)f$-Competitive Lightness} \label{sec:lowerbigcompetitive}
We now prove the lower bound parts of Theorem~\ref{thm:introsmallcomp} and Theorem~\ref{thm:introbigcomp}.  We in fact prove a stronger statement which directly implies both of them.

\begin{restatable}{theorem}{lowergraph} \label{thm:lower-bound-lambda-graph}
     For any constant $c \geq 2$, there is an infinite family of $n$-node graphs $G$ for which every $f$-EFT $k$-spanner $H$ has competitive lightness
     $$\ell_{cf}(H \mid G) \geq \Omega\left(\lambda\left(\frac{n}{(cf)^{1/2}}, k+1\right)\right).$$
\end{restatable}
\begin{proof}
    Let $G' = (V', E')$ be a graph on $n'$ nodes with weighted girth greater than $k+1$ and lightness $\Omega(\lambda(n', k+1))$.  Again, the only $k$-spanner of $G'$ is $G'$ itself.

    We create $G = (V, E)$ as follows.  Let $p = \lceil (cf+1)^{1/2} \rceil$.  Let $V = V' \times [p]$, so $n=|V| = n'p$.  For each $v \in V'$, we refer to $C_v = \{(v, i) : i \in [p]\}$ as the ``cloud'' of $v$.  For each $\{u,v\} \in E'$, we add to $E$ the edge set $\{\{(u,i), (v,j)\} : i \in [p], j \in [p]\}$, and give each such edge weight equal to $w(e)$.  In other words, we create a complete bipartite graph between the  cloud of $u$ and the cloud of $v$.

    \begin{figure}[ht]
    \centering
    \includegraphics[scale = 0.22]{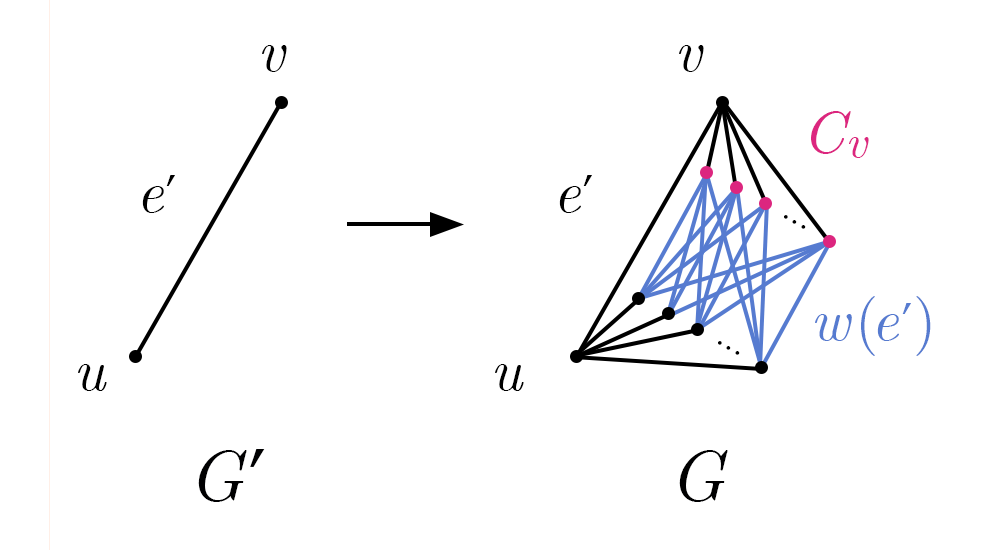}
    \label{fig:edgechange}
    \caption{Transformation of an edge from $G'$ to bipartite graph of clouds in $G$.}
    \end{figure}
    
    One valid $cf$-EFT connectivity preserver of $G$ is to take the MST $T'$ of $G'$, and for every edge $\{u,v\} \in T'$ add the complete bipartite graph between $C_u$ and $C_v$.  This will clearly have weight $p^2 \cdot w(T') \leq 2(cf+1) \cdot w(\mst(G'))$.  
    
    On the other hand, let $H$ be an $f$-EFT $k$-spanner of $G$.  We claim that for any $e' = \{u,v\} \in E'$, $H$ must contain at least $f+1$ edges between $C_u$ and $C_v$.  To see this, suppose for contradiction that it is false.  Let $F$ be the set of edges between $C_u$ and $C_v$ in $H$.  So $|F| \leq f$.  Let $e = \{(u,i), (v,j)\}$ be an edge not in $F$.  Then $\dist_{G \setminus F}((u,i), (v,j)) = w(e')$.  But since in $H \setminus F$ there are no edges between $C_u$ and $C_v$, we get that (by the definition of weighted girth) $\dist_{H \setminus F}((u,i), (v,j))> k \cdot w(e')$.  Hence $H$ is not an $f$-EFT $k$-spanner, contradicting its definition.  

    This means that the weight of $H$ is at least $(f+1) \cdot \sum_{e' \in E'} w(e')$.  Thus we have that
    \begin{align*}
        \ell_{cf}(H \mid G) &\geq \frac{(f+1) \cdot \sum_{e \in E'} w(e)}{2(cf+1) \cdot w(\mst(G'))} \geq \frac{1}{4c} \ell(G') = \frac{1}{4c} \lambda(n', k+1) = \frac{1}{4c} \lambda(n/p, k+1),
    \end{align*}
    as claimed.
\end{proof}

\fi

\section{Future Directions: Light Vertex Fault Tolerant Spanners \label{sec:vft}}
In addition to the obvious open questions of a) determining the precise bound on the achievable $2f$-competitive lightness, and b) providing simultaneous bounds on both the lightness and the sparsity, an interesting open problem left by this paper is to determine the appropriate competition parameter for light \emph{vertex} fault tolerant (VFT) spanners. 
We can define vertex competitive lightness analogously to Definition \ref{def:complight}, but with respect to the set of $f$-VFT connectivity preservers rather than $f$-EFT connectivity preservers.
We then ask:

\begin{question}
What is the smallest function $\mathcal{C}(f)$ such that every $n$-node weighted graph $G$ has an $f$-VFT $k$-spanner with \textbf{vertex}-competitive lightness\footnote{It would also be interesting if, instead of a $\lambda$ reduction, one could match the state-of-the art non-faulty size/stretch tradeoff up to a $\texttt{poly}(f)$ factor.  That is, the goal would be to obtain stretch $(1+\eps)(2k-1)$ and lightness $O\left(\texttt{poly}(f, \eps^{-1}) \cdot n^{1+1/k}\right).$}
$$\ell_{\mathcal{C}(f)}(H \ \mid \ G) \le O\left( \texttt{poly}(f) \cdot \lambda(n, k+1) \right)?$$
\end{question}

Theorem \ref{thm:intromain} proves that the right answer in the edge-fault setting is exactly $\mathcal{C}(f) = 2f$, and it is not hard to see that the lower bound extends to show that the answer is $\mathcal{C}(f) \ge 2f$ for vertex faults.  But it is unclear whether a matching upper bound is possible.
Below, we discuss a few leads on this problem, as well as the various technical difficulties in bringing them to fruition.

\paragraph{Connected Dominating Sets.} 
Censor-Hillel, Ghaffari, and Kuhn \cite{CGK14} proposed that \emph{connected dominating sets} are the natural analog of spanning trees when considering vertex connectivity rather than edge connectivity, and they proved the following analog of the Nash-Williams theorem:
\begin{theorem} [\cite{CGK14}] \label{thm:CDS}
Every $f$-vertex-connected $n$-node graph contains a collection of $\Omega(f / \log n)$ vertex-disjoint CDSes.
\end{theorem}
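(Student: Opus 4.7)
The plan is to construct the collection of $\Omega(f/\log n)$ vertex-disjoint CDSes via a randomized vertex partition. Color each vertex of $G$ independently and uniformly at random with one of $k := cf/\log n$ colors for a sufficiently small constant $c > 0$, and take the resulting color classes $V_1, \ldots, V_k$ as the candidate CDSes. It suffices to argue that with positive (in fact high) probability, every $V_i$ simultaneously (a) dominates $V \setminus V_i$ and (b) induces a connected subgraph of $G$.

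Property (a) is the easy part. Since $G$ is $f$-vertex-connected, $\delta(G) \geq f$, so every vertex $v$ has at least $f$ neighbors. The probability that none of them is colored $i$ is at most $(1 - 1/k)^f \leq \exp(-f/k) = \exp(-\log n / c)$, which can be driven below $n^{-3}$ for $c$ sufficiently small. A union bound over $v \in V$ and $i \in [k]$ then establishes (a) with high probability.

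The main obstacle will be property (b). The naive approach -- fix $u, v \in V_i$, use Menger's theorem to find $f$ internally vertex-disjoint $u$-$v$ paths in $G$, and hope one is entirely colored $i$ -- fails because a path of length $\ell$ survives in a single color class with probability only $(1/k)^{\ell-1}$, which is negligible for long paths. To circumvent this, I would invoke a structural decomposition that localizes the connectivity requirement: root a BFS-style hierarchy at a fixed vertex and argue, layer by layer, that each newly colored vertex in $V_i$ reaches the already-connected ``core'' of $V_i$ through a short internally vertex-disjoint path system supplied by Menger's theorem applied to a suitable contraction. The depth of this hierarchy is where the $\log n$ loss enters.

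An alternative, more algebraic route would be to reduce to edge connectivity via the standard vertex-splitting gadget: replace each $v \in V$ by two copies $v^{\mathrm{in}}, v^{\mathrm{out}}$ joined by a single edge, and reroute each original edge $(u,v)$ as $(u^{\mathrm{out}}, v^{\mathrm{in}})$. Then $f$-vertex-connectivity of $G$ implies $\Omega(f)$-edge-connectivity of the split graph, so Nash-Williams (Theorem~\ref{thm:NW}) yields $\Omega(f)$ edge-disjoint spanning trees there. A grouping step that bundles $O(\log n)$ of these trees at a time converts them into vertex-disjoint CDSes in $G$, producing $\Omega(f/\log n)$ CDSes in total; the logarithmic slack enters precisely through this bundling/rounding, mirroring the union-bound loss in the random-partition approach.
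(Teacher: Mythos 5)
There is a genuine gap, and it sits exactly where you flag the difficulty. This theorem is not proved in the paper at all --- it is quoted from [CGK14], whose proof is a fairly involved multi-stage construction --- so your attempt has to stand on its own, and neither of your routes closes the connectivity step. For the random-partition route: a color class is a uniformly random vertex subset of density $1/k = \Theta(\log n / f)$, and an induced random vertex sample of an $f$-vertex-connected graph is \emph{not} connected at this density. The known threshold (also due to Censor-Hillel, Ghaffari, and Kuhn) for the induced subgraph on a $p$-sample to be connected with high probability is $p = \Theta\bigl(\sqrt{\log n / f}\bigr)$, which is far larger than $\log n/(cf)$ whenever $f \gg \log n$, i.e., precisely in the regime where the theorem is interesting. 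So property (b) genuinely fails for plain color classes, and the ``BFS hierarchy plus Menger applied to a suitable contraction'' sentence is not a patch but the entire content of the theorem: the actual construction in [CGK14] does not take color classes as the CDSes, it builds each class through a random layering of the vertices and then explicitly augments each class with short connector paths routed through higher layers, and the $\log n$ loss comes from the number of layers needed to make those connector arguments go through --- not from a union bound over vertices.

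The algebraic alternative does not work either. After the standard splitting $v \mapsto (v^{\mathrm{in}}, v^{\mathrm{out}})$, edge-disjoint spanning trees of the split graph may reuse the same original vertex freely: one tree can use the edge $(u^{\mathrm{out}}, v^{\mathrm{in}})$ and another $(v^{\mathrm{out}}, u^{\mathrm{in}})$, so distinct trees are in no sense vertex-disjoint in $G$, and bundling $O(\log n)$ of them neither restores vertex-disjointness between bundles nor explains why a bundle, viewed back in $G$, is connected and dominating. The whole difficulty of the vertex-connectivity analogue of Nash--Williams is that this reduction to edge connectivity discards the vertex-disjointness constraint, and the $\log n$ factor in the theorem is known to be necessary (there are $f$-vertex-connected graphs with only $O(f/\log n)$ vertex-disjoint CDSes), so no argument that loses only a ``bundling'' constant on top of an edge-connectivity packing can be right as stated. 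In short: the easy half (domination) is fine, but both proposed paths to connectivity are gaps, and the second is structurally unsound rather than merely incomplete.
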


So a natural approach to getting vertex fault tolerant light spanners is to replace the use of spanning trees and the Nash-Williams Theorem in our EFT result with CDSes and Theorem~\ref{thm:CDS}.  While there are some technical issues, this approach basically works, but with two very large caveats.  First, we only get $\Theta(f \log n)$-competitive lightness, rather than $2f$-competitive lightness.  Second, recall that our EFT approach actually needed an \emph{extension} of the Nash-Williams Theorem due to Chekuri and Shepherd~\cite{chekuri2009approximate} which allowed for the underlying graph to be less than $f$ connected.  If we use Theorem~\ref{thm:CDS} directly, rather than proving a similar extension, we would require that the underlying graph be at least $\Omega(f \log n)$-vertex connected.  Since we do not even want to assume $f$-connectivity, much less $\Omega(f \log n)$-connectivity, this is an important caveat.  However, while it is possible that one could prove an extension of Theorem~\ref{thm:CDS} which does not require high underlying connectivity, even then we would be limited to $\Omega(f \log n)$-competitive lightness.

\paragraph{Independent Spanning Trees.}
To overcome the fact that the CDS approach can only give a fairly large competition parameter of $\Theta(f \log n)$, we must consider other objects that guarantee many paths that do not overlap too much. 
Another object that would suffice are \textit{independent spanning trees}, defined as follows:
\begin{definition}
    Let $r$ be a vertex of graph $G = (V,E)$. A collection of spanning trees $T_1, \dots, T_k$ of $G$ are \textit{independent with root $r$} if for each vertex $v \in V$, the paths from $v$ to $r$ in $T_1, \dots, T_k$ are pairwise vertex disjoint.
\end{definition}

It was conjectured by~\cite{IR88} that every $k$-vertex connected graph contains a collection of $k$ independent spanning trees for every root $r$.  They proved this conjecture for $k=2$, and it has subsequently been proved for $k=3$~\cite{SS19}, $k=4$~\cite{HT18}, and $k=5$~\cite{AL23}, but it is still open for all $k \geq 6$.  While it is not nearly as simple as the CDS case (mostly due to the fact that the choice of root implies that paths actually \emph{do} overlap at the root), it turns out that one can similarly show that these objects can be used inside of our framework.  However, this suffers again from two major drawbacks.  First, this is still a conjecture.  It could very well be false, and even if true, it has been open for over 35 years with very little progress.  Second, it would still require that the underlying graph be $2f$-vertex connected; it would provide no guarantees for the more general case (which we can do in the EFT setting). 

\bibliographystyle{alpha}
\bibliography{refs}

\ifshort
\appendix

\section{Competitive Lightness in Polynomial Time} \label{app:polytime}

\begin{algorithm}[t]
\DontPrintSemicolon

\textbf{Input:} Graph $G$, stretch $k$, fault tolerance $f$\;~\\

Let $Q \gets $ $2$-approximate min-weight $2f$-FT connectivity preserver of $G$ \tcp{polytime by \cite{DKK22}}

Let $\tee \gets$ tree decomposition of $Q$ satisfying Corollary \ref{cor:pairwisenw} w.r.t.\ connectivity parameter $2f+1$\;

Let $E \gets \emptyset$ be the initial set of non-$Q$ spanner edges\;~\\

\tcp{$c$ large enough constant}
\ForEach{edge $(u, v) \in E(G) \setminus Q$ in order of nondecreasing weight}{

    \ForEach{tree $T \in \tee$ that spans the $(2f+1)$-connected component of $Q$ containing $u, v$}{
        sample $c \log n$ subgraphs by including $T$, and including each edge in $E$ with probability $1/f$\;

        let $\widehat{P}^T_{(u, v)}$ be the fraction of sampled subgraphs $H'$ in which $\dist_{H'}(u, v) > k \cdot w(u, v)$\;
    
        \If{$\widehat{P}^T_{(u, v)} \ge 3/8$}{
            add $(u, v)$ to $E$\;
        }
    }
}
\textbf{Return} $H = Q \cup E$\;
\caption{\label{alg:ftpoly} FT spanners in polynomial time}
\end{algorithm}

We now prove Theorem~\ref{thm:alg-polytime}, improving to polynomial runtime at the cost of a worse dependence on $f$ (relative to the previous section).  Recall the theorem:

\algpolytime*

The algorithm that we will analyze to prove this is Algorithm~\ref{alg:ftpoly}.  Both the algorithm and its analysis are adaptations of the ideas of \cite{BDR21}.
To begin the analysis, let us set up some useful definitions.
For an edge $e \in E(G) \setminus Q$, let us write $E_e$ for the subset of non-$Q$ spanner edges that were added strictly before the edge $e$ was considered in the algorithm.
We also write
$$H_e := Q \cup E_e \qquad \text{and} \qquad H_e^T := T \cup E_e$$
for trees $T \in \tee$.
For all $e \in E(G) \setminus Q, T \in \tee$, we let $H'^T_e$ be a random subgraph of $H_e^T$ obtained by including the edges of $T$ (deterministically), and then including each other edge from $H_e^T$ independently with probability $1/f$.
Then let
$$P^T_{e=(u, v)} := \Pr\left[ \dist_{H'^T_e}(u, v) > k \cdot w(u, v) \right]$$
where the probability is over the random definition of $H'^T_e$.
We cannot compute $P^T_{e=(u, v)}$ exactly, but we can view Algorithm \ref{alg:ftpoly} as computing experimental estimates $\widehat{P}^T_{e=(u, v)}$ by repeatedly sampling subgraphs $H'^T_e$.
The following lemma applies standard Chernoff bounds to show that our estimates are probably reasonably good.
This argument directly follows one from \cite{BDR21}, but we recap it here from scratch.

\begin{lemma} [c.f.~\cite{BDR21}, Lemma 3.1] \label{lem:whpguarantee}
With high probability, for all $e \in E(G) \setminus Q$ and all $T \in \mathcal{T}$, we have
$\widehat{P}^T_e \in P^T_e \pm \frac{1}{8}.$
\end{lemma}
\begin{proof}
First, notice that in expectation, we have
$$\mathbb{E}\left[\widehat{P}^T_e\right] = P^T_e.$$
Thus our goal is to bound the probability that $\widehat{P}^T_e$ fluctuates significantly from its expectation.
We may view $\widehat{P}^T_{e=(u, v)}$ as the sum of $c \log n$ random variables of the form
$$\widehat{P}^T_{(u, v), i} := \begin{cases}
0 & \text{if } i^{th} \text{ sampled subgraph has } \dist_{H'}(u, v) \le k \cdot w(u, v)\\
\frac{1}{c \log n} & \text{if } i^{th} \text{ sampled subgraph has } \dist_{H'}(u, v) > k \cdot w(u, v).
\end{cases}$$

The Chernoff bound (see e.g.\ \cite{DP09} for exposition) states that:
\begin{align*}
\Pr\left[ \widehat{P}_e^T < \frac{7}{8} P_e^T \right] &\le e^{-\frac{(1/8)^2}{2} \cdot c \log n}\\
&= \frac{1}{n^{c/128}}.
\end{align*}
Thus we our desired high-probability guarantee, with respect to choice of large enough constant $c$.
This proof establishes the lower bound on $\widehat{P}^T_e$; the upper bound follows from an essentially identical calculation.
\end{proof}

The following lemmas will assume that the event from Lemma \ref{lem:whpguarantee} holds, and hence these lemmas only with high probability, rather than deterministically.
We next establish correctness:
\begin{lemma} \label{lem:poly-correct}
With high probability, the output spanner $H$ from Algorithm \ref{alg:ftpoly} is an $f$-EFT $k$-spanner of the input graph.
\end{lemma}
\begin{proof}
Consider an edge $(u, v) \in E(G) \setminus Q$, and note that as in the previous warmup, this implies that $u, v$ are $2f+1$-connected.
It suffices to argue that, if there is a fault set $F, |F| \le f$ for which 
$$\dist_{H_e \setminus F}(u, v) > k \cdot w(u, v),$$
then we choose to add $(u, v)$ to the spanner.
Suppose such a fault set $F$ exists.
Recall that there are $2f+1$ trees that span the $2f+1$-connected component that contains $u, v$.
Since each edge in $F$ belongs to at most two of these trees, there exists a tree $T \in \mathcal{T}$ such that $u, v$ are connected in $T$ and $T \cap F = \emptyset$.
When we generate a subgraph $H'$ associated to this particular tree $T$, note that if none of the edges in $F$ survive in $H'$, then we will have
$$\dist_{H'}(u, v) > k \cdot w(u, v).$$
There are $|F|\le f$ edges, which are each included in $H'$ with probability $1/f$, and so none of the edges in $F$ survive in $H'$ with probability at least $1/2$.
So we have $P_e^T \ge 1/2$.
Assuming that the event from Lemma \ref{lem:whpguarantee} holds (relating $P^T_e$ to $\widehat{P}_e^T$), we thus have $\widehat{P}_e^T \ge 3/8$.
So $(u, v)$ will be added to the spanner $H$ when the tree $T$ is considered, completing the proof.
\end{proof}

We now turn to bounding the number of edges in the final spanner.
For an edge $(u, v) \in E$, we will say it is \emph{hosted} by the first tree $T \in \tee$ that caused the edge to be added to $E$ in the main loop.
We will write $H[T]$ for the subgraph of the final spanner that contains $T$ and all of the edges that it hosts.
Note that we are reusing this notation from the previous warmup; although the definition of $H[T]$ is slightly different here, it plays a directly analogous role.

\begin{lemma} \label{lem:polyhostbound}
With high probability, for all trees $T \in \mathcal{T}$, we have
$$\frac{w(H[T])}{w(T)} \le O\left(f \cdot \lambda(n, k+1)\right).$$
\end{lemma}
\begin{proof}
Analogous to the previous warmup, it will be helpful to define subgraphs $H''[T] \subseteq H'[T] \subseteq H[T]$ as follows.
First, $H'[T]$ is a random subgraph of $H[T]$, obtained by including $T$ deterministically and then including each edge hosted by $T$ with probability $1/f$.
For an edge $(u, v) \in H'[T]$, let $H'_{(u, v)}[T]$ be the subgraph that contains only $T$ and the edges from $H'[T]$ considered strictly before $(u, v)$ in the algorithm.
Then we define $H''[T]$ as:
$$H''[T] := T \cup \left\{ (u, v) \in H'[T] \ \mid \ \dist_{H'_{(u, v)}[T]}(u, v) > k \cdot w(u, v)\right\}.$$
We can now bound the expected weight of $H''[T]$ in two different ways:
\begin{itemize}
\item On one hand, the definition of $H''[T]$ implies that it has weighted girth $>k+1$.
To see this, consider a cycle $C$ in $H'[T]$ of normalized weight $\le k+1$ and let $(u, v) \in C$ be its last edge considered in the algorithm.
If all edges in $C \setminus \{(u, v)\}$ are added to $H''[T]$, then they form a $u \leadsto v$ path, implying that
$$\dist_{H'_{(u, v)}[T]}(u, v) \le k \cdot w(u, v),$$
which will cause us to not include $(u, v)$ in $H''_{(u, v)}[T]$.
Thus, by definition of $\lambda$, we have
\begin{align*}
\lambda(n, k+1) \ge \ell(H''[T]) &= \frac{w(H''[T])}{w(\mst(H''[T]))}\\
&\ge \frac{w(H''[T])}{w(T)}.
\end{align*}

\item On the other hand, consider an edge $(u, v)$ hosted by $T$, and let us check the probability that it survives in $H''[T]$.
First it survives in $H'[T]$ with probability $1/f$.
Then, since $H_{(u, v)}[T] \subseteq H_{(u, v)}^T$,\footnote{As a reminder, the difference between these two subgraphs is that $H_{(u, v)}[T]$ only contains the spanner edges considered before $(u, v)$ that are hosted by $T$, while $H_{(u, v)}^T$ contains all spanner edges considered before $(u, v)$.} it survives in $H''[T]$ with probability \emph{at least} $P_{(u, v)}^T$.
Since $(u, v)$ was added to the spanner, we have $\widehat{P}_{(u, v)} \ge 3/8$, and assuming that the event from Lemma \ref{lem:whpguarantee} holds, we thus have $P_{(u, v)}^T \ge 1/4$.
So in total, $(u, v)$ survives in $H''[T]$ with probability $\Theta(1/f)$, and we have
$$\mathbb{E}\left[w(H''[T])\right] \ge \Omega\left(\frac{w(H[T])}{f} \right).$$
\end{itemize}

Combining the previous inequalities, we have:
\begin{align*}
w(H[T]) &\le O\left(f \cdot \mathbb{E}\left[w(H''[T])\right]\right)\\
&\le O\left(f \cdot \lambda(n, k+1) \cdot w(T)\right). \tag*{\qedhere}
\end{align*}
\end{proof}

We can now wrap up the proof:
\begin{lemma} \label{lem:poly-light}
With high probability, the output spanner $H$ has $2f$-competitive lightness
$$\ell_{2f}(H \mid G) \le O\left( f \cdot \lambda(n, k+1) \right).$$
\end{lemma}
\begin{proof}
We have:
\begin{align*}
\ell_{2f}(H \mid G) &\le 2 \cdot \frac{w(H)}{w(Q)} \tag{$Q$ is $2$-approx min weight preserver}\\
&\leq 2 \cdot \frac{\sum \limits_{T \in \tee} w(H[T])}{w(Q)} \tag{every $e \in E(H)$ in at least one $H[T]$} \\ 
&\leq 4 \cdot \frac{\sum \limits_{T \in \tee} w(H[T])}{\sum \limits_{T \in \tee} w(T)} \tag{Corollary~\ref{cor:pairwisenw} and def of $\tee$}\\
&\le  O\left( \frac{\sum \limits_{T \in \tee} f \cdot \lambda(n, k+1) \cdot w(T)}{\sum \limits_{T \in \tee} w(T)} \right) \tag{Lemma \ref{lem:polyhostbound}}\\
&= O(f \cdot \lambda(n, k+1)). \tag*{\qedhere}
\end{align*}
\end{proof}

The first part of Theorem~\ref{thm:alg-polytime} is now directly implied by Lemmas~\ref{lem:poly-correct} and \ref{lem:poly-light}.

\subsection{Improved Lightness Bounds in Polynomial Time via Multiple Host Trees}

Of the two results from Section \ref{sec:multiple-host}, only one of them extends readily to the polynomial time algorithm.
The one that does \emph{not} seem to extend is Theorem \ref{thm:warmupgoodf}, which improves the $f$-dependence for $2f$-competitive lightness to $f^{1/2}$.
The issue is roughly that the proof samples heavy and light edges with different probabilities, but in our polynomial time algorithm we commit to a sampling probability at runtime, meaning that we cannot easily apply different runtimes to different edge types.
However, Theorem \ref{thm:high-parameter-lightness} extends fairly straightforwardly, in order to prove the corresponding part of Theorem \ref{thm:alg-polytime}.

\begin{algorithm}[t]
\DontPrintSemicolon

\textbf{Input:} Graph $G$, stretch $k$, fault tolerance $f$, parameter $\eta > 0$\;~\\

Let $Q \gets $ $2$-approximate min-weight $(2+\eta)f$-FT connectivity preserver of $G$ \tcp{polytime by \cite{DKK22}}

Let $\tee \gets$ tree decomposition of $Q$ satisfying Corollary \ref{cor:pairwisenw} w.r.t.\ connectivity parameter $(2+\eta)f+1$\;

Let $E \gets \emptyset$ be the initial set of non-$Q$ spanner edges\;~\\

\tcp{$c$ large enough constant}
\ForEach{edge $(u, v) \in E(G) \setminus Q$ in order of nondecreasing weight}{
    votes $\gets 0$\;
    \ForEach{tree $T \in \tee$ that spans the $(2+\eta)f+1$-connected component of $Q$ containing $u, v$}{
        sample $c \log n$ subgraphs by including $T$, and including each edge in $E$ with probability $1/f$\;

        let $\widehat{P}^T_{(u, v)}$ be the fraction of sampled subgraphs $H'$ in which $\dist_{H'}(u, v) > k \cdot w(u, v)$\;
    
        \If{$\widehat{P}^T_{(u, v)} \ge 3/8$}{
            votes $\gets$ votes $+1$\;
        }
    }
    \If{votes $\ge \eta f + 1$}{
        add $(u, v)$ to $E$\;
    }
}
\textbf{Return} $H = Q \cup E$\;
\caption{\label{alg:polymultitree} FT spanners in polynomial time with competition parameter $(2+\eta)f$}
\end{algorithm}

To prove this we will use Algorithm~\ref{alg:polymultitree}.  Most of our analysis of Algorithm \ref{alg:polymultitree} is the same as that for Algorithm \ref{alg:ftpoly}, and we will not repeat it here.
For example, Lemma \ref{lem:whpguarantee} (asserting that $\widehat{P}^T_{(u , v)} \approx P^T_{(u, v)}$ for all $T, (u, v)$) still holds with exactly the same proof.
Lemma \ref{lem:poly-correct} (asserting that the output spanner is correct with high probability) also still holds by essentially the same proof, but we note that for any edge $e$, there will be at least $(2+\eta)f+1 - 2f = \eta f + 1$ trees that are disjoint from the blocked edges $e'$ paired with $e$.
All such trees will vote for $(u, v)$ (i.e. we will increment the ``votes'' variable when these trees are considered), and so $(u, v)$ will indeed be added to the spanner if an appropriate fault set $F$ exists.

The part that changes is the hosting of edges in trees: instead of an edge $(u, v) \in E$ being hosted by the \emph{first} tree $T$ that caused the edge to be added to the spanner, it is instead hosted by \emph{all} $\Omega(\eta f)$ trees that voted for it.
Under this new hosting strategy, the proof of Lemma \ref{lem:polyhostbound} (controlling the weight of each host graph $H[T]$) still holds with no significant changes.
But the final calculation in Lemma \ref{lem:poly-light} now admits an optimization.
We calculate:

\begin{align*}
\ell_{2f}(H \mid G) &\le 2 \cdot \frac{w(H)}{w(Q)} \tag{$Q$ is $2$-approx min weight preserver}\\
&\leq 2 \cdot \frac{(\eta f)^{-1} \sum \limits_{T \in \tee} w(H[T])}{w(Q)} \tag{every $e \in E(H)$ in at least $\Omega(\eta f)$ graphs $H[T]$} \\ 
&\leq 4 \cdot \frac{(\eta f)^{-1} \sum \limits_{T \in \tee} w(H[T])}{\sum \limits_{T \in \tee} w(T)} \tag{Corollary~\ref{cor:pairwisenw} and def of $\tee$}\\
&\le  O\left( \frac{(\eta f)^{-1}\sum \limits_{T \in \tee} f \cdot \lambda(n, k+1) \cdot w(T)}{\sum \limits_{T \in \tee} w(T)} \right) \tag{Lemma \ref{lem:polyhostbound}}\\
&= O\left((\eta^{-1} \cdot \lambda(n, k+1)\right). \tag*{\qedhere}
\end{align*}

\section{Lower Bounds} \label{app:lowerbound}
In this section we prove our lower bounds.  We begin with our main lower bound, showing that if we want lightness bounds similar to the non-fault tolerant setting (e.g., dependence of $n^{1/k}$ for a $(1+\eps)\cdot (2k-1)$-spanner, or even anything below $n/k$) then we need the competition parameter to be at least $2f$.  We then give lower bounds for $2f$-competitive lightness.

\subsection{Competition Parameters at most $2f-1$} \label{sec:lowersmallcompetitive}
We now want to prove Theorem~\ref{thm:lower-bound-main}, which essentially rules out sublinear $2f-1$-competitive lightness (for small $f, k$): 

\bicriterialower*

Before proving our main lower bound, we start with a warmup in the setting of a single fault to show a lightness bound of $\Omega(n/k)$. Let $G$ be a graph of $2n$ vertices $v_0$, $v_1$, \ldots, $v_{2n-1}$ with unit weight edges from $v_i$ to $v_{i+1} \pmod{2n}$  and additional edges of weight $\frac{2n-2}{k} - \epsilon$ from $v_{2j}$ to $v_{2j+2}$. Clearly $\mst(G)$ is the cycle of unit weight edges (minus one arbitrary edge), and so has weight $2n-1$.  On the other hand, we claim that any $1$-EFT $k$-spanner $H$ must include all of the additional edges.  To see this, suppose that $H$ does not include one of the additional edges, say $\{v_{2j}, v_{2j+2}\}$.  Consider the fault set $F = \{\{v_{2j}, v_{2j+1}\}\}$.  Then $\dist_{G \setminus F}(v_{2j}, v_{2j+2}) = \frac{2n-2}{k} - \epsilon$, via the edge $\{v_{2j}, v_{2j+2}\}$.  But in $H \setminus F$, the only way to get from $v_{2j}$ to $v_{2j+2}$ is to go all the way around the cycle, for a total distance of $2n-2$.  So then the stretch is $(2n-2) / \left(\frac{2n-2}{k} - \epsilon\right) > k$, and thus $H$ is not a $1$-EFT $k$-spanner.  

Thus $H$ has all of the additional edges, and so has total weight at least $n \cdot \left(\frac{2n-2}{k} - \epsilon\right) = \Omega(n^2 / k)$.  Thus $\ell_1(H \mid G) \geq \Omega(n/k)$.

\begin{figure}[h]
    \centering
    \includegraphics[scale = 0.08]{hardsimple.png}
    \caption{Outer (solid) cycle edges have unit weight, dotted edges have weight $\frac{n-2}{k} - \epsilon$}
    \label{fig:enter-label}
\end{figure}

To extend this to larger faults (and thus prove Theorem~\ref{thm:lower-bound-main}). we modify the above argument by replacing each of the $v_{2j+1}$ nodes (the odd indices) by a ``cloud'' of $f$ nodes.

\begin{proof}[Proof of Theorem~\ref{thm:lower-bound-main}]
    Let $G$ be a graph of $m+mf$ vertices $v_0$, $v_1$, \ldots, $v_{m-1}$ and $v_{i,j}$ for $i \in \{0, 1, \ldots m-1\}$, $j \in [f]$, with edges of unit weight from $v_i$ to all $v_{i,j}$ and from $v_{i+1} \pmod{m}$ to all $v_{i,j}$, and additional edges of weight $\frac{2m-2}{k} - \epsilon$ from $v_i$ to $v_{i+1} \pmod{m}$. Then it is not hard to see that the set of all unit weight edges forms a $(2f-1)$-EFT connectivity preserver, and has weight $2mf$. However, any $f$-EFT $k$-spanner $H$ must include every heavier edge; suppose $H$ does not include some additional edge $\{v_i, v_{i+1}\}$. Consider the fault set $F = \{{v_i, v_{i,j}}: j \in [f]\}$. Then we have $\dist_{G\setminus F} (v_i, v_{i+1}) = \frac{2m-2}{k} - \epsilon$ using the edge $\{v_i, v_{i+1}\}$, but in $H \setminus F$ the minimum possible distance between $v_i$ and $v_{i+1}$ would be $2m-2$ by traversing around the cycle using unit weight edges, which again exceeds the stretch factor $k$. Therefore, $H$ contains all the heavier additional edges, and has weight at least $m\cdot (\frac{2m-2}{k} - \epsilon) = \Omega(m^2/k)$. It follows that $\ell_{2f-1}(H \mid G) \geq \Omega(m/fk) = \Omega(n/f^2k)$. 
\end{proof}

\subsection{$2f$- and $2(1+\eta)f$-Competitive Lightness} \label{sec:lowerbigcompetitive}
We now prove the lower bound parts of Theorem~\ref{thm:introsmallcomp} and Theorem~\ref{thm:introbigcomp}.  We in fact prove a stronger statement which directly implies both of them.

\begin{restatable}{theorem}{lowergraph} \label{thm:lower-bound-lambda-graph}
     For any constant $c \geq 2$, there is an infinite family of $n$-node graphs $G$ for which every $f$-EFT $k$-spanner $H$ has competitive lightness
     $$\ell_{cf}(H \mid G) \geq \Omega\left(\lambda\left(\frac{n}{(cf)^{1/2}}, k+1\right)\right).$$
\end{restatable}
\begin{figure}[b]
    \centering
    \includegraphics[scale = 0.22]{lowerlightnesseta.png}
    \label{fig:edgechange}
\end{figure}

\begin{proof}
    Let $G' = (V', E')$ be a graph on $n'$ nodes with weighted girth greater than $k+1$ and lightness $\Omega(\lambda(n', k+1))$.  Again, the only $k$-spanner of $G'$ is $G'$ itself.

    We create $G = (V, E)$ as follows.  Let $p = \lceil (cf+1)^{1/2} \rceil$.  Let $V = V' \times [p]$, so $n=|V| = n'p$.  For each $v \in V'$, we refer to $C_v = \{(v, i) : i \in [p]\}$ as the ``cloud'' of $v$.  For each $e' = \{u,v\} \in E'$, we add to $E$ the edge set $\{\{(u,i), (v,j)\} : i \in [p], j \in [p]\}$, and give each such edge weight equal to $w(e')$.

    In other words, we create a complete bipartite graph between the  cloud of $u$ and the cloud of $v$.
    
    One valid $cf$-EFT connectivity preserver of $G$ is to take the MST $T'$ of $G'$, and for every edge $\{u,v\} \in T'$ add the complete bipartite graph between $C_u$ and $C_v$.  This will clearly have weight $p^2 \cdot w(T') \leq 2(cf+1) \cdot w(\mst(G'))$.  
    
    On the other hand, let $H$ be an $f$-EFT $k$-spanner of $G$.  We claim that for any $e' = \{u,v\} \in E'$, $H$ must contain at least $f+1$ edges between $C_u$ and $C_v$.  To see this, suppose for contradiction that it is false.  Let $F$ be the set of edges between $C_u$ and $C_v$ in $H$.  So $|F| \leq f$.  Let $e = \{(u,i), (v,j)\}$ be an edge not in $F$.  Then $\dist_{G \setminus F}((u,i), (v,j)) = w(e')$.  But since in $H \setminus F$ there are no edges between $C_u$ and $C_v$, we get that (by the definition of weighted girth) $\dist_{H \setminus F}((u,i), (v,j))> k \cdot w(e')$.  Hence $H$ is not an $f$-EFT $k$-spanner, contradicting its definition.  

    This means that the weight of $H$ is at least $(f+1) \cdot \sum_{e' \in E'} w(e')$.  Thus we have that
    \begin{align*}
        \ell_{cf}(H \mid G) &\geq \frac{(f+1) \cdot \sum_{e \in E'} w(e)}{2(cf+1) \cdot w(\mst(G'))} \geq \frac{1}{4c} \ell(G') = \frac{1}{4c} \lambda(n', k+1) = \frac{1}{4c} \lambda(n/p, k+1),
    \end{align*}
    as claimed.
\end{proof}

\fi
\end{document}